\definecolor{DarkBlue}{rgb}{0.1,0.1,0.5}
\definecolor{Red}{rgb}{0.9,0.1,0.1}
\definecolor{Green}{rgb}{0.3,0.7,0.0}
\definecolor{green2}{rgb}{0.1,0.7,0.2}
\definecolor{blue2}{rgb}{0.0,0.6,0.7}
\definecolor{pink}{rgb}{1,0.0,1}
\definecolor{orange}{rgb}{0.9,0.0,0.1}
\newtheorem{theo}{Theorem}
\newtheorem{corollary}{Corollary}
\newtheorem{lemma}{Lemma}
\newtheorem{prop}{Proposition}
\newtheorem{definition}{Definition}
\renewcommand{\d}{\mathrm{d}}
\newcommand{\derpar}[2]{\displaystyle\frac{\partial{#1}}{\partial{#2}}}
\newcommand{\Lag}{\mathcal{L}}
\newcommand{\vf}{\mathfrak{X}}
\newcommand{\df}{\Omega}
\newcommand{\Tan}{\mathrm{T}}
\newcommand{\inn}{{\mathop{i}\nolimits}}
\newcommand{\Lie}{\mathop{\mathrm{L}}\nolimits}
\newcommand{\bal}{\begin{align*}}
\newcommand{\eal}{\end{align*}}
\def\beq{\begin{equation}}
\def\eeq{\end{equation}}
\def\bea{\begin{eqnarray}}
\def\eea{\end{eqnarray}}
\def\beann{\begin{eqnarray*}}
\def\eeann{\end{eqnarray*}}
\def\ben{\begin{enumerate}}
\def\een{\end{enumerate}}
\def\bit{\begin{itemize}}
\def\eit{\end{itemize}}
\def\vf{\mathfrak X}
\def\df{{\mit\Omega}}
\def\Lag{{\cal L}}
\def\d{{\rm d}}
\def\p{{\rm p}}
\def\Tan{{\rm T}}
\def\Lie{\mathop{\rm L}\nolimits}
\def\inn{\mathop{i}\nolimits}
\def\Cinfty{{\rm C}^\infty}
\def\tabaddress#1{{\small\it\begin{tabular}[t]{c}#1
\\[1.2ex]\end{tabular}}}
\title{\sc New multisymplectic approach to the Metric-Affine
(Einstein-Palatini) action for gravity}
\author{
{\sc  Jordi Gaset\thanks{{\bf e}-{\it mail}:
   jordi.gaset@upc.edu} },
   {\sc Narciso Rom\'an-Roy\thanks{{\bf e}-{\it mail}:
   narciso.roman@upc.edu  / ORCID: 0000-0003-3663-9861.}}  \\
   \tabaddress{Department of Mathematics.
   Ed. C-3, Campus Norte UPC\\
   C/ Jordi Girona 1. 08034 Barcelona. Spain.}}
   \date{September 24, 2019}
\begin{document}

\maketitle

\pagestyle{myheadings}
\markright{\rm J. Gaset and N. Rom\'an-Roy,
   \sl Multisymplectic approach to the Einstein-Palatini action.}
\maketitle
\thispagestyle{empty}

\begin{abstract}

We present a covariant multisymplectic formulation for the {\sl Einstein-Palatini} (or {\sl Metric-Affine\/}) model of General Relativity
(without energy-matter sources).
As it is described by a first-order affine Lagrangian (in the derivatives of the fields),
it is singular and, hence, this is a gauge field theory with constraints.
These constraints are obtained after applying a {\sl constraint algorithm}
to the field equations, both in the Lagrangian and the Hamiltonian formalisms.
In order to do this, the covariant field equations must be written in a suitable geometrical way, 
using integrable distributions which are represented by multivector fields of a certain type.
We obtain and explain the geometrical and physical meaning 
of the Lagrangian constraints and we construct the 
multimomentum (covariant) Hamiltonian formalism.
The gauge symmetries of the model are discussed in both formalisms and, from them, 
the equivalence with the Einstein-Hilbert model is established.
\end{abstract}

 \bigskip
\noindent {\bf Key words}:
 \textsl{Classical field theories, jet bundles, multisymplectic forms, Einstein-Palatini action, metric-affine models, constraints, gauge symmetries.}

\bigskip
\vbox{\raggedleft AMS s.\,c.\,(2010): \null 
{\it Primary}: 53D42, 55R10, 70S05, 83C05; {\it Secondary}: 49S05, 53C15, 53C80, 53Z05.}\null

\newpage

\tableofcontents

\newpage

\section{Introduction}
\label{intro}

In recent years, there is an increasing effort in understanding the covariant
description of gravitational theories (General Relativity and other derived from it) 
using different kinds of geometric frameworks such as the 
{\sl multisymplectic} or the {\sl polysymplectic manifolds}. Thus, in \cite{BMS-2017,CVB-2006,CreTe-2016,CreTe-2016b,ESG-1995,art:GR-2017,
GMS-97, GIMMSY-mm,Krupka,KrupkaStepanova,rovelli,Sd-95,To-97}
general aspects of the theory are studied in this way, meanwhile other
papers are devoted to consider several particular problems.
For instance, in \cite{first,art:GR-2016,rosado,rosado2} the reduction and projectability 
of higher-order theories (such as the Hilbert-Einstein model) is analized,
in \cite{vey1} the vielbein models of General Relativity
are studied using the multisymplectic formulation
and in  \cite{Ka-q1,Ka-q2,Ka-2016} interesting contributions to the
problem of the precanonical quantization of gravity are done.

The multisymplectic and polysymplectic techniques have been also applied
to treat different aspects of one of the most classical approaches in General Relativity:
the {\sl Einstein-Palatini} or {\sl Metric-Affine model}
\cite{art:Capriotti,art:Capriotti2,IS-2016,MGCD-2017,art:Munoz-rosado2013}.
In particular, in \cite{art:Capriotti2} an exhaustive study of the multisymplectic 
description of the model has been done, using a unified formalism which joins
both the Lagrangian and Hamiltonian formalisms into a single one.
This unified framework had been previously stated to do a covariant multisymplectic 
formulation of the {\sl Hilbert-Einstein} model in General Relativity \cite{art:GR-2017}.

This paper is another contribution in order to complete the 
multisymplectic description of the Einstein-Palatini theory (without energy-matter sources).
In particular, we are especially interested in the following problem:
as a consequence of the degeneracy of the Lagrangian,
this is a premultisymplectic field theory and
the Lagrangian field equations are incompatible in the jet bundle
where the Lagrangian formalism takes place.
The problem of finding a submanifold where this equations have consistent solutions
(if it exists) is solved by applying a constraint algorithm adapted to this
premultisymplectic scenario (see, for instance, \cite{LMM-96,LMMMR-2005}
for a geometric description of these kinds of algorithms).
Our first aim is to implement a local-coordinate version of these algorithms.
In order to do it, the field equations are stated in a more geometrical way, 
as equations for distributions, using certain kinds of multivector fields, and in the last step studying their integrability.
The second objective is to construct the Hamiltonian formalism of the theory
and, then, apply the corresponding constraint algorithm to solve the 
incompatibility of the Hamiltonian field equations.
The constraints arising in both formalisms play a relevant role in describing 
the main features of the theory and, in the Hamiltonian formalism,
the choice of different kinds of coordinates (which have a clear geometric interpretation) 
allows us to better understanding several geometrical characteristics of the formalism.

The Metric-Affine model, as it is currently understood,
appeared first in the 1925 paper of A. Einstein \cite{Einstein}, 
where the author stated that imposing the vanishing of the trace 
of the torsion of the connection, together with the field equations, 
is enough to recover the Levi-Civita connection associated with the metric. 
Later, several authors, like \cite{pons}, pointed out that 
this property is related to the existence of a particular gauge symmetry. 
Another objective of this work is to make a geometrical analysis 
of this gauge freedom and to recover the Einstein-Hilbert model 
for General Relativity by means of a partial gauge fixing. 
A brief discussion on the classical Lagrangian symmetries of the theory 
and their associated currents is also done.

The paper is organized as follows:
Section \ref{Sec2} is devoted to present a brief review on some previous
geometric structures such as on multivector fields and distributions, as well as 
the suitable jet bundle and its corresponding multimomentum bundles 
needed for developing the Lagrangian and the Hamiltonian formalisms of the theory.
Next we describe geometrically the Einstein-Palatini model without energy-matter sources.
First, in Section \ref{Sec3}, the Lagrangian formalism of this theory is studied in detail and
the Lagrangian constraint algorithm is applied by steps,
obtaining the final constraint submanifold where the Lagrangian field 
equations have consistent solutions. 
The geometric interpretation of the different kinds of constraints 
and the gauge and natural Lagrangian symmetries are also discussed here.
Second, in Section \ref{Sec4} the Hamiltonian formalisms is stated and analysed in an analogous 
way, using two different kinds of coordinates.
Finally, the relation with the Einstein-Hilbert model 
is established in Section \ref{se:rel:he-ma},
and it is used to obtain the final constraint submanifold where the 
multivector fields solutions are integrable, 
both in the Lagrangian and the Hamiltonian formalisms.
At the end of the paper, an appendix is included, 
where we state the basic considerations and definitions on the concepts of (Noether) symmetries 
and gauge symmetries for Lagrangian field theories.

All the manifolds are real, second countable and $\Cinfty$. The maps and the structures are $\Cinfty$.  Sum over repeated indices is understood.

\section{Geometric elements}
\label{Sec2}

\subsection{Multivector fields}
\label{mvf}

(See \cite{art:Echeverria_Munoz_Roman98} for details).

\begin{definition}
Let $\tau\colon{\cal M}\to M$ be a fiber bundle.

An {\rm $m$-multivector field} in ${\cal M}$ is a skew-symmetric contravariant 
tensor of order $m$ in ${\cal M}$. The set of $m$-multivector fields 
in ${\cal M}$ is denoted $\vf^m({\cal M})$.

In general, a multivector field $\mathbf{X}\in\vf^m({\cal M})$ is said to be {\rm locally decomposable} if,
for every $p\in {\cal M}$, there is an open neighbourhood  $U_p\subset{\cal M}$
and $X_1,\ldots ,X_m\in\vf (U_p)$ such that $\mathbf{X}\vert_{U_p}=X_1\wedge\ldots\wedge X_m$.

Locally decomposable $m$-multivector fields $\mathbf{X}\in\vf^m({\cal M})$ are locally associated with $m$-dimensional
distributions $D\subset\Tan{\cal M}$, and multivector fields associated with
the same distribution make an {\rm equivalence class} $\{ {\bf X}\}$ in the set $\vf^m({\cal M})$.
Then,
$\mathbf{X}$ is {\rm integrable} if its associated distribution is integrable. 
\end{definition}

For every $\mathbf{X}\in\mathfrak{X}^m({\cal M})$, 
there exist $X_1,\ldots ,X_r\in\mathfrak{X} (U)$ such that
$$
\mathbf{X}\vert_{U}=\sum_{1\leq i_1<\ldots <i_m\leq r} f^{i_1\ldots i_m}X_{i_1}\wedge\ldots\wedge X_{i_m} \, ,
$$
with $f^{i_1\ldots i_m} \in C^\infty (U)$, $m \leqslant r\leqslant{\rm dim}\,{\cal M}$.
If two multivector fields ${\bf X},{\bf X}'$ belong to the same
equivalence class $\{ {\bf X}\}$ then, for every $U\subset {\cal M}$,
there exists a non-vanishing function $f\in\Cinfty(U)$ such that ${\bf X}'=f{\bf X}$ on $U$.

If $(x^\mu,y^i)$ are fiber coordinates in the bundle 
$\tau\colon{\cal M}\to M$, a $\tau$-transverse and locally decomposable multivector field ${\bf X} \in \mathfrak{X}^m({\cal M})$ is
$$
\mathbf{X} =\bigwedge_{\mu=1}^m\left(\derpar{}{x^\mu}+X_\mu^\alpha\derpar{}{y^i}\right) \ .
$$
A section $\psi(x^\mu) = (x^\mu,\,\psi^{\alpha}(x^\nu))$  of $\tau$
is an integral section of ${\bf X}$ if its component functions
satisfy the following system of partial differential equations
$$
\derpar{\psi^{\alpha}}{x^i}=X_i^\alpha\circ\psi \ .
$$

\begin{definition}
If $\Omega\in\df^k({\cal M})$ and $\mathbf{X}\in\mathfrak{X}^m({\cal M})$,
the {\rm contraction} between ${\bf X}$ and $\Omega$ is defined as
the natural contraction between tensor fields; in particular,
\beann
 \inn({\bf X})\Omega\mid_{U}&:=& \sum_{1\leq \mu_1<\ldots <\mu_m\leq
 r}f^{\mu_1\ldots\mu_m} \inn(X_{\mu_1}\wedge\ldots\wedge X_{\mu_m})\Omega
\\ &=&
 \sum_{1\leq\mu_1<\ldots <\mu_m\leq r}f^{\mu_1\ldots \mu_m} \inn
 (X_{\mu_1})\ldots\inn (X_{\mu_m})\Omega \ ,
\eeann
if $k\geq m$, and equal to zero if $k<m$.
The {\rm Lie derivative} of $\Omega$ with respect to ${\bf X}$ is defined as
$$
\Lie({\bf X})\Omega:=
\d\inn ({\bf X})\Omega-(-1)^m\inn ({\bf X})\d\Omega \ . 
$$\end{definition}

\begin{definition}
 A multivector field $\mathbf{X}\in\mathfrak{X}^m({\cal M})$ is 
 \textsl{$\tau$-transverse} if, for every $\beta\in\Omega^m(M)$ with
$\beta (\tau(p))\not= 0$, at every point
$p\in{\cal M}$, we have that  $(\inn(\mathbf{X})(\tau^*\beta))_p\not= 0$. 
If $\mathbf{X}\in\mathfrak{X}^m({\cal M})$ is
integrable, then it is       
$\tau$-transverse if, and only if, its integral manifolds are local sections of $\tau$.
In this case, if $\psi\colon U\subset M\to{\cal M}$ is a local
section and $\psi (U)$ is the integral manifold 
of $\mathbf{X}$ at $p$, then  $T_p({\rm Im}\,\psi) = \mathcal{D}_p(\mathbf{X})$
and $\psi$ is an {\rm integral section} of ${\bf X}$.
\end{definition}

\begin{definition}
Consider the case that ${\cal M}=J^1\pi$, where $J^1\pi$ is the first-order jet bundle of a bundle $E\to M$.
Then, a multivector field  $\mathbf{X}\in\mathfrak{X}^m(J^1\pi)$
is  {\rm holonomic} if it is integrable and 
its integral sections are holonomic sections of the projection 
$\overline{\pi}^1\colon J^1\pi\to M$
(and hence it is locally decomposable and $\overline{\pi}^1$-transverse).
\end{definition}

\subsection{Geometrical setting for the Einstein-Palatini action (without energy-matter sources)}

We introduce here the Metric-Affine (or Einstein-Palatini) 
action for the Einstein equations of gravity without sources (no matter-energy is present).

The configuration bundle for this system is the bundle $\pi\colon{\rm E}\rightarrow M$, 
where $M$ is a connected orientable 4-dimensional manifold representing space-time, 
whose volume form is denoted $\eta\in\df^4(M)$, and
${\rm E}=\Sigma\times_MC(LM)$, where $\Sigma$ is the manifold of Lorentzian metrics on $M$ 
and $C(LM)$ is the bundle of connections on $M$;
that is, linear connections in $\Tan M$.

Consider a natural system of coordinates $(x^\mu,v^\alpha)$ in the tangent space $\tau\colon\Tan M\rightarrow M$, such that 
$\eta=\d x^0\wedge\ldots\wedge\d x^3\equiv\d^4x$. We use adapted fiber coordinates in ${\rm E}$, 
denoted $(x^\mu,g_{\alpha\beta},\Gamma^\nu_{\lambda\gamma})$,
(with $0\leq\alpha\leq\beta\leq 3$, and $\mu,\nu,\gamma,\lambda=0,1,2,3$). 
The functions $g_{\alpha\beta}$ are the components of the metric associated to the charts in the base $(x^\mu)$,
and $\Gamma^\nu_{\lambda\gamma}$ are the Christoffel symbols of the connection 
(and then the component functions $\Gamma^\nu_\gamma$ of the linear connection are 
$\Gamma^\nu_\gamma=\tau^*(-\Gamma^\nu_{\lambda\gamma}v^\lambda)$ \cite{EMR-2018}).
Since $g$ is symmetric, $g_{\alpha\beta}=g_{\beta\alpha}$ 
and actually there are 10 independent components. 
We do not assume torsionless connections and hence
$\Gamma^\nu_{\lambda\gamma}\neq\Gamma^\nu_{\gamma\lambda}$,
in general. Thus $\dim {\rm E}=78$. 
When we sum over symmetric indices and not over all the components, 
we order the indices as $0\leq\alpha\leq\beta\leq3$.

In order to state the formalism we consider the 
first-order jet bundle $J^1\pi$, which is the manifold of the $1$-jets of local sections 
$\phi \in \Gamma(\pi)$; that is, equivalence classes of local sections of $\pi$.
Points in $J^1\pi$ are denoted by $j^1_x\phi$, where $x \in M$
and $\phi \in \Gamma(\pi)$ is a representative of the equivalence class
(here $\Gamma(\pi)$ denotes the set of sections of $\pi$).
We have the natural projections
$$
\begin{array}{rcl}
\pi^1 \colon J^1\pi & \longrightarrow & {\rm E} \\
j^1_x\phi & \longmapsto & \phi(x)
\end{array}
\quad ; \quad
\begin{array}{rcl}
\overline{\pi}^1 \colon J^1\pi & \longrightarrow & M \\
j^1_x\phi & \longmapsto & x
\end{array} \ .
$$
Induced coordinates in $J^1\pi$ are denoted
$(x^\mu,\,g_{\alpha\beta},\,\Gamma^\nu_{\lambda\gamma},\,g_{\alpha\beta,\mu},
\,\Gamma^\nu_{\lambda\gamma,\mu})$, and $\dim J^1\pi=374$.
Finally, if $\phi \in \Gamma(\pi)$, the \textsl{$1$st prolongation} or {\sl canonical lifting} of $\phi$ to $J^1\pi$
is denoted by $j^1\phi \in \Gamma(\overline{\pi}^1)$.

A special kind of vector fields are the {\sl coordinate total derivatives} 
\cite{pere,book:Saunders89},
which are locally given as
$$
\displaystyle
D_\tau=\derpar{}{x^\tau}
+\sum_{\alpha\leq\beta} \left(g_{\alpha\beta,\tau}\derpar{}{g_{\alpha\beta}}+g_{\alpha\beta,\mu\tau}\derpar{}{g_{\alpha\beta,\mu}}\right)+ 
\Gamma_{\alpha\beta,\tau}^\nu\derpar{}{\Gamma_{\alpha\beta}^\nu}
+
\Gamma_{\alpha\beta,\mu\tau}^\nu\derpar{}{\Gamma_{\alpha\beta,\mu}^\nu} \ .
$$
Observe that, if $f\in\Cinfty(J^1\pi)$, then $D_\tau f\in\Cinfty(J^2\pi)$.

Next, let 
${\cal M}\pi\equiv\Lambda_2^4(\Tan^*{\rm E})$ 
be the bundle of $4$-forms in ${\rm E}$ vanishing 
by the action of two $\pi$-vertical vector fields,
which is usually called the {\sl extended multimomentum bundle} of ${\rm E}$,
and is endowed with the canonical projections
$$
\kappa\colon{\cal M}\pi\to{\rm E} \quad ;\quad
\overline{\kappa}=\pi\circ\kappa\colon{\cal M}\pi\to M \, .
$$
Induced local coordinates in ${\cal M}\pi$ are
$(x^\mu,\,g_{\alpha\beta},\Gamma^\nu_{\lambda\gamma},\,p,\,p^{\alpha\beta,\mu},
\,p^{\lambda\gamma,\mu}_\nu)$, with $0\leq\alpha\leq\beta\leq3$.
This bundle is endowed with the
\textsl{tautological (or Liouville) $4$-form} 
$\Theta\in\df^4({\cal M}\pi)$
and the \textsl{canonical (or Liouville) $5$-form} 
$\Omega= -d\Theta_{1} \in \Omega^5({\cal M}\pi)$
which is a multisymplectic form; that is, it is closed and $1$-nondegenerate.
Their local expressions are
\beann
\Theta&=& 
p\,\d^4x+\sum_{\alpha\leq\beta}\left(\p^{\alpha\beta,\mu}\,\d g_{\alpha\beta}\wedge \d^{3}x_\mu+
p^{\lambda\gamma,\mu}_\nu\,\d\Gamma^\nu_{\lambda\gamma}\wedge \d^{3}x_{\mu}\right)
 \ , \\
\Omega&=&
-\d p\wedge \d^4x-\sum_{\alpha\leq\beta}\left(\d p^{\alpha\beta,\mu}\wedge \d g_{\alpha\beta}\wedge \d^{3}x_\mu+
\d p^{\lambda\gamma,\mu}_\nu
\wedge\d\Gamma^\nu_{\lambda\gamma}\wedge \d^{3}x_{\mu}\right) \ ;
\eeann
where $\displaystyle \d^3x_\mu=\inn\left(\derpar{}{x^\mu}\right)\d^4x$.

\section{The Metric-Affine model: Lagrangian formalism}
\label{Sec3}

\subsection{Poincar\'e-Cartan forms and field equations}

(See, for instance,\cite{AA-80,EMR-96,art:Echeverria_Munoz_Roman98,Gc-73,
art:GR-2016,GS-73,book:Saunders89} for the general setting of the 
Lagrangian formalism of field theories in jet bundles).

The {\sl Einstein-Palatini} (or {\sl Metric-Affine\/}) {\sl  Lagrangian density} is a 
$\overline{\pi}^1$-semibasic 4-form $\mathcal{L}_{\rm EP}\in\Omega^4(J^1\pi)$;
then $\mathcal{L}_{\rm EP}=L_{\rm EP}\,(\overline{\pi}^1)^*\eta$, 
where $L_{\rm EP}\in\Cinfty(J^1\pi)$ is the {\sl Einstein-Palatini Lagrangian function}
which, in the above coordinates, is given by
$$
L_{\rm EP}=\sqrt{|{\rm det}(g)|}\,g^{\alpha\beta}R_{\alpha\beta}\equiv
\varrho g^{\alpha\beta}R_{\alpha\beta}=\varrho\,R \ ,
$$
where $\varrho=\sqrt{|det(g_{\alpha\beta})|}$, 
$R=g^{\alpha\beta}R_{\alpha\beta}$ is the {\sl scalar curvature},
$R_{\alpha\beta}=
\Gamma^{\gamma}_{\beta\alpha,\gamma}-\Gamma^{\gamma}_{\gamma\alpha,\beta}+
\Gamma^{\gamma}_{\beta\alpha}\Gamma^{\sigma}_{\sigma\gamma}-
\Gamma^{\gamma}_{\beta\sigma}\Gamma^{\sigma}_{\gamma\alpha}$
are the components of the {\sl Ricci tensor}, which depend only on the connection, and
$g^{\alpha\beta}$ denotes the inverse matrix of $g$, 
namely: $g^{\alpha\beta}g_{\beta\gamma}=\delta^\alpha_\gamma$.
It is useful to consider the following auxiliary functions:
\bea \label{auxfuns1}
L^{\beta\gamma,\mu}_{\alpha}&:=&\frac{\partial L_{\rm EP}}{\partial \Gamma^{\alpha}_{\beta\gamma,\mu}}=\varrho(\delta_\alpha^\mu g^{\beta\gamma}-\delta_\alpha^\beta g^{\mu\gamma})\,,
\\ \label{auxfuns2}
H&:=&L^{\beta\gamma,\mu}_{\alpha}\Gamma^{\alpha}_{\beta\gamma,\mu}-L_{\rm EP}=\varrho g^{\alpha\beta}\left(\Gamma^{\gamma}_{\beta\sigma}\Gamma^{\sigma}_{\gamma\alpha}-\Gamma^{\gamma}_{\beta\alpha}\Gamma^{\sigma}_{\sigma\gamma}\right)\,.
\eea
The bundle $J^1\pi$ is endowed with a canonical structure
which is called the {\sl vertical endomorphism},
${\cal V}\in\df^1(J^1\pi)\otimes\Gamma (J^1\pi,{\rm V}(\pi^1 ))
\otimes\Gamma (J^1\pi,\bar\pi^{1^*}\Tan M)$
(here ${\rm V}(\pi^1)$ denotes the vertical subbundle
with respect to the projection $\pi^1$,
and $\Gamma(J^1\pi,{\rm V}(\pi^1))$ the set of sections
in the corresponding bundle) \cite{AA-80,EMR-96,Gc-73,GS-73,book:Saunders89}.
Then the {\sl Poincar\'e--Cartan forms}
associated with ${\mathcal{L}_{\rm EP}}$ are def\/ined as
\[
\Theta_{\mathcal{L}_{\rm EP}}:=\inn({\cal V}){\mathcal{L}_{\rm EP}}+{\mathcal{L}_{\rm EP}}\in\df^4(J^1\pi)
\quad ,\quad
\Omega_{\mathcal{L}_{\rm EP}}:= -\d\Theta_{\mathcal{L}_{\rm EP}}\in\df^5(J^1\pi)  \ ,
\]
and the local expression for the last one is
\beq
\label{pc:la}
\Omega_{\mathcal{L}_{\rm EP}}=\d H\wedge\d^4x -
\d L^{\beta\gamma,\mu}_{\alpha}\wedge\d \Gamma^{\alpha}_{\beta\gamma}\wedge \d^3x_{\mu}  
\ .
\eeq
Observe that it is a  $\pi^1$-projectable form.

The variational problem \cite{GPR-2017,art:Roman09} associated to the system  
$(J^1\pi,\Omega_{\mathcal{L}_{\rm EP}})$ consists in finding holonomic sections 
$\psi_\Lag=j^1\phi\in \Gamma(\overline{\pi}^1)$ (with $\phi\in\Gamma(\pi)$) 
which are solutions to the equation
$$
\psi_\Lag^*\inn(X)\Omega_{\mathcal{L}_{\rm EP}} = 0 
\quad , \quad \mbox{for every } X \in \mathfrak{X}(J^1\pi) \ ,
$$
or, what is equivalent, which are integral sections of
a multivector field $\mathbf{X}_\Lag$ contained 
in a class of holonomic multivector fields 
$\{\mathbf{X}_\Lag\}\in\mathfrak{X}^4(J^1\pi)$ such that 
\beq
 \label{fundeqs}
\inn(\mathbf{X}_\Lag)\Omega_{\mathcal{L}_{\rm EP}}=0 \quad, \quad
\forall {\bf X}_\Lag\in\{{\bf X}_\Lag\}\subset\mathfrak{X}^4(J^1\pi) .
\eeq
The $\overline{\pi}^1$-transverse multivector fields $\mathbf{X}\in\mathfrak{X}^4(J^1\pi)$ 
can be characterized by demanding that
$\inn({\bf X})(\overline{\pi}^1)^*\eta\not= 0$. Then,
for a generic locally decomposable and $\overline{\pi}^1$-transverse
multivector field in $J^1\pi$
we have the following local expression $\displaystyle\mathbf{X}=f\bigwedge_{\nu=0}^3X_\nu$, with
\beq
X_\nu=\frac{\partial}{\partial x^\nu}+
\sum_{\rho\leq\sigma}\left(f_{\rho\sigma,\nu}\frac{\partial}{\partial g_{\rho\sigma}}+
f_{\rho\sigma\mu,\nu}\frac{\partial}{\partial g_{\rho\sigma,\mu}}\right)+f^\alpha_{\beta\gamma,\nu}
\frac{\partial}{\partial \Gamma^\alpha_{\beta\gamma}}+
f^\alpha_{\beta\gamma\mu,\nu}\frac{\partial}{\partial \Gamma^\alpha_{\beta\gamma,\mu}}\ ,
\label{locexmvf}
\eeq
where the coefficients are arbitrary functions of $C^\infty(J^1\pi)$. 
If the multivector field is holonomic and we set $f=1$, then necessarily
\beq
\mathbf{X}=
\bigwedge_{\nu=0}^3\left(\frac{\partial}{\partial x^\nu}+
\sum_{\rho\leq\sigma}\left(g_{\rho\sigma,\nu}\frac{\partial}{\partial g_{\rho\sigma}}+
f_{\rho\sigma\mu,\nu}\frac{\partial}{\partial g_{\rho\sigma,\mu}}\right)+
\Gamma^\alpha_{\beta\gamma,\nu}\frac{\partial}{\partial \Gamma^\alpha_{\beta\gamma}}+
f^\alpha_{\beta\gamma\mu,\nu}\frac{\partial}{\partial \Gamma^\alpha_{\beta\gamma,\mu}}
\right) \ .
\label{sopde}
\eeq

Taking \eqref{locexmvf} and \eqref{pc:la}, the equation \eqref{fundeqs}
becomes locally
\bea
\label{eq:fun3}
0&=&\inn(X_\mu)\d H+ f^\alpha_{\beta\gamma,\mu}\inn(X_\nu)\d L_\alpha^{\beta\gamma,\nu}-
f^\alpha_{\beta\gamma,\nu}\inn(X_\mu)\d L_\alpha^{\beta\gamma,\nu} \ ,
\\ \label{eq:fun4}
0&=&\frac{\partial H}{\partial g_{\sigma\rho}}-
f^\alpha_{\beta\gamma,\mu} \frac{\partial L_{\alpha}^{\beta\gamma,\mu}}{\partial g_{\sigma\rho}} \ ,
\\ \label{eq:fun5}
0&=&\frac{\partial H}{\partial \Gamma^\alpha_{\beta\gamma} } +
\sum_{\rho\leq\sigma}\left(f_{\rho\sigma,\mu} \frac{\partial L_{\alpha}^{\beta\gamma,\mu}}{\partial g_{\rho\sigma}}\right)+
f^\tau_{\rho\sigma,\mu} \frac{\partial L^{\beta\gamma,\mu}_\alpha}{\partial \Gamma^\tau_{\rho\sigma}}-
f^\tau_{\rho\sigma,\mu} \frac{\partial L_\tau^{\rho\sigma,\mu}}{\partial \Gamma_{\beta\gamma}^\alpha} 
\nonumber \\ &=&
\frac{\partial H}{\partial \Gamma^\alpha_{\beta\gamma} } +
\sum_{\rho\leq\sigma}f_{\rho\sigma,\mu} \frac{\partial L_{\alpha}^{\beta\gamma,\mu}}{\partial g_{\rho\sigma}} \ ;
\eea
since $\displaystyle\frac{\partial L^{\beta\gamma,\mu}_\alpha}{\partial \Gamma^\tau_{\rho\sigma}}=0$.
Equations \eqref{eq:fun3} arise from the variations of the coordinates $x^\mu$ and
they hold  as a consequence of \eqref{eq:fun4} and \eqref{eq:fun5}.
The equations \eqref{eq:fun4} arise from the variations on the components of the metric, 
and contains the functions $f^\alpha_{\beta\gamma,\mu}$ related to the connection, 
thus we call them {\sl connection equations}. 
Finally, the equations \eqref{eq:fun5} 
arise from the variations on the components of the connection, 
and contain the functions $f_{\sigma\rho,\mu}$, 
thus they are called {\sl metric equations}.

The fact that a multivector field in $J^1\pi$ 
has the local expression \eqref{sopde}
(then being locally decomposable and $\overline{\pi}^1$-transverse)
is just a necessary condition to be holonomic, since it may not be integrable;
but, if it admits integral sections, 
then its integral sections are holonomic.
Locally decomposable and $\overline{\pi}^1$-transverse multivector fields
which have \eqref{sopde} as coordinate expression are said to be {\sl semiholonomic} in $J^1\pi$
(see \cite{art:Echeverria_Munoz_Roman98}
for an intrinsic definition of these kinds of multivector fields).

\subsection{Compatibility and consistency constraints}
\label{ccc}

In general,
$\overline{\pi}^1$-transverse and integrable multivector fields 
${\bf X}\in\vf^4(J^1\pi)$ which are  solutions to \eqref{fundeqs} 
could not exist. In the best of cases they exist only in some submanifold 
of $J^1\pi$ \cite{LMMMR-2005}. 
The aim in this section is to find the constraints that define this submanifold, 
using a local version of the geometric constraint algorithms \cite{LMM-96,LMMMR-2005}.

First, we introduce the following notation: as it is usual,
$$
\ker^4\Omega_{\Lag_{\rm EP}}:=\{{\bf X}\in\vf^4(J^1\pi)\,\vert\,  \inn({\bf X})\Omega_{\Lag_{\rm EP}}=0\}\ .
$$
We denote by $\ker^4_{\overline{\pi}^1}\Omega_{\Lag_{\rm EP}}$
the set of locally decomposable and 
$\overline{\pi}^1$-transverse multivector fields satisfying equations \eqref{fundeqs} 
but not being (semi)holonomic necessarily. Then,
$\ker^4_{SH}\Omega_{\Lag_{\rm EP}}$ and $\ker^4_H\Omega_{\Lag_{\rm EP}}$
denote the sets of semi-holonomic and the 
holonomic multivector fields which are solutions to the equations \eqref{fundeqs}, respectively.
Obviously we have
\beq
\ker^4_H\Omega_{\Lag_{\rm EP}}\subset
\ker^4_{SH}\Omega_{\Lag_{\rm EP}}\subset\ker^4
_{\overline{\pi}^1}\Omega_{\Lag_{\rm EP}}
 \subset\ker^4\Omega_{\Lag_{\rm EP}}\ .
\label{chain}
\eeq

We make the study in several steps, following the next procedure: 
first we consider the problem of finding locally decomposable and $\overline{\pi}^1$-transverse multivector fields
which are solution to  \eqref{fundeqs} (that is, the elements of $\ker^4_{\overline{\pi}^1}\Omega_{\Lag_{\rm EP}}$),
then we look for the semi-holonomic multivector fields belonging to $\ker^4_{SH}\Omega_{\Lag_{\rm EP}}$ 
and finally, in the next Section, 
we analyze their integrability (finding the elements of $\ker^4_H\Omega_{\Lag_{\rm EP}}$).

\subsubsection{Non-semiholonomic multivector fields (elements of 
$\ker^4_{\overline{\pi}^1}\Omega_{\Lag_{\rm EP}}$):
compatibility constraints}
\label{se:lanh}

The set $\ker^4_{\overline{\pi}^1}\Omega_{\Lag_{\rm EP}}$ consists of multivector fields of the form 
\eqref{locexmvf} whose coefficients satisfy the connection and metric 
equations (\ref{eq:fun4}) and (\ref{eq:fun5}) respectivelly. 
But the equations (\ref{eq:fun5}) are not compatible. In fact:

\begin{prop}
The necessary condition
for the existence of solutions to the metric equations (\ref{eq:fun5}) is that the following equalities hold:
\beq
A_{\alpha\beta\gamma}\equiv g_{\beta\nu}T^\nu_{\alpha\gamma}-g_{\alpha\nu}T^\nu_{\beta\gamma}+\tfrac{1}{3}g_{\beta\gamma}T^\nu_{\nu \alpha}-
\tfrac{1}{3}g_{\alpha\gamma}T^\nu_{\nu\beta}=0\ ,
\label{torscond}
\eeq
where $T^\alpha_{\beta\gamma}$ are the components of the torsion tensor
which are defined as usual,
$T^\alpha_{\beta\gamma}=\Gamma^\alpha_{\beta\gamma}-\Gamma^\alpha_{\gamma\beta}$.
\end{prop}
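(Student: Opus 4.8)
The plan is to read the metric equations \eqref{eq:fun5} as an inhomogeneous linear system for the unknown coefficients $f_{\rho\sigma,\mu}$, which are the only free functions left in them (recall that $\partial L^{\beta\gamma,\mu}_\alpha/\partial\Gamma^\tau_{\rho\sigma}=0$, so the connection coefficients have already dropped out). For each value of the free index triple $(\alpha;\beta\gamma)$ there is one equation, giving $4\times4\times4=64$ equations for the $10\times4=40$ unknowns $f_{\rho\sigma,\mu}$; the system is overdetermined, and the necessary condition for solvability is that the source $\partial H/\partial\Gamma^\alpha_{\beta\gamma}$ be orthogonal to the cokernel of the coefficient ``matrix'' $\partial L^{\beta\gamma,\mu}_\alpha/\partial g_{\rho\sigma}$. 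The whole proof is the identification of this cokernel and the evaluation of the source on it.

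First I would compute the coefficient term explicitly, using the standard expressions for $\partial\varrho/\partial g_{\rho\sigma}$ and $\partial g^{\beta\gamma}/\partial g_{\rho\sigma}$ (proportional to $g^{\rho\sigma}$ and to $g^{\beta\rho}g^{\gamma\sigma}$ respectively, symmetrized in $\rho\sigma$). Substituting $L^{\beta\gamma,\mu}_\alpha=\varrho(\delta^\mu_\alpha g^{\beta\gamma}-\delta^\beta_\alpha g^{\mu\gamma})$ and contracting with $f_{\rho\sigma,\mu}$, the $f$-term splits into a piece that is symmetric in $(\beta\gamma)$ and a piece of the rigid form $\delta^\beta_\alpha W^\gamma$, where $W^\gamma$ is a vector built from the (otherwise arbitrary) traces of $f_{\rho\sigma,\mu}$. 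Consequently the antisymmetric part in $(\beta\gamma)$ of the $f$-term is forced to have the pure-trace shape $\delta^\beta_\alpha W^\gamma-\delta^\gamma_\alpha W^\beta$, with $W$ prescribable at will. This is the structural fact that isolates the cokernel: the $f$'s can only reach the part of the antisymmetrized equations proportional to $\delta^\beta_\alpha$.

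Next I would compute the source $\partial H/\partial\Gamma^\alpha_{\beta\gamma}$ from $H=\varrho g^{\alpha\beta}(\Gamma^\gamma_{\beta\sigma}\Gamma^\sigma_{\gamma\alpha}-\Gamma^\gamma_{\beta\alpha}\Gamma^\sigma_{\sigma\gamma})$, obtaining a tensor linear in $\Gamma$. Taking its antisymmetric part in $(\beta\gamma)$ collapses the products of Christoffel symbols into torsion components $T^\nu_{\cdot\cdot}$ and traces $T^\nu_{\nu\cdot}$, while the symmetric combinations cancel. By the previous paragraph, whatever part of this antisymmetrized source already has the form $\delta^\beta_\alpha(\cdots)^\gamma-\delta^\gamma_\alpha(\cdots)^\beta$ can be absorbed by a choice of $W$ (hence of $f$), so the genuine obstruction is its traceless part. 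Removing the trace introduces the factor $1/(\dim M-1)=\tfrac13$, which is exactly the origin of the coefficients $\tfrac13$ in \eqref{torscond}. Lowering the free indices with $g$ then identifies this traceless antisymmetric source, up to the overall factor $\varrho$, with $A_{\alpha\beta\gamma}$, forcing $A_{\alpha\beta\gamma}=0$.

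I expect the main obstacle to be purely the index bookkeeping of this last step: one must carry out the antisymmetrization and the trace subtraction on $\partial H/\partial\Gamma^\alpha_{\beta\gamma}$ and check that, after lowering indices with the metric, the terms assemble precisely into $g_{\beta\nu}T^\nu_{\alpha\gamma}-g_{\alpha\nu}T^\nu_{\beta\gamma}+\tfrac13 g_{\beta\gamma}T^\nu_{\nu\alpha}-\tfrac13 g_{\alpha\gamma}T^\nu_{\nu\beta}$, with the correct signs and the correct $\tfrac13$ weights, rather than merely into a combination of the same schematic type. The conceptual steps (linear system, rigid form of the $f$-term, trace removal in dimension four) are routine; the real risk lies in sign and normalization errors when relabelling the free indices of the equation into the slots $(\alpha\beta\gamma)$ used in the statement.
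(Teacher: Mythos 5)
Your proposal is correct in strategy and rests on the same underlying logic as the paper's proof: both read \eqref{eq:fun5} as an overdetermined inhomogeneous linear system for the $f_{\rho\sigma,\mu}$ and extract the compatibility condition by applying a linear projection that annihilates the $f$-term. The concrete devices differ, though. The paper introduces the explicit tensor $\beth^\alpha_{\beta\gamma,\lambda\zeta\nu}$ of \eqref{eq:inversa}, which acts as a partial inverse of $\partial L^{\beta\gamma,\mu}_\alpha/\partial g_{\rho\sigma}$; contracting the equation with $\beth$ isolates $\tfrac12(f_{\lambda\zeta,\nu}+f_{\zeta\lambda,\nu})$, and the constraint is then read off as the failure of $\beth^\alpha_{\beta\gamma,\lambda\zeta\nu}\,\partial H/\partial\Gamma^\alpha_{\beta\gamma}$ to be symmetric under $\lambda\leftrightarrow\zeta$, the antisymmetric part being computed to equal $A_{\lambda\zeta\nu}$. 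You instead characterize the image of the map $f\mapsto\sum_{\rho\leq\sigma}f_{\rho\sigma,\mu}\,\partial L^{\beta\gamma,\mu}_\alpha/\partial g_{\rho\sigma}$ directly --- its part antisymmetric in $(\beta\gamma)$ is rigidly of the form $\delta^\beta_\alpha W^\gamma-\delta^\gamma_\alpha W^\beta$ --- and identify the obstruction as the traceless antisymmetric part of the source, the factor $\tfrac13=1/(\dim M-1)$ arising from the trace subtraction; this is a valid cokernel projection of the same rank (20 independent conditions), so it yields the same constraints after lowering indices. The paper's $\beth$ has the added payoff of being reused verbatim in Proposition \ref{pr:so1} to prove uniqueness of the particular solution $f_{\sigma\rho,\mu}$, whereas your route avoids having to guess an explicit inverse. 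The only piece you leave undone is the final evaluation and antisymmetrization of $\partial H/\partial\Gamma^\alpha_{\beta\gamma}$, which you correctly flag as index bookkeeping; the paper performs exactly that computation inside the $\beth$-contracted form.
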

\begin{proof}
We introduce the following functions
\beq\label{eq:inversa}
\beth^\alpha_{\beta\gamma,\lambda\zeta\nu}=
\frac{1}{\varrho}\left(-\frac{1}{2}g_{\beta \gamma}g_{\lambda\zeta}\delta_\nu^\alpha+
\frac{1}{6}g_{\lambda\zeta}g_{\nu\gamma}\delta_\beta^\alpha-
\frac13g_{\lambda\nu}g_{\zeta\gamma}\delta_\beta^\alpha+
g_{\zeta\gamma}g_{\lambda\beta}\delta_\nu^\alpha\right) \ ,
\eeq
which satisfy that
$$
\frac{\partial L_{\alpha}^{\beta\gamma,\mu}}{\partial g_{\rho\sigma}}\beth^{\alpha}_{\beta\gamma,\lambda\zeta\nu}=
\frac{n(\rho\sigma)}{2}(\delta_\nu^\mu\delta_\zeta^\sigma\delta_\lambda^\rho+\delta_\nu^\mu\delta_\lambda^\sigma\delta_\zeta^\rho)\ ;
$$
where $n(\rho\sigma)$ is a combinatorial factor such that 
$n(\rho\sigma)=1$ for $\rho=\sigma$, and $n(\rho\sigma)=2$ for $\rho\neq\sigma$.
Then, using them in the metric equations (\ref{eq:fun5}), we obtain
$$
0=\beth^\alpha_{\beta\gamma,\lambda\zeta\nu}\left(\frac{\partial H}{\partial \Gamma^\alpha_{\beta\gamma} } +\sum_{\rho\leq\sigma}
f_{\rho\sigma,\mu} \frac{\partial L_{\alpha}^{\beta\gamma,\mu}}{\partial g_{\rho\sigma}}\right)=
\beth^\alpha_{\beta\gamma,\lambda\zeta\nu}\frac{\partial H}{\partial \Gamma^\alpha_{\beta\gamma} }+\frac12(f_{\lambda\zeta,\nu}+f_{\zeta\lambda,\nu})\ .
$$
These are equations for the functions $f_{\lambda\zeta,\nu}$ which,
as a consequence of the symmetry of the metric, $g_{\alpha\beta}=g_{\beta\alpha}$,  are also symmetric: $f_{\lambda\zeta,\nu}=f_{\zeta\lambda,\nu}$. 
Nevertheless, the equations are incompatible 
because they are not symmetric under the change 
$\lambda\leftrightarrow \zeta$.
In fact; we obtain that
$$
\beth^\alpha_{\beta\gamma,\lambda\zeta\nu}\frac{\partial H}{\partial \Gamma^\alpha_{\beta\gamma} }-
\beth^\alpha_{\beta\gamma,\zeta\lambda\nu}\frac{\partial H}{\partial \Gamma^\alpha_{\beta\gamma} }=
g_{\lambda\mu}T^\mu_{\zeta\nu}-g_{\zeta\mu}T^\mu_{\lambda\nu}+\tfrac{1}{3}g_{\lambda\nu}T^\mu_{\mu \zeta}-
\tfrac{1}{3}g_{\zeta\nu}T^\mu_{\mu \lambda}=0 \ ,
$$
and the result follows from here.
\end{proof}

Conditions \eqref{torscond} are called {\sl torsion constraints} and they define the submanifold $\mathcal{S}_T\hookrightarrow J^1E$.
These torsion constraints are essential in the following discussion, 
since they impose strong restrictions on the torsion. In fact:

\begin{prop}
\label{pr:toco} 
The torsion constraints \eqref{torscond} are equivalent to
\beq
\label{torscons}
T^\alpha_{\beta\gamma}=
\frac13\delta^\alpha_\beta T^\nu_{\nu\gamma}-\frac13\delta^\alpha_\gamma T^\nu_{\nu\beta}\ .
\eeq
\end{prop}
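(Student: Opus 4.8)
The plan is to prove the two implications separately. Throughout I would abbreviate the torsion trace by $t_\gamma:=T^\nu_{\nu\gamma}$. The reverse implication (that \eqref{torscons} forces \eqref{torscond}) is a direct substitution, while the forward implication carries the actual content.

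For \emph{sufficiency}, I would assume \eqref{torscons} and insert it into $A_{\alpha\beta\gamma}$. Lowering the free index gives $g_{\beta\nu}T^\nu_{\alpha\gamma}=\tfrac13(g_{\beta\alpha}t_\gamma-g_{\beta\gamma}t_\alpha)$ and likewise $g_{\alpha\nu}T^\nu_{\beta\gamma}=\tfrac13(g_{\alpha\beta}t_\gamma-g_{\alpha\gamma}t_\beta)$. Subtracting and using the symmetry $g_{\alpha\beta}=g_{\beta\alpha}$ cancels the two terms proportional to $t_\gamma$, leaving $g_{\beta\nu}T^\nu_{\alpha\gamma}-g_{\alpha\nu}T^\nu_{\beta\gamma}=\tfrac13(g_{\alpha\gamma}t_\beta-g_{\beta\gamma}t_\alpha)$, which is exactly cancelled by the remaining two terms $\tfrac13 g_{\beta\gamma}t_\alpha-\tfrac13 g_{\alpha\gamma}t_\beta$ of $A_{\alpha\beta\gamma}$. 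Hence $A_{\alpha\beta\gamma}=0$.

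For \emph{necessity} I would exploit the algebraic symmetries of the torsion. First I lower the contravariant index, $T_{\mu\beta\gamma}:=g_{\mu\nu}T^\nu_{\beta\gamma}$, which is antisymmetric in its last two slots. In this notation the constraint $A_{\alpha\beta\gamma}=0$ reads $T_{\beta\alpha\gamma}-T_{\alpha\beta\gamma}=\tfrac13(g_{\alpha\gamma}t_\beta-g_{\beta\gamma}t_\alpha)$, i.e.\ it prescribes the antisymmetric part of $T_{\mu\beta\gamma}$ in its \emph{first} two slots in terms of the trace. The tensor $T_{\mu\beta\gamma}$ thus carries two independent symmetry data: antisymmetry in the last pair (built in) and a prescribed first-pair antisymmetrization (from the constraint). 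The heart of the argument is then combinatorial: using these two relations I compute the action of $S_3$ on the three indices, obtaining the cyclically permuted components $T_{\beta\gamma\alpha}$ and $T_{\gamma\alpha\beta}$ as $T_{\alpha\beta\gamma}$ plus explicit trace terms. Summing the three cyclic components yields $3T_{[\alpha\beta\gamma]}=T_{\alpha\beta\gamma}+\tfrac13(g_{\alpha\gamma}t_\beta-g_{\alpha\beta}t_\gamma)$, isolating $T_{\alpha\beta\gamma}$ up to its totally antisymmetric part. To finish I fully antisymmetrize the constraint over $\alpha,\beta,\gamma$: its right-hand side vanishes because each term carries a factor $g$ symmetric in two of the antisymmetrized indices, so $T_{[\alpha\beta\gamma]}=0$. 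Substituting back gives $T_{\mu\beta\gamma}=\tfrac13(g_{\mu\beta}t_\gamma-g_{\mu\gamma}t_\beta)$, and raising the first index with $g^{\alpha\mu}$ produces exactly \eqref{torscons}.

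I expect the main obstacle to be the bookkeeping in the $S_3$ step: one must track the signs coming from the last-pair antisymmetry together with the inhomogeneous trace terms generated each time the first two indices are swapped, and in particular recognize that the totally antisymmetric component must be handled separately and shown to vanish, since otherwise the relations only determine $T$ modulo $\Lambda^3$. A useful internal check is that the resulting formula, reinserted into the constraint, reproduces the right-hand side $\tfrac13(g_{\alpha\gamma}t_\beta-g_{\beta\gamma}t_\alpha)$ and is compatible with $T_{\mu\beta\gamma}=-T_{\mu\gamma\beta}$.
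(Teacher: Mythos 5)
Your proof is correct, and the sufficiency direction coincides with the paper's (direct substitution plus the symmetry of $g$). For necessity, the underlying mechanism is the same as the paper's — both arguments work by combining the three index permutations of the constraint with the built-in antisymmetry $T_{\mu\beta\gamma}=-T_{\mu\gamma\beta}$ — but the organization is genuinely different. The paper simply writes down the single combination $\tfrac12 g^{\alpha\mu}\bigl(A_{\beta\mu\gamma}+A_{\beta\gamma\mu}+A_{\mu\gamma\beta}\bigr)$ and expands; the pairing of permutations in that sum makes the totally antisymmetric contributions cancel automatically, so the issue you isolate never surfaces explicitly. Your version instead decomposes the problem: the constraint prescribes the first-pair antisymmetrization of $T_{\mu\beta\gamma}$, the cyclic sum identifies the residual ambiguity as $T_{[\alpha\beta\gamma]}$, and a separate antisymmetrization of \eqref{torscond} (whose right-hand side dies because $g$ is symmetric in two antisymmetrized slots) kills that ambiguity. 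This costs an extra lemma but buys transparency: it explains \emph{why} the linear system for $T$ is determined — the relations only fix $T$ modulo $\Lambda^3$, and that piece must be shown to vanish — whereas the paper's one-line Koszul-type combination is shorter but leaves the reader to guess where it came from. I verified your intermediate identities (the cyclic sum equals $3T_{[\alpha\beta\gamma]}$ for a tensor antisymmetric in its last two slots, and equals $T_{\alpha\beta\gamma}+\tfrac13(g_{\alpha\gamma}t_\beta-g_{\alpha\beta}t_\gamma)$ under the constraint), and raising the first index does reproduce \eqref{torscons}.
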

\begin{proof}
If \eqref{torscond} holds, then
\beann
0&=&\frac12g^{\alpha \mu}\left(A_{\beta\mu\gamma}+A_{\beta\gamma\mu}+A_{\mu\gamma\beta}\right)
\\ &=&\frac12g^{\alpha \mu}\left( g_{\mu\nu}T^\nu_{\beta\gamma}-
g_{\beta\nu}T^\nu_{\mu\gamma}+\tfrac{1}{3}g_{\gamma \mu}T^\nu_{\nu \beta}-
\tfrac{1}{3}g_{\gamma \beta}T^\nu_{\nu \mu}+g_{\gamma\nu}T^\nu_{\beta\mu}-
g_{\beta\nu}T^\nu_{\gamma\mu}\right.
\\ & &
+ \left.\tfrac{1}{3}g_{\mu \gamma}T^\nu_{\nu \beta}-
\tfrac{1}{3}g_{\mu \beta}T^\nu_{\nu \gamma}+
 g_{\gamma\nu}T^\nu_{\mu\beta}-g_{\mu\nu}T^\nu_{\gamma\beta}+
\tfrac{1}{3}g_{\beta \gamma}T^\nu_{\nu \mu}-
\tfrac{1}{3}g_{\beta \mu}T^\nu_{\nu \gamma}\right)
\\
&=&T^\alpha_{\beta\gamma}-\frac13\delta^\alpha_\beta T^\nu_{\nu\gamma}+
\frac13\delta^\alpha_\gamma T^\nu_{\nu\beta}\ .
\eeann
Conversely, if
$T^\alpha_{\beta\gamma}=
\frac13\delta^\alpha_\beta T^\nu_{\nu\gamma}-\frac13\delta^\alpha_\gamma T^\nu_{\nu\beta}$,
 then
\beann
A_{\alpha\beta\gamma}&=& 
g_{\beta\nu}T^\nu_{\alpha\gamma}-g_{\alpha\nu}T^\nu_{\beta\gamma}+\tfrac{1}{3}g_{\beta\gamma}T^\nu_{\nu \alpha}-
\tfrac{1}{3}g_{\alpha\gamma}T^\nu_{\nu\beta}
\\
&=& g_{\beta\nu}\left(\tfrac13\delta^\nu_\alpha T^\mu_{\mu\gamma}-\tfrac13\delta^\nu_\gamma T^\mu_{\mu \alpha}\right) -
g_{\alpha\nu}\left(\tfrac13\delta^\nu_\beta T^\mu_{\mu \gamma}-\tfrac13\delta^\nu_\gamma T^\mu_{\mu \beta}\right)+
\tfrac{1}{3}g_{\beta\gamma}T^\nu_{\nu\alpha}-\tfrac{1}{3}g_{\alpha\gamma}T^\nu_{\nu \beta}
\\
&=&\frac13\left(g_{\beta\alpha}T^\mu_{\mu \gamma}-g_{\beta\gamma} T^\mu_{\mu\alpha} -g_{\alpha\beta} T^\mu_{\mu\gamma}+g_{\alpha\gamma} T^\mu _{\mu \beta}+g_{\beta\gamma}T^\nu_{\nu\alpha}-g_{\alpha\gamma}T^\nu_{\nu\beta}\right)=0 \ .
\eeann
\end{proof}

As a consequence of this result, on $\mathcal{S}_T$ 
the torsion is determined by its ``trace'', $tr(T)=T^\nu_{\alpha\nu}$.

\begin{prop}
\label{pr:so1}
On the submanifold $\mathcal{S}_T$, the general solutions to the equations 
\eqref{eq:fun4} and \eqref{eq:fun5} are, respectively,
\bea
\label{eq:sol2}
f^{\alpha}_{\beta\gamma,\mu}&=&\Gamma^\lambda_{\mu\gamma}\Gamma^\alpha_{\beta\lambda}+C^\alpha_{\beta\gamma,\mu}+K^\alpha_{\beta\gamma,\mu} \ ,
\\
\label{eq:sol1}
f_{\sigma\rho,\mu}&=&g_{\sigma\lambda}\Gamma^\lambda_{\mu\rho}+g_{\rho\lambda}\Gamma^\lambda_{\mu\sigma}+\frac{2}{3}g_{\sigma\rho}T^\lambda_{\lambda\mu} \ ;
\eea
for some functions 
$C^\alpha_{\beta\gamma,\mu}, K^\alpha_{\beta\gamma,\mu}\in C^\infty(J^1\pi)$ satisfying that
$$
C^\alpha_{\beta\gamma,\mu}=C_{\beta\mu}\delta^\alpha_\gamma \quad,  \quad
K^\nu_{\nu\gamma\mu}=0 \quad , \quad
K^{\nu}_{\beta\gamma \nu}+K^\nu_{\gamma\beta \nu}=0 
\quad ; \quad \mbox{\rm (on $\mathcal{S}_T$)} \  .
$$
\end{prop}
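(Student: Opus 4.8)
The plan is to treat the two families of equations separately, since the metric equations \eqref{eq:fun5} involve only the unknowns $f_{\sigma\rho,\mu}$ whereas the connection equations \eqref{eq:fun4} involve only the $f^\alpha_{\beta\gamma,\mu}$; I would establish \eqref{eq:sol1} first and then \eqref{eq:sol2}.

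For the metric equations I would start from the identity already obtained in the proof of the first Proposition,
$$
0=\beth^\alpha_{\beta\gamma,\lambda\zeta\nu}\frac{\partial H}{\partial \Gamma^\alpha_{\beta\gamma}}+\tfrac12\left(f_{\lambda\zeta,\nu}+f_{\zeta\lambda,\nu}\right)\ .
$$
Since $f_{\lambda\zeta,\nu}=f_{\zeta\lambda,\nu}$ by the symmetry of the metric, and since on $\mathcal{S}_T$ the part of $\beth^\alpha_{\beta\gamma,\lambda\zeta\nu}\,\partial H/\partial\Gamma^\alpha_{\beta\gamma}$ that is antisymmetric in $\lambda,\zeta$ is exactly the torsion-constraint combination \eqref{torscond} and hence vanishes, the right-hand side is consistently symmetric and one may invert it to obtain
$$
f_{\lambda\zeta,\nu}=-\,\beth^\alpha_{\beta\gamma,\lambda\zeta\nu}\frac{\partial H}{\partial \Gamma^\alpha_{\beta\gamma}}\ .
$$
It then remains to compute $\partial H/\partial\Gamma^\alpha_{\beta\gamma}$ explicitly from \eqref{auxfuns2}, contract it with the tensor $\beth$ of \eqref{eq:inversa}, and reduce the outcome using the equivalent form \eqref{torscons} of the torsion constraint; this should reproduce \eqref{eq:sol1}. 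This part is a direct, if lengthy, computation.

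For the connection equations the situation is genuinely underdetermined: \eqref{eq:fun4} is a system of only ten equations, indexed by the symmetric pair $\sigma\rho$, for the $f^\alpha_{\beta\gamma,\mu}$, so its general solution is a particular solution plus the kernel of the linear map $f^\alpha_{\beta\gamma,\mu}\mapsto f^\alpha_{\beta\gamma,\mu}\,\partial L^{\beta\gamma,\mu}_\alpha/\partial g_{\sigma\rho}$. I would proceed in three steps. First, compute $\partial L^{\beta\gamma,\mu}_\alpha/\partial g_{\sigma\rho}$ from \eqref{auxfuns1}, using $\partial\varrho/\partial g_{\sigma\rho}=\tfrac12 n(\sigma\rho)\varrho\,g^{\sigma\rho}$ and the standard formula for $\partial g^{\beta\gamma}/\partial g_{\sigma\rho}$, keeping careful track of the combinatorial factors $n(\sigma\rho)$ coming from the symmetric metric components. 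Second, verify by substitution that the quadratic term $\Gamma^\lambda_{\mu\gamma}\Gamma^\alpha_{\beta\lambda}$ is a particular solution, i.e. that after contraction it reproduces $\partial H/\partial g_{\sigma\rho}$; this is natural since $\partial H/\partial g_{\sigma\rho}$ is quadratic in the connection. Third, and this is the crux, characterize the kernel by solving $f^\alpha_{\beta\gamma,\mu}\,\partial L^{\beta\gamma,\mu}_\alpha/\partial g_{\sigma\rho}=0$ for all $\sigma\le\rho$ and decomposing the resulting relations according to the trace structure of $f$. I expect the contraction to detect $f$ only through the trace of $\alpha$ against $\beta$ and the $\beta\gamma$-symmetric trace of $\alpha$ against $\mu$, so that its vanishing forces precisely the conditions $C^\alpha_{\beta\gamma,\mu}=C_{\beta\mu}\delta^\alpha_\gamma$, $K^\nu_{\nu\gamma\mu}=0$ and $K^\nu_{\beta\gamma\nu}+K^\nu_{\gamma\beta\nu}=0$ recorded in the statement.

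The main obstacle will be this kernel computation. One must show not only that the displayed $C$ and $K$ terms lie in the kernel, which is a check by substitution, but also that they exhaust it. I would close this gap by a dimension count: the ten equations cut the $256$-dimensional space of the $f^\alpha_{\beta\gamma,\mu}$ down to a kernel of dimension $246$, and this must match the $16$ free parameters of the symmetric-trace piece $C_{\beta\mu}$ together with the $256-16-10=230$ parameters of a $K$ subject to its stated constraints, the two families being independent since a nonzero $C$ never satisfies $K^\nu_{\nu\gamma\mu}=0$. Carrying this out rigorously demands careful bookkeeping of the index symmetrizations and of the factors $n(\sigma\rho)$; all of it takes place on $\mathcal{S}_T$, where the simplifications from \eqref{torscons} are available and the metric part $f_{\sigma\rho,\mu}$ has already been fixed.
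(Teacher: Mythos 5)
Your proposal is correct and for the most part runs parallel to the paper's proof: both treat the metric and connection systems separately; both obtain \eqref{eq:sol1} by contracting the metric equations with the tensor $\beth$ of \eqref{eq:inversa} (the paper phrases this as a direct check that \eqref{eq:sol1} is a particular solution together with uniqueness, since the contraction with $\beth$ kills every homogeneous solution; you phrase it as solving for $f_{\lambda\zeta,\nu}$ outright --- the same computation); and both exhibit $\Gamma^\lambda_{\mu\gamma}\Gamma^\alpha_{\beta\lambda}$ as a particular solution of the connection equations. Where you genuinely diverge is in the crux you yourself identify: showing that the trace and torsion solutions exhaust the kernel of $h^\alpha_{\beta\gamma,\mu}\mapsto h^\alpha_{\beta\gamma,\mu}\,\partial L^{\beta\gamma,\mu}_\alpha/\partial g_{\rho\sigma}$. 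The paper first rewrites the homogeneous equation \eqref{eq:sol:hom} in the equivalent form $h^\lambda_{\lambda r,s}+h^\lambda_{\lambda s,r}-h^\lambda_{rs,\lambda}-h^\lambda_{sr,\lambda}=0$ and then argues constructively: given any solution $h$, put $C_{\beta\mu}:=h^\lambda_{\lambda\beta\mu}$ and $K:=h-C_{\beta\mu}\delta^\alpha_\gamma$; then $K^\lambda_{\lambda\gamma\mu}=0$ by construction, $K$ is again a solution by linearity, and substituting back forces $K^\lambda_{rs,\lambda}+K^\lambda_{sr,\lambda}=0$ --- no rank computation needed. Your dimension count ($246=16+230$, with trivial intersection) is also valid, but it silently rests on two extra facts you would have to verify: that the ten equations indexed by $\sigma\le\rho$ are linearly independent (i.e.\ the contraction map has full rank $10$, which does hold because $S_{rs}\mapsto(2g^{r\rho}g^{s\sigma}-g^{\rho\sigma}g^{rs})S_{rs}$ is invertible on symmetric tensors and $h\mapsto h^\lambda_{\lambda(r,s)}-h^\lambda_{(rs),\lambda}$ is onto), and that the $16+10$ linear conditions defining the torsion solutions are independent. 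Both are true, so your argument closes, but the paper's subtraction trick reaches the same conclusion with noticeably less bookkeeping and without ever counting parameters.
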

\begin{proof} 
The metric and connection equations are independent and linear. 
Thus we look for particular and homogeneous-general solutions for each one. 

It is straightforward to check that \eqref{eq:sol1} is a particular solution to the metric equations on $\mathcal{S}_T$.
Given two solutions, $f^1$ and $f^2$, their difference 
$h_{\sigma\rho,\mu}=f^1_{\sigma\rho,\mu}-f^2_{\sigma\rho,\mu}$ 
is a solution to the homogeneous equation 
$$
\sum_{\rho\leq\sigma}h_{\rho\sigma,\mu} \frac{\partial L_{\alpha}^{\beta\gamma,\mu}}{\partial g_{\rho\sigma}}=0 
\quad ; \quad \mbox{\rm (on ${\cal S}_T$)} \ .
$$
Consider the functions $\beth^\alpha_{\beta\gamma,\lambda\zeta\nu}$ which satisfy \eqref{eq:inversa},
$$
0=\sum_{\rho\leq\sigma}h_{\rho\sigma,\mu} \frac{\partial L_{\alpha}^{\beta\gamma,\mu}}{\partial g_{\rho\sigma}}\beth^\alpha_{\beta\gamma,\lambda\zeta\nu} =h_{\rho\sigma,\mu}\frac12(\delta_\nu^\mu\delta_\lambda^\sigma\delta_\zeta^\rho+\delta_\nu^\mu\delta_\zeta^\sigma\delta_\lambda^\rho) =h_{\lambda\zeta\nu}\ .
$$
Therefore, $\displaystyle h_{\sigma\rho,\mu}\vert_{\mathcal{S}_T}=0\Rightarrow f^1(p)=f^2(p)$ on $ \mathcal{S}_T$, and the solution is unique. 
In a similar way,
$$
f_{\beta\gamma,\mu}^{\alpha}=
\Gamma^\lambda_{\mu\gamma}\Gamma^\alpha_{\beta\lambda}
\quad ; \quad \mbox{\rm (on ${\cal S}_T$)} 
$$
is a particular solution to the connection equations. 
The difference between two solutions is a solution to the homogeneous equation:
\beq\label{eq:sol:hom}
h^\alpha_{\beta\gamma,\mu} \frac{\partial L_{\alpha}^{\beta\gamma,\mu}}{\partial g_{\rho\sigma}}=0
\quad ; \quad \mbox{\rm (on ${\cal S}_T$)} \ .
\eeq
This equation is equivalent to:
$$
h^\lambda_{\lambda r,s}+h^\lambda_{\lambda s,r}-h^\lambda_{rs,\lambda}-h^\lambda_{sr,\lambda}=0
\quad ; \quad \mbox{\rm (on ${\cal S}_T$)} \ .
$$
Indeed,
\beann
\frac1{\varrho n(\rho\sigma)}(2g_{r\rho}g_{s\sigma}-g_{\rho\sigma}g_{rs})h^\alpha_{\beta\gamma,\mu}  \frac{\partial L_{\alpha}^{\beta\gamma,\mu}}{\partial g_{\rho\sigma}}=h^\lambda_{\lambda r,s}+h^\lambda_{\lambda s,r}-h^\lambda_{rs,\lambda}-h^\lambda_{sr,\lambda}
\quad ; \quad \mbox{\rm (on ${\cal S}_T$)}  \ .
\\
\frac{\varrho n(\rho\sigma)}{4}(2g^{r\rho}g^{s\sigma}-g^{\rho\sigma}g^{rs})\left(h^\lambda_{\lambda r,s}+h^\lambda_{\lambda s,r}-h^\lambda_{rs,\lambda}-h^\lambda_{sr,\lambda}\right)=h^\alpha_{\beta\gamma,\mu}  \frac{\partial L_{\alpha}^{\beta\gamma,\mu}}{\partial g_{\rho\sigma}}
\quad ; \quad \mbox{\rm (on ${\cal S}_T$)}  \ .
\eeann
Some solutions of this equation are the functions of the form 
$$
h^{\alpha}_{\beta\gamma,\mu}=C_{\beta\mu}\delta^\alpha_\gamma\quad ; \quad \mbox{\rm (on ${\cal S}_T$)}  \ ,
$$
which are called {\sl trace solutions}. For any solution $h$, consider 
$K^{\alpha}_{\beta\gamma,\mu}=h^{\alpha}_{\beta\gamma,\mu}-C_{\beta\mu}\delta^\alpha_\gamma$ with $C_{\beta\mu}=h^\lambda_{\lambda\beta\mu}$. 
It follows that $K^\lambda_{\lambda\gamma\mu}=0$. 
Since the equation is linear, these functions must also be solutions. Therefore:
$$
0=K^\lambda_{\lambda r,s}+K^\lambda_{\lambda s,r}-K^\lambda_{rs,\lambda}-K^\lambda_{sr,\lambda}=-K^\lambda_{rs,\lambda}-K^\lambda_{sr,\lambda}
\quad ; \quad \mbox{\rm (on ${\cal S}_T$)}  \ .
$$
These solutions are called {\sl torsion solutions}. 
From their definition it is clear that any homogeneous solution is a sum of a trace and a torsion solution. Furthermore, if $K^\alpha_{\beta\gamma,\mu}=C^\alpha_{\beta\gamma,\mu}=C_{\beta\mu}\delta_\gamma^\alpha$, then $0=K^\lambda_{\lambda\gamma,\mu}=C_{\gamma \mu}$;
on ${\cal S}_T$. Thus, the only homogeneous solution which is both trace and torsion is $h^\alpha_{\beta\gamma,\mu}=~0$.
\end{proof}

This proposition shows also that:

\begin{corollary}
The torsion constraints \eqref{torscond} (or their equivalent expressions \eqref{torscons})
are sufficient conditions for the existence of solutions to \eqref{eq:fun5}.
\end{corollary}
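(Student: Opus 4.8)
The plan is to read off sufficiency directly from the explicit solution already constructed in Proposition~\ref{pr:so1}. The first proposition of this subsection showed the torsion constraints \eqref{torscond} to be \emph{necessary} for the solvability of the metric equations \eqref{eq:fun5}; the corollary asserts the converse, namely that restricting to the submanifold $\mathcal{S}_T$ cut out by \eqref{torscond} (equivalently \eqref{torscons}) already guarantees a solution.

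First I would simply invoke Proposition~\ref{pr:so1}, which exhibits the functions \eqref{eq:sol1} as a particular solution to the metric equations \eqref{eq:fun5} valid on $\mathcal{S}_T$. Since sufficiency requires nothing more than the existence of one solution, the mere presence of this explicit $f_{\sigma\rho,\mu}$ settles the claim: the system \eqref{eq:fun5} is consistent on $\mathcal{S}_T$, so the torsion constraints defining that submanifold are sufficient.

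There is no genuinely hard step here, as all the real work was carried out in the proof of Proposition~\ref{pr:so1}. The only thing to confirm is that \eqref{eq:sol1} does solve \eqref{eq:fun5} on $\mathcal{S}_T$, which one sees by contracting the right-hand side of \eqref{eq:fun5} with the functions $\beth^\alpha_{\beta\gamma,\lambda\zeta\nu}$ of \eqref{eq:inversa}: the part symmetric under $\lambda\leftrightarrow\zeta$ fixes the symmetric unknowns $f_{\lambda\zeta,\nu}$, while the antisymmetric part is exactly the obstruction $A_{\alpha\beta\gamma}$, which vanishes identically on $\mathcal{S}_T$ by definition. Thus the corollary is in effect a bookkeeping statement: the obstruction identified as necessary earlier is the only obstruction, and its vanishing on $\mathcal{S}_T$ renders the system solvable, the solution being given explicitly by \eqref{eq:sol1}.
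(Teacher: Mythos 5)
Your proposal is correct and follows essentially the same route as the paper: the corollary is stated there as an immediate consequence of Proposition \ref{pr:so1} ("This proposition shows also that..."), whose proof exhibits \eqref{eq:sol1} as an explicit particular solution of the metric equations on $\mathcal{S}_T$. Your additional remark that the antisymmetric part of the contracted system is precisely the obstruction $A_{\alpha\beta\gamma}$, which vanishes on $\mathcal{S}_T$, matches the mechanism used in the paper's necessity argument and correctly identifies why no further obstruction remains.
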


These constraints could be also obtained in an intrinsic way using the
procedure described in \cite{LMMMR-2005}. 

Now we must check the tangency (or consistency) conditions.
First, observe that, taking into account \eqref{locexmvf}, \eqref{eq:sol2}, and \eqref{eq:sol1},
the general solution to the equation \eqref{fundeqs}
(before imposing the holonomy condition) are multivector fields of the form
\bea
\nonumber
\mathbf{X}=\bigwedge_{\nu=0}^3X_\nu&=&
\bigwedge_{\nu=0}^3\left[\frac{\partial}{\partial x^\nu}+
\sum_{\sigma\leq\rho}\left((
g_{\sigma\lambda}\Gamma^\lambda_{\nu\rho}+g_{\rho\lambda}\Gamma^\lambda_{\nu\sigma}+
\frac{2}{3}g_{\sigma\rho}T^\lambda_{\lambda\nu})
\frac{\partial}{\partial g_{\sigma\rho}}+
f_{\sigma\rho\mu,\nu}\frac{\partial}{\partial g_{\sigma\rho,\mu}}\right)
\right. \\ & & \left. \qquad
+(\Gamma^\lambda_{\nu\gamma}\Gamma^\alpha_{\beta\lambda}+
C^\alpha_{\beta\gamma,\nu}+K^\alpha_{\beta\gamma,\nu})
\frac{\partial}{\partial \Gamma^\alpha_{\beta\gamma}}+
f^\alpha_{\beta\gamma\mu,\nu}\frac{\partial}{\partial \Gamma^\alpha_{\beta\gamma,\mu}}
\right]
\ ; \ \mbox{\rm (on ${\cal S}_T$)} \ .
\label{mvfsolgen}
\eea
Bearing in mind the conditions on the functions 
$C^\alpha_{\beta\gamma,\mu}, K^\alpha_{\beta\gamma,\mu}$
stated in Proposition \ref{pr:so1},
the tangency condition on the torsion constraints \eqref{torscons}
$$
\Lie(X_\nu)\left(T^\alpha_{\beta\gamma}-\frac13\delta^\alpha_\beta T^\nu_{\nu\gamma}+\frac13\delta^\alpha_\gamma T^\nu_{\nu\beta}\right)=0 
\quad ; \quad \mbox{\rm (on ${\cal S}_T$)}\ ,
$$
hold on ${\cal S}_T$ as long as
$$
K^\alpha_{[\beta\gamma],\mu}=-\frac13\delta^\alpha_{[\beta} K^\nu_{\gamma]\nu,\mu}-\Gamma^\lambda_{\mu[\gamma}\Gamma^\alpha_{\beta]\lambda}+\frac13\delta^\alpha_{[\beta}\Gamma^\lambda_{\mu\gamma]}\Gamma^\nu_{\nu\lambda}-\frac13\delta^\alpha_{[\beta}\Gamma^\lambda_{\mu\nu}\Gamma^\nu_{\gamma]\lambda}
\quad ; \quad \mbox{(\rm on ${\cal S}_T$)}\ .
$$
Nevertheless, solutions to equation \eqref{fundeqs} must be holonomic multivector fields.
Thus, first we look for semiholonomic solutions, then we analyze their tangency and, finally, we study the existence of holonomic solutions.

\subsubsection{Semi-holonomic multivector fields (elements of $\ker^4_{SH}\Omega_{\Lag_{\rm EP}}$):
semiholonomic constraints}
\label{semiholcons}

If a multivector field is semiholonomic then its local expression is \eqref{sopde};
that is, 
$$
f_{\rho\sigma,\mu}=g_{\rho\sigma,\mu}\quad , \quad
f^\alpha_{\beta\gamma,\mu}=\Gamma^\alpha_{\beta\gamma,\mu} \ .
$$ 
In this case, there are more constraints 
which arise from the equations \eqref{eq:fun4} and \eqref{eq:fun5}
and are the Euler-Lagrange equations themselves:
\bea \label{eq:constcon}
\frac{\partial H}{\partial g_{\mu\nu}}-
\frac{\partial L_\alpha^{\beta\gamma,\sigma}}{\partial g_{\mu\nu}}\Gamma^\alpha_{\beta\gamma,\sigma}=0 \ ,
\\ \label{eq:constmetric}
\frac{\partial H}{\partial \Gamma^\alpha_{\beta\gamma}}+\sum_{\mu\leq\nu}
\frac{\partial L_\alpha^{\beta\gamma,\sigma}}{\partial g_{\mu\nu}}g_{\mu\nu,\sigma}=0 \ .
\eea
(Geometrically, they are a consequence of the fact that
$\Omega_{\mathcal{L}_{Ep}}$ is  $\pi^1$-projectable \cite{first,art:GR-2016,Krupka,KrupkaStepanova,rosado,rosado2}).
In this way, the connection and metric equations become {\sl semiholonomic constraints}, which are called
{\sl connection} and {\sl metric constrains}, respectively.

In particular, notice that the metric constraints \eqref{eq:constmetric} arise from 
the equations \eqref{eq:fun5}, which lead to the torsion constraints \eqref{torscons}.
Therefore, the metric constraints split into two kinds of conditions: 
the torsion constraints \eqref{torscons} themselves
and, according to equation \eqref{eq:sol1} (or, equivalently, to \eqref{mvfsolgen}),
\beq
\label{eq:pm}
g_{\rho\sigma,\mu}=
g_{\sigma\lambda}\Gamma^\lambda_{\mu\rho}+
g_{\rho\lambda}\Gamma^\lambda_{\mu\sigma}+
\frac{2}{3}g_{\rho\sigma}T^\lambda_{\lambda\mu} \ ,
\eeq
which are called {\sl pre-metricity constraints}.
They are closely related to the metricity conditions and the trace of the torsion,
as it is proved in the following: 

\begin{prop}
\label{pr:pm}
In the points of the submanifold ${\cal S}_m\hookrightarrow J^1\pi$ defined by the metric constraints \eqref{eq:constmetric}, we have that:
$$
\nabla^{\Gamma(p)} g(p)=0\ \Longleftrightarrow\ tr(T^{\Gamma(p)})=0
\quad ; \quad p\in {\cal S}_m \ .
$$
(Here, the notation $\nabla^{\Gamma(p)}$ 
means the covariant derivative with respect to the connection $\Gamma$ in the point $p$, 
and $T^{\Gamma(p)}$ denotes the torsion tensor associated to this connection).
\end{prop}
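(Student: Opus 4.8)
The plan is to collapse the stated equivalence into a single pointwise identity relating the nonmetricity $\nabla^\Gamma g$ to the trace of the torsion, obtained simply by substituting the pre-metricity constraint \eqref{eq:pm} into the coordinate expression of the covariant derivative of the metric. Recall that for a linear connection $\Gamma$ the covariant derivative of the (symmetric, covariant) metric reads, in the natural coordinates on $J^1\pi$,
$$
\nabla_\mu g_{\rho\sigma}=g_{\rho\sigma,\mu}-\Gamma^\lambda_{\mu\rho}\,g_{\lambda\sigma}-\Gamma^\lambda_{\mu\sigma}\,g_{\rho\lambda}\ ,
$$
where the jet coordinate $g_{\rho\sigma,\mu}$ plays the role of $\partial_\mu g_{\rho\sigma}$. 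This is precisely the combination that appears, with the Christoffel terms on the opposite side, in \eqref{eq:pm}.

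First I would restrict to $\mathcal{S}_m$. Since the metric constraints \eqref{eq:constmetric} split into the torsion constraints \eqref{torscons} and the pre-metricity constraints \eqref{eq:pm}, the latter hold on $\mathcal{S}_m$, so \eqref{eq:pm} may be inserted as an algebraic substitution. Doing so, the terms $g_{\sigma\lambda}\Gamma^\lambda_{\mu\rho}$ and $g_{\rho\lambda}\Gamma^\lambda_{\mu\sigma}$ cancel the two Christoffel terms of the covariant derivative (using $g_{\alpha\beta}=g_{\beta\alpha}$), leaving the clean identity
$$
\nabla_\mu g_{\rho\sigma}=\tfrac{2}{3}\,g_{\rho\sigma}\,T^\lambda_{\lambda\mu}
\quad ; \quad \mbox{\rm (on } \mathcal{S}_m \mbox{\rm )}\ .
$$
Both implications are read off this single equation. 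If $tr(T^{\Gamma(p)})=0$ — equivalently $T^\lambda_{\lambda\mu}(p)=0$, since $T^\lambda_{\lambda\mu}=-T^\nu_{\mu\nu}$ — the right-hand side vanishes and $\nabla^{\Gamma(p)}g(p)=0$. Conversely, if $\nabla^{\Gamma(p)}g(p)=0$, I would contract the identity with the inverse metric $g^{\rho\sigma}$; using $g^{\rho\sigma}g_{\rho\sigma}=\dim M=4$ this yields $\tfrac{8}{3}\,T^\lambda_{\lambda\mu}(p)=0$, hence $tr(T^{\Gamma(p)})=0$. Thus nondegeneracy of $g$ is exactly what drives the converse direction.

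The computation is essentially mechanical, so there is no deep obstacle; the only delicate points are bookkeeping ones. I must fix the index convention for $\Gamma$ consistently with \eqref{eq:pm}, so that the first subscript of $\Gamma^\lambda_{\mu\rho}$ is the differentiation direction $\mu$, in order for the cancellation against the two Christoffel terms to occur; and I must track carefully the sign and index placement relating the contraction $T^\lambda_{\lambda\mu}$ appearing in \eqref{eq:pm} to the trace $tr(T)=T^\nu_{\alpha\nu}$ of the statement. It is also worth stressing at the outset that the whole argument is pointwise: $\nabla^{\Gamma(p)}g(p)$ depends only on $g$, its first derivatives and $\Gamma$ at $p$, all of which are coordinates on $J^1\pi$, so that restricting to $\mathcal{S}_m$ legitimately converts \eqref{eq:pm} into the algebraic identity used above.
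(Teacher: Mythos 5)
Your proposal is correct and follows essentially the same route as the paper: both rewrite the pre-metricity constraints \eqref{eq:pm} as the single identity $\left(\nabla^{\Gamma}g\right)_{\rho\sigma,\mu}=\tfrac{2}{3}g_{\rho\sigma}T^\lambda_{\lambda\mu}$ on the constraint submanifold and read off the equivalence from it. Your explicit contraction with $g^{\rho\sigma}$ for the converse direction just spells out the nondegeneracy argument that the paper leaves implicit in "the statement follows immediately".
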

\begin{proof}
In the coordinates of $J^1\pi$
the metricity condition $\nabla^{\Gamma(p)} g(p)=0$ is
$$
\left(\nabla^{\Gamma(p)} g(p)\right)_{\rho\sigma,\mu}=g_{\rho\sigma,\mu}-g_{\sigma\lambda}\Gamma^\lambda_{\mu\rho}-g_{\rho\lambda}\Gamma^\lambda_{\mu\sigma} \ .
$$
Therefore, the statement follows immediately since the pre-metricity constraints \eqref{eq:pm} can be written as
$$
\left(\nabla^{\Gamma(p)} g(p)\right)_{\rho\sigma,\mu}=\frac23g_{\rho\sigma}T^\lambda_{\lambda\mu} \ .
$$
\end{proof}

\subsubsection{Tangency condition: consistency constraints}

Now we check the tangency (or consistency) condition for all the above sets of constraints.
A semiholonomic multivector field $\displaystyle X=\bigwedge_{\nu=0}^3X_\nu$ 
has the local expression \eqref{sopde}. 
The tangency condition on the connection constraints \eqref{eq:constcon} reads
\beq\label{eq:cons:conec}
\Lie(X_\nu)\left(\frac{\partial H}{\partial g_{\rho\sigma}}-\frac{\partial L_\alpha^{\beta\gamma,\mu}}{\partial g_{\rho\sigma}}\Gamma^\alpha_{\beta\gamma,\mu}\right)=D_\nu\frac{\partial H}{\partial g_{\rho\sigma}}-D_\nu\frac{\partial L_\alpha^{\beta\gamma,\mu}}{\partial g_{\rho\sigma}}\Gamma^\alpha_{\beta\gamma,\mu}-\frac{\partial L_\alpha^{\beta\gamma,\mu}}{\partial g_{\rho\sigma}}f^\alpha_{\beta\gamma\mu,\nu}=0 \quad \mbox{\rm (on ${\cal S}_T$)} \ ,
\eeq
and it does not lead to new constraints because they allow to determine 
the functions $f^\alpha_{\beta\gamma\mu,\nu}$ (on ${\cal S}_T$) . 
The tangency condition on the pre-metricity constraints \eqref{eq:pm} gives
\beq\label{eq:constmetric3}
f_{\sigma\rho,\mu\nu}=D_\lambda\left(g_{\sigma\lambda}\Gamma^\lambda_{\mu\rho}+
g_{\rho\lambda}\Gamma^\lambda_{\mu\sigma}+
\frac{2}{3}g_{\sigma\rho}T^\lambda_{\lambda\mu}\right)
\quad ; \quad \mbox{\rm (on ${\cal S}_T$)}  \ ,
\eeq
and it does not lead either to new constraints. 
But the tangency condition on the torsion constraints  \eqref{torscons} does
lead to new constraints
$$
\Lie(X_\nu)\left(T^\alpha_{\beta\gamma}-\frac13\delta^\alpha_\beta T^\mu_{\mu\gamma}+
\frac13\delta^\alpha_\gamma T^\mu_{\mu\beta}\right)=T^\alpha_{\beta\gamma,\nu}-
\frac13\delta^\alpha_\beta T^\mu_{\mu\gamma,\nu}+
\frac13\delta^\alpha_\gamma T^\mu_{\mu\beta,\nu}=0
\quad ; \quad  \mbox{\rm (on ${\cal S}_T$)}\ .
$$
The tangency condition on these new constraints leads to
$$
\Lie(X_\lambda)\left(T^\alpha_{\beta\gamma,\nu}-
\frac13\delta^\alpha_\beta T^\mu_{\mu\gamma,\nu}+
\frac13\delta^\alpha_\gamma T^\mu_{\mu\beta,\nu}\right)=
f^\alpha_{\beta\gamma\nu,\tau}-\frac13\delta^\alpha_\beta f^\mu_{\mu\gamma\nu,\tau}+
\frac13\delta^\alpha_\gamma f^\mu_{\mu\beta\nu,\tau}=0
\ ; \ \mbox{\rm (on ${\cal S}_{sh}$)}  \ ,
$$
which are not new constraints, but equations for the functions $f^\alpha_{\beta\gamma\mu,\nu}$.
Therefore, in the submanifold $\mathcal{S}_{sh}\hookrightarrow{\cal S}_T$ 
defined by these constraints there are semiholonomic multivector fields
solutions to the field equations, which are tangent to 
$\mathcal{S}_{sh}$.

Summarizing, we have proved that:

\begin{theo}
\label{finalconstraints}
There exists a submanifold ${\rm j}_{sh}\colon \mathcal{S}_{sh}\hookrightarrow J^1\pi$
where there are semi-holonomic multivector fields
which are solutions to the field equations
\eqref{fundeqs} 
and are tangent to ${\cal S}_{sh}$.
This submanifold is locally defined
in $J^1\pi$ by the constraints
\beann
c^{\mu\nu}&\equiv&\frac{\partial H}{\partial g_{\mu\nu}}-
\frac{\partial L_\alpha^{\beta\gamma,\sigma}}{\partial g_{\mu\nu}}\Gamma^\alpha_{\beta\gamma,\sigma}=0 \ ,
\\
m_{\rho\sigma,\mu}&\equiv&g_{\rho\sigma,\mu}-g_{\sigma\lambda}\Gamma^\lambda_{\mu\rho}-
g_{\rho\lambda}\Gamma^\lambda_{\mu\sigma}-\frac{2}{3}g_{\rho\sigma}T^\lambda_{\lambda\mu}=0 \ ,
\\
t^\alpha_{\beta\gamma}&\equiv&T^\alpha_{\beta\gamma}-
\frac13\delta^\alpha_\beta T^\mu_{\mu\gamma}+
\frac13\delta^\alpha_\gamma T^\mu_{\mu\beta}=0 \ ,
\\
r^\alpha_{\beta\gamma,\nu}&\equiv&T^\alpha_{\beta\gamma,\nu}-
\frac13\delta^\alpha_\beta T^\mu_{\mu\gamma,\nu}+
\frac13\delta^\alpha_\gamma T^\mu_{\mu\beta,\nu}=0 \ .
\eeann
\end{theo}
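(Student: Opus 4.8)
The plan is to treat the statement as a consolidation: it gathers the compatibility obstruction found at the start of Section~\ref{se:lanh}, the semiholonomic reduction of Section~\ref{semiholcons}, and the tangency analysis, so the task is to certify that the constraint algorithm stabilizes exactly on the locus cut out by $c^{\mu\nu}$, $m_{\rho\sigma,\mu}$, $t^\alpha_{\beta\gamma}$ and $r^\alpha_{\beta\gamma,\nu}$. First I would recall that for a locally decomposable, $\overline{\pi}^1$-transverse field \eqref{locexmvf} the equation \eqref{fundeqs} is equivalent to the connection equations \eqref{eq:fun4} and the metric equations \eqref{eq:fun5} (with \eqref{eq:fun3} automatic). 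The first obstruction is the one already isolated: \eqref{eq:fun5} is unsolvable unless the torsion constraints \eqref{torscons} hold, which by Proposition~\ref{pr:toco} is precisely $t^\alpha_{\beta\gamma}=0$ and defines $\mathcal{S}_T$.

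Next I would impose semiholonomy, setting $f_{\rho\sigma,\mu}=g_{\rho\sigma,\mu}$ and $f^\alpha_{\beta\gamma,\mu}=\Gamma^\alpha_{\beta\gamma,\mu}$ and matching these with the general solutions \eqref{eq:sol2} and \eqref{eq:sol1} of Proposition~\ref{pr:so1}. This converts the field equations into honest constraints on $J^1\pi$: the connection equation becomes $c^{\mu\nu}=0$, while the metric equation \eqref{eq:constmetric} decomposes, as already observed, into the torsion constraint $t^\alpha_{\beta\gamma}=0$ together with the pre-metricity constraint $m_{\rho\sigma,\mu}=0$ read off from \eqref{eq:pm}. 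This delivers the first three families; what remains is to run the tangency requirement and show it yields exactly the fourth family $r^\alpha_{\beta\gamma,\nu}=0$ and then terminates.

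The core of the argument, and the part I expect to be most delicate, is the tangency step: applying $\Lie(X_\nu)$ to each constraint along a semiholonomic field and deciding whether the output is a new constraint or merely an equation for the still-free coefficients $f_{\rho\sigma\mu,\nu}$ and $f^\alpha_{\beta\gamma\mu,\nu}$. I would verify that tangency on $c^{\mu\nu}$ gives \eqref{eq:cons:conec}, solvable for $f^\alpha_{\beta\gamma\mu,\nu}$ thanks to the non-degeneracy packaged in the inverse functions $\beth^\alpha_{\beta\gamma,\lambda\zeta\nu}$ of \eqref{eq:inversa}, and that tangency on $m_{\rho\sigma,\mu}$ gives \eqref{eq:constmetric3}, which fixes $f_{\rho\sigma\mu,\nu}$; neither produces a new constraint. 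Tangency on $t^\alpha_{\beta\gamma}$, by contrast, involves only the first-level coefficients, which are already pinned down by \eqref{eq:sol2}, so it cannot be absorbed and forces the genuinely new constraint $r^\alpha_{\beta\gamma,\nu}=0$.

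The final and most technical point is closure. I would compute $\Lie(X_\lambda)r^\alpha_{\beta\gamma,\nu}$ and check that it is again a linear equation for $f^\alpha_{\beta\gamma\mu,\nu}$ of the same structural type as the one coming from \eqref{eq:cons:conec}. The crucial thing to confirm is that these two prescriptions for $f^\alpha_{\beta\gamma\mu,\nu}$ are mutually compatible rather than over-determined, so that a simultaneous solution exists and no further constraint is generated; this is the step where the specific trace/torsion structure of the homogeneous coefficients from Proposition~\ref{pr:so1} has to be used carefully. Granting this, the algorithm stabilizes at $\mathcal{S}_{sh}\hookrightarrow\mathcal{S}_T$ defined by the four families $c^{\mu\nu}$, $m_{\rho\sigma,\mu}$, $t^\alpha_{\beta\gamma}$, $r^\alpha_{\beta\gamma,\nu}$, and on $\mathcal{S}_{sh}$ the remaining free coefficients can be chosen so that the resulting semiholonomic multivector field solves \eqref{fundeqs} and is tangent to $\mathcal{S}_{sh}$, which is the assertion.
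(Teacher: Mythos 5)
Your proposal reproduces the paper's own argument step for step: the torsion constraints from the incompatibility of \eqref{eq:fun5}, the promotion of \eqref{eq:fun4} and \eqref{eq:fun5} to the constraints $c^{\mu\nu}$ and $m_{\rho\sigma,\mu}$ under semiholonomy, the tangency analysis producing only $r^\alpha_{\beta\gamma,\nu}$ as a genuinely new constraint, and the absorption of all remaining tangency conditions into equations for $f_{\rho\sigma\mu,\nu}$ and $f^\alpha_{\beta\gamma\mu,\nu}$. The one step you leave as ``granting this''---the mutual compatibility of the two prescriptions for $f^\alpha_{\beta\gamma\mu,\nu}$---is exactly what the paper settles by the explicit trace/torsion decomposition of the homogeneous solutions in the Proposition that follows the theorem, so your plan is correct and essentially identical to the paper's.
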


These constraints are not independent all of them.
For instance, the pre-metricity constraints $m_{\rho\sigma,\mu}$ 
are symmetric in the indices $\sigma,\rho$ and the constraints 
$t^\alpha_{\beta\gamma}$ and $r^\alpha_{\beta\gamma,\nu}$ are skewsymmetric in the indices $\beta,\gamma$.

\begin{prop}
The general expression of the semi-holonomic multivector fields
which are solutions to the field equations \eqref{fundeqs} on ${\cal S}_{sh}$ are
\beq
\mathbf{X}_\Lag=
\bigwedge_{\nu=0}^3\left(\frac{\partial}{\partial x^\nu}+
\sum_{\rho\leq\sigma}\left(g_{\rho\sigma,\nu}\frac{\partial}{\partial g_{\rho\sigma}}+
f_{\rho\sigma\mu,\nu}\frac{\partial}{\partial g_{\rho\sigma,\mu}}\right)+
\Gamma^\alpha_{\beta\gamma,\nu}\frac{\partial}{\partial \Gamma^\alpha_{\beta\gamma}}+
f^\alpha_{\beta\gamma\mu,\nu}\frac{\partial}{\partial \Gamma^\alpha_{\beta\gamma,\mu}}
\right) \ ,
\label{ELmvf}
\eeq
where, on the points of ${\cal S}_{sh}$,
\beann
f_{\rho\sigma\mu,\nu}&=&
D_\nu\left(g_{\sigma\lambda}\Gamma^\lambda_{\mu\rho}+g_{\rho\lambda}\Gamma^\lambda_{\mu\sigma}+\frac{2}{3}g_{\rho\sigma}T^\lambda_{\lambda\mu}\right) \ ,
\\
f^\alpha_{\beta\gamma\mu,\nu}&=&
\Gamma^\lambda_{\mu\gamma,\nu}\Gamma^\alpha_{\beta\lambda}+\Gamma^\lambda_{\mu\gamma}\Gamma^\alpha_{\beta\lambda,\nu}+C^\alpha_{\beta\gamma,\mu\nu}+K^{\alpha}_{\beta\gamma,\mu\nu}\ ,
\eeann
for any $C_{\beta\mu\nu}\in C^\infty(J^1\pi)$ and 
$K^{\alpha}_{\beta\gamma,\mu\nu}\in C^\infty(J^1\pi)$ satisfying that,
on ${\cal S}_{sh}$,
\beann
C^\alpha_{\beta\gamma,\mu\nu}&=&C_{\beta\mu\nu}\delta^\alpha_\gamma\quad ,\quad K^\lambda_{\lambda\gamma,\mu\nu}=0\quad,\quad K^\lambda_{\beta\gamma,\lambda\nu}+K^\lambda_{\gamma\beta,\lambda\nu}=0 \ , 
\\
K^\alpha_{[\beta\gamma],\mu\nu}&=&-\frac13\delta^\alpha_{[\beta} K^\lambda_{\gamma]\lambda,\mu\nu}-\Gamma^\lambda_{\mu[\gamma,\nu}\Gamma^\alpha_{\beta]\lambda}-\Gamma^\lambda_{\mu[\gamma}\Gamma^\alpha_{\beta]\lambda,\nu}
\\
&+&\frac13\delta^\alpha_{[\beta}\Gamma^\lambda_{\mu\gamma],\nu}\Gamma^\rho_{\rho\lambda}+\frac13\delta^\alpha_{[\beta}\Gamma^\lambda_{\mu\gamma]}\Gamma^\rho_{\rho\lambda,\nu}-\frac13\delta^\alpha_{[\beta}\Gamma^\lambda_{\mu\rho,\nu}\Gamma^\rho_{\gamma]\lambda}-\frac13\delta^\alpha_{[\beta}\Gamma^\lambda_{\mu\rho}\Gamma^\rho_{\gamma]\lambda,\nu} \ .
\eeann
\end{prop}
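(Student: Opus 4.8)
The plan is to start from the semi-holonomic ansatz already recorded in \eqref{ELmvf}, in which the lower coefficients are pinned down by holonomy, $f_{\rho\sigma,\mu}=g_{\rho\sigma,\mu}$ and $f^\alpha_{\beta\gamma,\mu}=\Gamma^\alpha_{\beta\gamma,\mu}$, so that the only freedom left lies in the higher coefficients $f_{\rho\sigma\mu,\nu}$ and $f^\alpha_{\beta\gamma\mu,\nu}$ multiplying $\partial/\partial g_{\rho\sigma,\mu}$ and $\partial/\partial\Gamma^\alpha_{\beta\gamma,\mu}$. Requiring that $\mathbf{X}_\Lag$ solve \eqref{fundeqs} forces the semi-holonomic constraints \eqref{eq:constcon} and \eqref{eq:constmetric}, so that the base point sits on ${\cal S}_{sh}$ by Theorem \ref{finalconstraints}; the substance of the proposition is therefore to read off these higher coefficients from the tangency (consistency) conditions derived in the previous subsection, and to identify the residual freedom.

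First I would treat the metric-type coefficients. The tangency of the pre-metricity constraints \eqref{eq:pm} is exactly equation \eqref{eq:constmetric3}, which expresses $f_{\rho\sigma\mu,\nu}$ as the total derivative $D_\nu$ of the pre-metricity expression $g_{\sigma\lambda}\Gamma^\lambda_{\mu\rho}+g_{\rho\lambda}\Gamma^\lambda_{\mu\sigma}+\tfrac23 g_{\rho\sigma}T^\lambda_{\lambda\mu}$ on ${\cal S}_{sh}$. Because the underlying linear system is the ``metric equation'' system of \eqref{eq:fun5}, whose kernel was shown to be trivial in Proposition \ref{pr:so1} through the functions $\beth^\alpha_{\beta\gamma,\lambda\zeta\nu}$ of \eqref{eq:inversa}, this determines $f_{\rho\sigma\mu,\nu}$ uniquely, yielding the first displayed formula.

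Next I would treat the connection-type coefficients. Differentiating the connection constraints \eqref{eq:constcon} along $X_\nu$ produces the tangency condition \eqref{eq:cons:conec}, which is precisely the linear system of the connection equations \eqref{eq:fun4}, now written for $f^\alpha_{\beta\gamma\mu,\nu}$ with an inhomogeneous term given by $D_\nu$ of the original one. Contracting with the same $\beth$ as in Proposition \ref{pr:so1}, a particular solution is $D_\nu(\Gamma^\lambda_{\mu\gamma}\Gamma^\alpha_{\beta\lambda})=\Gamma^\lambda_{\mu\gamma,\nu}\Gamma^\alpha_{\beta\lambda}+\Gamma^\lambda_{\mu\gamma}\Gamma^\alpha_{\beta\lambda,\nu}$, while the homogeneous freedom is the identical trace-plus-torsion kernel, so $f^\alpha_{\beta\gamma\mu,\nu}$ equals this particular solution plus $C^\alpha_{\beta\gamma,\mu\nu}+K^\alpha_{\beta\gamma,\mu\nu}$ subject to $C^\alpha_{\beta\gamma,\mu\nu}=C_{\beta\mu\nu}\delta^\alpha_\gamma$, $K^\lambda_{\lambda\gamma,\mu\nu}=0$ and $K^\lambda_{\beta\gamma,\lambda\nu}+K^\lambda_{\gamma\beta,\lambda\nu}=0$, mirroring the conditions on $C,K$ in Proposition \ref{pr:so1}.

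Finally I would impose the remaining tangency condition, that of the derived torsion constraints $r^\alpha_{\beta\gamma,\nu}$ of Theorem \ref{finalconstraints}, which was computed in the previous subsection to read $f^\alpha_{\beta\gamma\nu,\tau}-\tfrac13\delta^\alpha_\beta f^\mu_{\mu\gamma\nu,\tau}+\tfrac13\delta^\alpha_\gamma f^\mu_{\mu\beta\nu,\tau}=0$ on ${\cal S}_{sh}$. Substituting the expression just found for $f^\alpha_{\beta\gamma\mu,\nu}$, using the trace conditions $K^\lambda_{\lambda\gamma,\mu\nu}=0$ and $C^\alpha_{\beta\gamma,\mu\nu}=C_{\beta\mu\nu}\delta^\alpha_\gamma$ to kill the trace part, and antisymmetrizing in $\beta,\gamma$, this collapses to the stated formula for $K^\alpha_{[\beta\gamma],\mu\nu}$, which is recognizable as the $D_\nu$-derivative of the first-order antisymmetry condition on $K^\alpha_{[\beta\gamma],\mu}$ obtained earlier on ${\cal S}_T$. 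I expect this last step to be the main obstacle, since it is the only genuinely new computation: it requires expanding the total derivative of the torsion constraints, carefully tracking the skew part in $\beta,\gamma$ of the $\Gamma\Gamma$ products, and verifying that the trace solutions drop out so that only the $K$-dependence survives. The earlier steps are essentially bookkeeping that reuses the linear algebra of Proposition \ref{pr:so1}.
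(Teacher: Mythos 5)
Your proposal is correct and follows essentially the same route as the paper: $f_{\rho\sigma\mu,\nu}$ is fixed by the tangency of the pre-metricity constraints (equation \eqref{eq:constmetric3}), $f^\alpha_{\beta\gamma\mu,\nu}$ is obtained from the tangency of the connection constraints as a particular solution plus the trace-plus-torsion kernel of \eqref{eq:sol:hom} analysed exactly as in Proposition \ref{pr:so1}, and the final condition on $K^\alpha_{[\beta\gamma],\mu\nu}$ comes from the only remaining tangency requirement, $\Lie(X_\nu)r^\alpha_{\beta\gamma,\mu}=0$ on ${\cal S}_{sh}$. The only cosmetic difference is that you derive the particular solution by contracting with $\beth$, whereas the paper simply exhibits it and verifies it.
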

\begin{proof}
The functions $f_{\sigma\rho\mu,\nu}$ are given by \eqref{eq:constmetric3}. Now, from \eqref{eq:constcon} we obtain that
$$
\left(\frac{\partial^2 H}{\partial g_{\rho\sigma}\partial g_{\mu\nu}}-
\frac{\partial^2 L_\alpha^{\beta\gamma,\lambda}}{\partial g_{\rho\sigma}\partial g_{\mu\nu}}\Gamma^\alpha_{\beta\gamma,\lambda}\right)=0
\quad ; \quad \mbox{\rm (on ${\cal S}_{sh}$)}\ ,
$$
and therefore \eqref{eq:cons:conec} becomes 
$$
\left(\Gamma^\alpha_{\beta\gamma,\nu}\frac{\partial^2 H}{\partial \Gamma^\alpha_{\beta\gamma}\partial g_{\rho\sigma}}-\frac{\partial L_\alpha^{\beta\gamma,\mu}}{\partial g_{\rho\sigma}}f^\alpha_{\beta\gamma\mu,\nu}\right)=0 
\quad ; \quad \mbox{\rm (on ${\cal S}_{sh}$)}\ .
$$
A particular solution to these equations is $$
f^\alpha_{\beta\gamma\mu,\nu}=
\Gamma^\lambda_{\mu\gamma,\nu}\Gamma^\alpha_{\beta\lambda}+\Gamma^\lambda_{\mu\gamma}\Gamma^\alpha_{\beta\lambda,\nu}
\quad ; \quad \mbox{\rm (on ${\cal S}_{sh}$)}\ .
$$
Now, we need to find a general solution $h^\alpha_{\beta\gamma\mu,\nu}$
to the homogeneous equation, which is just
\eqref{eq:sol:hom}, but on ${\cal S}_{sh}$. Thus, proceeding as in
the proof of Proposition \ref{pr:so1}, we obtain that
$$
h^\alpha_{\beta\gamma,\mu\nu}=C^\alpha_{\beta\gamma,\mu\nu}+K^\alpha_{\beta\gamma,\mu\nu} 
\quad ; \quad \mbox{\rm (on ${\cal S}_{sh}$)} \ , 
$$
for $C_{\beta\mu\nu}\in C^\infty(J^1\pi)$ and 
$K^{\alpha}_{\beta\gamma,\mu\nu}\in C^\infty(J^1\pi)$ satisfying that
$$
C^\alpha_{\beta\gamma,\mu\nu}=C_{\beta\mu\nu}\delta^\alpha_\gamma\quad ,\quad K^\lambda_{\lambda\gamma,\mu\nu}=0\quad ,\quad K^\lambda_{\beta\gamma,\lambda\nu}+K^\lambda_{\gamma\beta,\lambda\nu}=0
\quad ; \quad \mbox{\rm (on ${\cal S}_{sh}$)}\ . 
$$
By construction, the solutions obtained in this way satisfy all the tangent conditions
on the constraints given in Theorem \ref{finalconstraints}, except
$$
\Lie(X_\nu)r^\alpha_{\beta\gamma,\mu}=0
\quad ; \quad  \mbox{\rm (on ${\cal S}_{sh}$)}\ ;
$$ 
and these equations lead to the last conditions.
\end{proof}

\noindent {\bf Comments}:
\bit
\item
It is important to point out that, up to the torsion constraints $t^\alpha_{\beta\gamma}$,
all the other constraints appear as a consequence of demanding the semiholonomy condition
on the multivector fields solution to the field equations \eqref{fundeqs}.
\item
From the constraints 
$m_{\rho\sigma\,\mu}=0$ and $t^\alpha_{\beta\gamma}=0$
in Theorem \ref{finalconstraints}, and Proposition \ref{pr:pm}
we obtain that 
$$
T^\alpha_{\beta\alpha}=0  \ \Longleftrightarrow \ 
T^\alpha_{\beta\gamma}=0  \ \Longleftrightarrow \ 
\nabla^\Gamma g=0 \ .
$$
Thus, any of these conditions are necessary and sufficient to assure that the connection becomes
the Levi-Civita connection.
This result completes the already known fact that
the vanishing of the trace torsion
is sufficient for the connection to be the Levi-Civita connection
(see, for instance, \cite{art:Capriotti2,pons}).
\eit

\subsubsection{Holonomic multivector fields (elements of $\ker^4_H\Omega_{\Lag_{\rm EP}}$): Integrability constraints}

The last step is to look for holonomic (i.e., integrable and semiholonomic) multivector fields. 
Locally, a transverse multivector field is integrable if, and only if, $[X_\mu,X_\nu]=0$ for any $\mu,\nu=0,1,2,3$. 
In any open of $U\subset\mathcal{S}_f$ where this condition holds, 
there exist integrable sections for the multivector field defined on $\pi(U)$. 
In general, integrable multivector fields could only
exist in a submanifold $\mathcal{S}_f$ of ${\cal S}_{sh}$. In this 
Section we obtain this submanifold, giving the constraints which are 
sufficient to assure that there are an holonomic multivector field; because every point of the submanifold can be reached by a section which is a solution to the field equations. This last result is proven in Proposition \ref{par:int2}, using the equivalence between the Metric-Affine and the Hilbert-Einstein models presented in Section \ref{se:rel:he-ma}.

Consider the following general expression
$$
[X_\mu,X_\nu]=F^\epsilon\frac{\partial}{\partial x^\epsilon}+
\sum_{\alpha\leq\beta}\left(F_{\alpha\beta}\frac{\partial}{\partial g_{\alpha\beta}}+
F_{\alpha\beta,\epsilon}\frac{\partial}{\partial g_{\alpha\beta,\epsilon}}\right)+
F^\alpha_{\beta\gamma}\frac{\partial}{\partial\Gamma^\alpha_{\beta\gamma}}+
F^\alpha_{\beta\gamma,\epsilon}\frac{\partial}{\partial \Gamma^\alpha_{\beta\gamma,\epsilon}}=0 \ ; \ \mbox{\rm (on ${\cal S}_{sh}$)} \ .
$$
Next, we have to take into account \eqref{ELmvf}.
First, the coefficients $F^\epsilon\vert_{{\cal S}_{sh}}=0$,
necessarily (and this is the reason for imposing the vector field to vanish, which is a stronger condition than being inside the distribution).
From the conditions $F_{\alpha\beta}\vert_{{\cal S}_{sh}}=0$, we derive that 
$$
f_{\rho\sigma\mu,\nu}-f_{\rho\sigma\nu,\mu}=0
\quad ; \quad  \mbox{\rm (on ${\cal S}_{sh}$)} \ .
$$
which are new restrictions on the functions
$\Gamma^\alpha_{\beta\gamma,\mu}$, specifically
\bea
i_{\rho\sigma,\mu\nu}&=&g_{\rho\gamma}\Gamma^\gamma_{[\nu\lambda}\Gamma^\lambda_{\mu]\sigma}+g_{\sigma\gamma}\Gamma^\gamma_{[\nu\lambda}\Gamma^\lambda_{\mu]\rho}+g_{\rho\lambda}\Gamma^\lambda_{[\mu\sigma,\nu]}+g_{\sigma\lambda}\Gamma^\lambda_{[\mu\rho,\nu]}+\frac23g_{\rho\sigma}T^\lambda_{\lambda[\mu,\nu]}
\nonumber \\
&=&g_{\rho\lambda}K^\lambda_{[\nu\sigma\mu]}+g_{\sigma\lambda}K^\lambda_{[\nu\rho\mu]}+2g_{\rho\sigma}T^\lambda_{\mu\nu}\Gamma^\gamma_{\gamma\lambda}=0\quad ; \quad \mbox{\rm (on ${\cal S}_{sh}$)} ,
\label{intcons}
\eea
where the functions $K^\alpha_{\beta\gamma\mu}$ arise from proposition \ref{pr:so1}.
(Observe that these constraints are symmetric in the indices $\rho,\sigma$ and skewsymmetric in the indices
$\mu,\nu$).
In a similar way, from the conditions $F^\alpha_{\beta\gamma}\vert_{{\cal S}_{sh}}=0$, we obtain that 
$$
f^\alpha_{\beta\gamma\mu,\nu}-f^\alpha_{\beta\gamma\nu,\mu}=0 
\quad ; \quad \mbox{\rm (on ${\cal S}_{sh}$)}\ ,
$$
which impose some restrictions on the possible solutions, namely:
$$
C_{\beta[\mu\nu]}=\Gamma^\lambda_{[\mu\beta,\nu]}\Gamma^\sigma_{\sigma\lambda}+\Gamma^\lambda_{[\mu\beta}\Gamma^\sigma_{\sigma\lambda,\nu]}
\quad ; \quad \mbox{\rm (on ${\cal S}_{sh}$)}\ ,
$$
$$K^\alpha_{\beta\gamma,[\mu\nu]}=-
\Gamma^\lambda_{[\mu\gamma,\nu]}\Gamma^\alpha_{\beta\lambda}-\Gamma^\lambda_{[\mu\gamma}\Gamma^\alpha_{\beta\lambda,\nu]}-C_{\beta[\mu\nu]}\delta^\alpha_\gamma 
\quad ; \quad \mbox{\rm (on ${\cal S}_{sh}$)}\ .
$$
The coefficients $F_{\alpha\beta,\gamma}$ vanish automatically on ${\cal S}_{sh}$ as long as 
$(f^\alpha_{\beta\gamma\mu,\nu}-f^\alpha_{\beta\gamma\nu,\mu})\vert_{{\cal S}_{sh}}=0$. 
Finally, the conditions $F^\alpha_{\beta\gamma,\epsilon}=0$
lead to a system of PDE on the functions
$C_{\beta\mu\nu},K^{\alpha}_{\beta\gamma,\mu\nu}$
which may originate new constraints.
The tangency conditions on the constraints $i_{\rho\sigma,\mu\nu}$ give
\beann
g_{\alpha\lambda}K^\lambda_{[\nu\beta\mu],\xi}+g_{\beta\lambda}K^\lambda_{[\nu\alpha\mu],\xi}&=&
-2g_{\alpha\beta,\xi}T^\lambda_{\mu\nu}\Gamma^\sigma_{\sigma\lambda}-2g_{\alpha\beta}T^\lambda_{\mu\nu,\xi}\Gamma^\sigma_{\sigma\lambda}-2g_{\alpha\beta}T^\lambda_{\mu\nu}\Gamma^\sigma_{\sigma\lambda,\xi}
\\ & & -g_{\alpha\lambda,\xi}K^\lambda_{[\nu\beta\mu]}-g_{\beta\lambda,\xi}K^\lambda_{[\nu\alpha\mu]} \quad ; \quad \mbox{\rm (on ${\cal S}_{sh}$)} \ .
\eeann

In what follows, we will denote
${\rm j}_f\colon\mathcal{S}_f\hookrightarrow J^1\pi$
the constraint submanifold defined by all the constraints
$c^{\mu\nu}$, $m_{\sigma\rho,\mu}$, $t^\alpha_{\beta\gamma}$,
$r^\alpha_{\beta\gamma,\nu}$ and $i_{\rho\sigma,\mu\nu}$. 
This is the submanifold where there exist  holonomic multivector fields solution to the field equations 
which are tangent to $\mathcal{S}_f$, as it is shown in Proposition \ref{par:int2}.
Notice that ${\cal S}_f$ is a subbundle of $J^1\pi$ over ${\rm E}$ 
and $M$ and, thus, we have the natural submersions
$$
\pi^1_f=\pi^1\circ {\rm j}_f\colon{\cal S}_f\to{\rm E}
\quad , \quad
\overline{\pi}^1_f=\overline{\pi}^1\circ {\rm j}_f\colon{\cal S}_f\to M \ .
$$

\subsection{Symmetries and gauge symmetries}

(See the Appendix \ref{appends} for reviewing
the basic definitions and considerations about symmetries and gauge symmetries
for singular Lagrangian field theories).

\subsubsection{Gauge symmetries of the Einstein-Palatini model}

\begin{prop} 
\label{pr:gv:la}
The natural gauge vector fields for the Einstein-Palatini model are the vector fields 
$X\in\mathfrak{X}(J^1\pi)$ whose local expressions are
$$
X=C_\beta\delta^\alpha_\gamma\frac{\partial}{\partial \Gamma^\alpha_{\beta\gamma}}
+D_\mu C_{\beta}\delta^\alpha_\gamma\frac{\partial}{\partial \Gamma^\alpha_{\beta\gamma,\mu}}
\quad , \quad
C_\beta\in C^\infty(J^1\pi)
\quad ; \quad \mbox{\rm (on ${\cal S}_f$)} \ .
$$
\end{prop}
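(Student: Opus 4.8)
The plan is to read the statement as the conjunction of two requirements from the Appendix: that $X$ be \emph{natural}, i.e.\ the canonical prolongation $X=Z^{(1)}$ to $J^1\pi$ of a $\pi$-vertical field $Z=Z_{\rho\sigma}\,\partial/\partial g_{\rho\sigma}+Z^\alpha_{\beta\gamma}\,\partial/\partial\Gamma^\alpha_{\beta\gamma}$ on $\mathrm E$, and that $X$ be a \emph{gauge symmetry}. Since $\Omega_{\mathcal{L}_{\rm EP}}=-\d\Theta_{\mathcal{L}_{\rm EP}}$ is closed, the latter amounts to $X$ being $\overline{\pi}^1$-vertical, lying in the kernel $\inn(X)\Omega_{\mathcal{L}_{\rm EP}}=0$ along the final constraint submanifold, and tangent to $\mathcal{S}_f$. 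Writing $X=Z^{(1)}$ in coordinates, naturalness forces its components on $g_{\rho\sigma,\mu}$ and $\Gamma^\alpha_{\beta\gamma,\mu}$ to be the total derivatives $D_\mu Z_{\rho\sigma}$ and $D_\mu Z^\alpha_{\beta\gamma}$; hence the whole field is determined by the base components $Z_{\rho\sigma}$, $Z^\alpha_{\beta\gamma}$, and the task reduces to pinning these down.

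First I would impose the kernel condition. Because $\Omega_{\mathcal{L}_{\rm EP}}$ is $\pi^1$-projectable (the remark after \eqref{pc:la}), only $\d g_{\rho\sigma}$, $\d\Gamma^\alpha_{\beta\gamma}$, $\d x^\mu$ occur in it, so the derivative components of $X$ drop out of $\inn(X)\Omega_{\mathcal{L}_{\rm EP}}$, and the contraction splits, exactly as in the derivation of \eqref{eq:fun3}--\eqref{eq:fun5}, into the coefficients of $\d^4x$, of $\d\Gamma^\alpha_{\beta\gamma}\wedge\d^3x_\mu$, and of $\d g_{\rho\sigma}\wedge\d^3x_\mu$. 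The $\d\Gamma^\alpha_{\beta\gamma}\wedge\d^3x_\mu$-coefficient reads $Z_{\rho\sigma}\,\partial L^{\beta\gamma,\mu}_\alpha/\partial g_{\rho\sigma}=0$; contracting it with the inverse tensor $\beth^\alpha_{\beta\gamma,\lambda\zeta\nu}$ of \eqref{eq:inversa} and using $Z_{\rho\sigma}=Z_{\sigma\rho}$ forces $Z_{\rho\sigma}=0$, so the metric part of $X$ vanishes.

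Next I would treat the connection part. With $Z_{\rho\sigma}=0$, the $\d g_{\rho\sigma}\wedge\d^3x_\mu$-coefficient becomes the homogeneous connection equation $Z^\alpha_{\beta\gamma}\,\partial L^{\beta\gamma,\mu}_\alpha/\partial g_{\rho\sigma}=0$, and the $\d^4x$-coefficient becomes $Z^\alpha_{\beta\gamma}\,\partial H/\partial\Gamma^\alpha_{\beta\gamma}=0$. The projective (``trace'') fields $Z^\alpha_{\beta\gamma}=C_\beta\delta^\alpha_\gamma$ solve both: the first because $L^{\beta\gamma,\mu}_\gamma=\varrho(g^{\beta\mu}-g^{\mu\beta})\equiv0$ by \eqref{auxfuns1}, the second because a direct computation gives $\Lie(Z)H=0$ (equivalently $\Lie(X)L_{\rm EP}=\varrho\,g^{\alpha\beta}(D_\alpha C_\beta-D_\beta C_\alpha)=0$, a skew object contracted with the symmetric $g^{\alpha\beta}$). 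To see these are the \emph{only} admissible connection parts I would invoke tangency to $\mathcal{S}_f$: imposing $\Lie(X)t^\alpha_{\beta\gamma}=0$ on the torsion constraints \eqref{torscons} forces the $\beta\gamma$-skew part of $Z^\alpha_{\beta\gamma}$ into trace form, while tangency to the metric and pre-metricity constraints together with the homogeneous connection equation removes the remaining ``torsion-type'' solutions of Proposition \ref{pr:so1}; what survives on $\mathcal{S}_f$ is precisely $Z^\alpha_{\beta\gamma}=C_\beta\delta^\alpha_\gamma$.

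Finally, feeding $Z_{\rho\sigma}=0$ and $Z^\alpha_{\beta\gamma}=C_\beta\delta^\alpha_\gamma$ into the prolongation formula gives $D_\mu Z_{\rho\sigma}=0$ and $D_\mu Z^\alpha_{\beta\gamma}=(D_\mu C_\beta)\delta^\alpha_\gamma$, which is exactly the asserted expression for $X$. The main obstacle is the uniqueness step above: the kernel equations alone leave a large space of connection directions (the torsion-type homogeneous solutions of Proposition \ref{pr:so1}), and ruling all of them out requires carefully combining the contraction identities furnished by $\beth$ with tangency to the \emph{full} list of constraints $c^{\mu\nu}$, $m_{\rho\sigma,\mu}$, $t^\alpha_{\beta\gamma}$, $r^\alpha_{\beta\gamma,\nu}$, $i_{\rho\sigma,\mu\nu}$ defining $\mathcal{S}_f$, the torsion constraints being the decisive ones. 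Everything else is routine index bookkeeping.
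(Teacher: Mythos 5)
Your proposal is correct and follows essentially the same route as the paper: verticality and the kernel condition force the metric component to vanish via the $\beth$-contraction, the connection component splits into trace and torsion solutions of the homogeneous equation \eqref{eq:peg5} (both satisfying \eqref{eq:peg2}), tangency to the torsion constraints $t^\alpha_{\beta\gamma}$ eliminates the torsion solutions, and naturality fixes the derivative components as $D_\mu C_\beta\,\delta^\alpha_\gamma$. The only caveats are presentational: the paper works with ${\rm j}_f^*\inn(X)\Omega_{\Lag_{\rm EP}}$ (so only the coefficient of the symmetric part $\d\Gamma^\alpha_{(\beta\gamma)}$ is available, which still suffices to give $f_{\rho\sigma}=0$), and it carries out explicitly the tangency checks on $c^{\mu\nu}$, $m_{\rho\sigma,\mu}$, $r^\alpha_{\beta\gamma,\nu}$, $i_{\rho\sigma,\mu\nu}$ that you defer as routine.
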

\begin{proof}
Consider a vector field
$$
X=f^\mu\frac{\partial}{\partial x^\mu}+\sum_{\rho\leq\sigma}\left(f_{\rho\sigma}\frac{\partial}{\partial g_{\rho\sigma}}+f_{\rho\sigma,\mu}\frac{\partial}{\partial g_{\rho\sigma,\mu}}\right)+f^\alpha_{\beta\gamma}\frac{\partial}{\partial \Gamma^\alpha_{\beta\gamma}}+f^\alpha_{\beta\gamma,\mu}\frac{\partial}{\partial \Gamma^\alpha_{\beta\gamma,\mu}}
\in\mathfrak{X}(J^1\pi)\ .
$$
As ${\cal S}_f$ is a bundle over $M$,
clearly $X$ is $\overline{\pi}^1_f$-vertical if, and only if, it is $\overline{\pi}^1$-vertical. 
Therefore $\overline{\pi}^1_*X=0$
if, and only if, $f^\mu=0$. Furthermore
\beann
\inn(X)\Omega_{\Lag_{\rm EP}}&=&
\left(\sum_{\rho\leq\sigma}\frac{\partial H}{\partial g_{\rho\sigma}}f_{\rho\sigma}+\frac{\partial H}{\partial \Gamma^\alpha_{\beta\gamma}}f^\alpha_{\beta\gamma}\right)\d^4x-\sum_{\rho\leq\sigma}\frac{\partial L^{\beta\gamma,\mu}_{\alpha}}{\partial g_{\rho\sigma}}f_{\rho\sigma}\d\Gamma^\alpha_{\beta\gamma}\wedge \d^3x_{\mu}
\\ &&
-\sum_{\rho\leq\sigma}\frac{\partial L^{\beta\gamma,\mu}_{\alpha}}{\partial g_{\rho\sigma}}f^\alpha_{\beta\gamma}\d g_{\rho\sigma}\wedge \d^3x_{\mu}=0\ .
\eeann
After doing the pullback ${\rm j}_f^*\inn(X)\Omega_{\Lag_{\rm EP}}$,
we obtain the terms 
$$
{\rm j}_f^*\d \Gamma^\alpha_{\beta\gamma}=\frac12\d \Gamma^\alpha_{(\beta\gamma)}+\frac16\delta^\alpha_\beta \d T^r_{r\gamma}-\frac16\delta^\alpha_\gamma \d T^r_{r\beta} \ .
$$
As every coefficient must vanish, taking
in particular the corresponding to the factor 
$\d\Gamma^\alpha_{(\beta\gamma)}$, we obtain that 
$f_{\rho\sigma}\vert_{{\cal S}_f}=0$. Indeed:
$$
0=\delta_\beta^\alpha(\frac13g_{\mu \nu}g_{\gamma \lambda}-
\frac16g_{\mu\gamma}g_{\nu\lambda})\sum_{\rho\leq\sigma}f_{\rho\sigma} \frac{\partial L_{\alpha}^{(\beta\gamma),\mu}}{\partial g_{\rho\sigma}}=
\sum_{\nu \leq \lambda}(f_{\nu\lambda}+f_{\lambda\nu})\Rightarrow f_{\rho\sigma}=0
\quad ; \quad \mbox{\rm (on ${\cal S}_f$)}\ .
$$
Using these results, the problem is reduced to find $f^\alpha_{\beta\gamma}\in C^\infty(J^1\pi)$
such that
\bea
\label{eq:peg5}
f^\alpha_{\beta\gamma} \frac{\partial L_{\alpha}^{\beta\gamma,\mu}}{\partial g_{\rho\sigma}}&=&0 \quad ; \quad \mbox{\rm (on ${\cal S}_f$)}\ ,
\\ \label{eq:peg2}
f^\alpha_{\beta\gamma} \frac{\partial H}{\partial \Gamma^\alpha_{\beta\gamma} }&=&0
\quad ; \quad \mbox{\rm (on ${\cal S}_f$)}\ .
\eea
Multiplying \eqref{eq:peg5} by $g_{\mu\rho}g_{\nu\sigma}$ we obtain:
$$
f^\alpha_{\beta\gamma}+f^\alpha_{\gamma\beta}=f^r_{r\beta}\delta^\alpha_\gamma+f^r_{r\beta}\delta^\alpha_\gamma+(f^\alpha_{rs}g^{rs}-f^r_{rs}g^{\alpha s})g_{\beta\gamma} 
\quad ; \quad \mbox{\rm (on ${\cal S}_f$)}\ .
$$
This system has two kinds of solutions. First, there are the {\sl trace solutions}, given by 
$f^\alpha_{\beta\gamma}=C^\alpha_{\beta\gamma}=C_\beta\delta^\alpha_\gamma$, 
for any arbitrary function $C_\beta\in C^\infty(J^1\pi)$ \cite{pons}. 
Second, for other solutions $f^\alpha_{\beta\gamma}$, we have that $K^\alpha_{\beta\gamma}=f^\alpha_{\beta\gamma}-C^\alpha_{\beta\gamma}$, 
with $C_\gamma=f^\nu_{\nu\gamma}$. 
Contracting indices $\alpha,\beta$ we obtain $K^\alpha_{\alpha\gamma}=0$. 
Since \eqref{eq:peg5} are linear, $K^\alpha_{\beta\gamma}$ are also solutions, therefore
\beann
K^\alpha_{\beta \gamma}+K^\alpha_{\gamma\beta} = K^\alpha_{\rho\sigma}g^{\rho\sigma}g_{\beta \gamma}\ \Rightarrow \ 
K^\alpha_{\beta \gamma}+K^\alpha_{\gamma\beta}=\frac12(K^\alpha_{\rho\sigma}+K^\alpha_{\sigma\rho})g^{\rho\sigma}g_{\beta \gamma}\quad \Rightarrow & & 
\\
 g^{\beta\gamma}(K^\alpha_{\beta \gamma}+K^\alpha_{\gamma\beta})=2(K^\alpha_{\rho\sigma}+K^\alpha_{\sigma\rho})g^{\rho\sigma}\ \Rightarrow\
-g^{\beta\gamma}(K^\alpha_{\beta \gamma}+K^\alpha_{\gamma\beta})=0
\quad ; & & \mbox{\rm (on ${\cal S}_f$)}\ ,
\eeann
which implies $K^\alpha_{\rho\sigma}g^{\rho\sigma}=0$, thus $K^\alpha_{\beta \gamma}+K^\alpha_{\gamma\beta}=0$.
These are called the {\sl torsion solutions}. 
Both kinds of solutions fulfil \eqref{eq:peg2}; in fact,
\beann
C_\beta\delta^\alpha_{\gamma} \frac{\partial H}{\partial \Gamma^\alpha_{\beta\gamma} }&=&\varrho C_\beta\left(g^{\mu \beta}\Gamma^r_{r\mu}+g^{\mu r}\Gamma^\beta_{\mu r}-g^{r \beta}\Gamma^\mu_{\mu r}-g^{\mu \nu}\Gamma^\beta_{\mu\nu}\right)=0 
\quad ; \quad \mbox{\rm (on ${\cal S}_f$)}\ ;
\\
K^\alpha_{\beta\gamma} \frac{\partial H}{\partial \Gamma^\alpha_{\beta\gamma} }&=&\varrho \left(K^\alpha_{\beta\gamma}(g^{\mu \gamma}\Gamma^\beta_{\alpha\mu}+g^{\mu \beta}\Gamma^\gamma_{\mu\alpha })-K^\alpha_{\beta\gamma}g^{\gamma\beta}\Gamma^\mu_{\mu \alpha}-K^\lambda_{\lambda\gamma}g^{\mu \nu}\Gamma^\gamma_{\mu\nu}\right)
\\
&=&\varrho K^\alpha_{\beta\gamma}g^{\mu \gamma}T^\beta_{\alpha\mu}=\varrho K^\alpha_{\beta\gamma}g^{\mu \gamma}(\frac13\delta^\beta_\alpha T^r_{[r\mu]}-\frac13\delta_\mu^\beta T^r_{[r\alpha]})=0 
\quad ; \quad \mbox{\rm (on ${\cal S}_f$)}\ .
\eeann
Now we impose the tangency condition on the torsion constraints 
$$
0=\Lie(X)t^\alpha_{\beta\gamma}=f^\alpha_{[\beta\gamma]}-\frac13\delta^\alpha_\beta f^r_{[r\gamma]}+\frac13\delta^\alpha_\gamma f^r_{[r\beta]}=2K^\alpha_{\beta\gamma}=0 
\quad ; \quad \mbox{\rm (on ${\cal S}_f$)}\ .
$$
The trace solutions are tangent, but the torsion are not. Before checking the other constraints, let us impose the condition of being natural. 
The local conditions for a $\overline{\pi}^1$-vertical vector field to be natural are that
$f_{\rho\sigma},f^\alpha_{\beta\gamma}$ are $\overline{\pi}^1$-projectable,
that $f_{\rho\sigma,\mu}=D_\mu f_{\rho\sigma}$, and that $f^\alpha_{\beta\gamma,\sigma}=D_\sigma f^\alpha_{\beta\gamma}$. 
In our case, these conditions imply that 
$C_\beta\in C^\infty(J^1\pi)$ are $\overline{\pi}^1$-projectable, 
that $f^\alpha_{\beta\gamma,\mu}\vert_{\mathcal{S}_f}=\delta^\alpha_\gamma D_{\mu}C_\beta$, 
and that $f_{\rho\sigma,\mu}\vert_{\mathcal{S}_f}=~0$.
The tangency condition on the pre-metricity constraints is
\beann{}
0&=&\Lie(X)m_{\rho\sigma,\mu}=L(X)\left(g_{\rho\sigma,\mu}-g_{\sigma\lambda}\Gamma^\lambda_{\mu\rho}-g_{\rho\lambda}\Gamma^\lambda_{\mu\sigma}-\frac{2}{3}g_{\rho\sigma}T^\lambda_{\lambda\mu}\right)
\\
&=&f_{\rho\sigma,\mu}-g_{\sigma\lambda}\delta^\lambda_{\rho}C_\mu-g_{\rho\lambda}\delta^\lambda_{\sigma}C_\mu-\frac{2}{3}g_{\rho\sigma}(C_\lambda\delta^\lambda_\mu-C_\mu\delta^\lambda_\lambda)=0 
\quad ; \quad \mbox{\rm (on ${\cal S}_f$)}\ .
\eeann{}
As $f^\alpha_{\beta\gamma}\vert_{{\cal S}_f}=C_\beta\delta^\alpha_\gamma$, then
$\displaystyle \frac{\partial L_\alpha^{\beta\gamma,\sigma}}{\partial g_{\mu\nu}}f^\alpha_{\beta\gamma,\sigma}=0$
(see  Proposition \ref{pr:so1}), and hence
$$
\Lie(X)c^{\mu\nu}=\frac{\partial \varrho g^{\alpha\beta}}{\partial g_{\mu\nu}}\left(C_{\beta}\Gamma^{\sigma}_{\sigma\alpha}+\Gamma^{\gamma}_{\beta\alpha}C_{\gamma}-C_{\beta}\Gamma^{\sigma}_{\sigma\alpha}-\Gamma^{\gamma}_{\beta\alpha}C_{\gamma}\right)-\frac{\partial L_\alpha^{\beta\gamma,\sigma}}{\partial g_{\mu\nu}}f^\alpha_{\beta\gamma,\sigma}=0
\quad ; \quad \mbox{\rm (on ${\cal S}_f$)}\ .
$$ 
The tangency condition on $r^\alpha_{\beta\gamma,\nu}$ 
involves only the functions $f^\alpha_{\beta\gamma,\nu}$:
$$
0=\Lie(X)r^\alpha_{\beta\gamma,\nu}=f^\alpha_{[\beta\gamma],\nu}-\frac13\delta^\alpha_\beta f^r_{[r\gamma],\nu}+\frac13\delta^\alpha_\gamma f^r_{[r\beta],\nu}
\quad ; \quad \mbox{\rm (on ${\cal S}_f$)}\ .
$$
The trace solutions fulfil this condition automatically. Finally, the tangency condition for the integrability constraints \eqref{intcons} holds:
\beann{}
\Lie(X)i_{\rho\sigma,\mu\nu}
&=&g_{\rho\gamma}C_{[\nu}\Gamma^\lambda_{\mu]\sigma}+g_{\rho\gamma}\Gamma^\lambda_{[\nu\sigma}C_{\mu]}+g_{\sigma\gamma}C_{[\nu}\Gamma^\lambda_{\mu]\rho}+g_{\sigma\gamma}\Gamma^\lambda_{[\nu\rho}C_{\mu]}
\\
&+&g_{\rho\sigma}C_{[\mu\nu]}+g_{\rho\sigma}C_{[\mu\nu]}-2g_{\rho\sigma}C_{[\mu\nu]}=0 
\quad ; \quad \mbox{\rm (on ${\cal S}_f$)}\ .
\eeann{}
\end{proof}

\subsubsection{Lagrangian  symmetries of the Einstein-Palatini model}\label{sec:lag:sym}

Let $F$ be a diffeomorphism in $M$. For every $x\in M$,
if $g_x$ is a metric in $\Tan_xM$,  then
$F_*g_x=(F^{-1})^*(g_x)$ is also a metric with the same signature as $g_x$.
In the same way, as a connection $\Gamma_x$ is a $(1,1)$-tensor in $\Tan_xM$ \cite{EMR-2018}, denoting also by $F_*$
the induced action of $F$ on the tensor algebra, we define:

\begin{definition}
Let $F\colon M\to M$ be a diffeomorphism.
The {\rm canonical lift of $F$ to the bundle ${\rm E}$}
is the diffeomorphism ${\mathcal{F}}\colon E\to {\rm E}$ 
defined as follows: for every $(x,g_x,\Gamma_x)\in E$, then
${\mathcal{F}}(x,g_x,\Gamma_x):=(F(x),F_*g_x,F_*\Gamma_x)$
(Thus $\pi\circ{\mathcal{F}}=F\circ\pi$).

Let $Z\in\vf (M)$.
The {\rm canonical lift of $Z$ to the bundle ${\rm E}$}
is the vector field $Y_Z\in\vf({\rm E})$ whose associated
local one-parameter groups of diffeomorphisms  ${\cal F}_t$
are the canonical lifts to the bundle ${\rm E}$
of the local one-parameter groups of diffeomorphisms $F_t$ of $Z$.
\end{definition}

In coordinates, if
$\displaystyle Z=f^\mu(x)\frac{\partial}{\partial x^\mu}\in\mathfrak{X}(M)$,
the canonical lift of $Z$ to the bundle $E\to M$ is
\beann
Y_Z&=&f^\mu\frac{\partial}{\partial x^\mu}-\sum_{\alpha\leq \beta}\left(\frac{\partial f^\lambda}{\partial x^\alpha}g_{\lambda\beta}+\frac{\partial f^\lambda}{\partial x^\beta}g_{\lambda\alpha}\right)\frac{\partial}{\partial g_{\alpha\beta}}
\\ & &
+\left(\frac{\partial f^\alpha}{\partial x^\lambda}\Gamma^\lambda_{\beta\gamma}-\frac{\partial f^\lambda}{\partial x^\beta}\Gamma^\alpha_{\lambda\gamma}-\frac{\partial f^\lambda}{\partial x^\gamma}\Gamma^\alpha_{\beta\lambda}-\frac{\partial^2f^\alpha}{\partial x^\beta\partial x^\gamma}\right)\frac{\partial}{\partial \Gamma^\alpha_{\beta\gamma}}
\in\vf({\rm E})\  .
\eeann
Furthermore, every diffeomorphism in ${\rm E}$ induces a diffeomorphism in $J^1\pi$. The vector fields generating these transformations are canonical liftings
$X=j^1Y$, for $Y\in\vf(E)$.
Hence, for the above ones we have
\beann
j^1Y_Z&=&
f^\mu\frac{\partial}{\partial x^\mu}-\sum_{\alpha\leq \beta}\left(\frac{\partial f^\lambda}{\partial x^\alpha}g_{\lambda\beta}+\frac{\partial f^\lambda}{\partial x^\beta}g_{\lambda\alpha}\right)\frac{\partial}{\partial g_{\alpha\beta}}
\\ & &
-\sum_{\alpha\leq\beta}\left(\frac{\partial^2f^\nu}{\partial x^\alpha\partial x^\mu}g_{\nu\beta}+\frac{\partial^2f^\nu}{\partial x^\beta\partial x^\mu}g_{\alpha\nu}+\frac{\partial f^\nu}{\partial x^\alpha}g_{\nu\beta,\mu}+\frac{\partial f^\nu}{\partial x^\beta}g_{\alpha\nu,\mu}+\frac{\partial f^\nu}{\partial x^\mu}g_{\alpha\beta,\nu}\right)\frac{\partial}{\partial g_{\alpha\beta,\mu}}
\\ & &
+\left(\frac{\partial f^\alpha}{\partial x^\lambda}\Gamma^\lambda_{\beta\gamma}-\frac{\partial f^\lambda}{\partial x^\beta}\Gamma^\alpha_{\lambda\gamma}-\frac{\partial f^\lambda}{\partial x^\gamma}\Gamma^\alpha_{\beta\lambda}-\frac{\partial^2f^\alpha}{\partial x^\beta\partial x^\gamma}\right)\frac{\partial}{\partial \Gamma^\alpha_{\beta\gamma}}
\\ & &
+\left(\frac{\partial f^\alpha}{\partial x^\lambda}\Gamma^\lambda_{\beta\gamma,\mu}-\frac{\partial f^\lambda}{\partial x^\beta}\Gamma^\alpha_{\lambda\gamma,\mu}-\frac{\partial f^\lambda}{\partial x^\gamma}\Gamma^\alpha_{\beta\lambda,\mu}-\frac{\partial f^\lambda}{\partial x^\mu}\Gamma^\alpha_{\beta\gamma,\lambda}\right.
\\& &
+\left.\frac{\partial^2 f^\alpha}{\partial x^\lambda\partial x^\mu}\Gamma^\lambda_{\beta\gamma}-\frac{\partial^2 f^\lambda}{\partial x^\beta\partial x^\mu}\Gamma^\alpha_{\lambda\gamma}-\frac{\partial^2 f^\lambda}{\partial x^\gamma\partial x^\mu}\Gamma^\alpha_{\beta\lambda}
-\frac{\partial^3f^\alpha}{\partial x^\beta\partial x^\gamma\partial x^\mu}\right)\frac{\partial}{\partial \Gamma^\alpha_{\beta\gamma,\mu}}
\\ &\equiv&
f^\mu\frac{\partial}{\partial x^\mu}+\sum_{\alpha\leq \beta}Y_{\alpha\beta}\frac{\partial}{\partial g_{\alpha\beta}}+\sum_{\alpha\leq\beta}Y_{\alpha\beta\mu}\frac{\partial}{\partial g_{\alpha\beta,\mu}}
+ Y^\alpha_{\beta\gamma}\frac{\partial}{\partial\Gamma^\alpha_{\beta\gamma}}+Y^\alpha_{\beta\gamma\mu}\frac{\partial}{\partial \Gamma^\alpha_{\beta\gamma,\mu}}
\in\vf(J^1\pi)\ .
\eeann

We have that $\mathcal{L}_{\rm EP}$ 
is invariant under diffeomorphisms (using the constraints $c^{\mu\nu}$).
Then, for every $Z\in\mathfrak{X}(M)$, we have that
$\Lie (j^1Y_Z)\mathcal{L}_{\rm EP}\vert_{{\mathcal S}_f}=0$. 
In addition, $j^1Y_Z$ are tangent to ${\cal S}_f$. In fact, as they are natural vector fields
that leave the Einstein-Palatini Lagrangian invariant, 
then the corresponding Euler-Lagrange equations are also invariant,
and hence for the constraints
$c^{\mu\nu}$
we have that
$$
\Lie(j^1Y_Z)c^{\mu\nu}=-\left(\frac{\partial f^\mu}{\partial x^\rho}\delta^\nu_{\sigma}+\frac{\partial f^\nu}{\partial x^\sigma}\delta^\mu_{\rho}\right)\left(\frac{\partial H}{\partial g_{\rho\sigma}}-\frac{\partial L^{\beta\gamma,\lambda}_\alpha}{\partial g_{\rho\sigma}}\Gamma^{\alpha}_{\beta\gamma,\lambda}\right)=0 \quad ; \quad \mbox{\rm (on ${\cal S}_f$)}\ ;
$$
while for the other constraints, after a long calculation, we obtain
\beann{}
\Lie(j^1Y_Z)m_{\rho\sigma,\mu}= 
\left(-\frac{\partial f^\alpha}{\partial x^\rho}\delta^\beta_{\sigma}\delta^\nu_{\mu}-\frac{\partial f^\beta}{\partial x^\sigma}\delta^\alpha_{\rho}\delta^\nu_{\mu}-\frac{\partial f^\nu}{\partial x^\mu}\delta^\alpha_{\rho}\delta^\beta_{\sigma}\right)m_{\alpha\beta,\nu} =0;
\ \mbox{\rm (on ${\cal S}_f$)} \ ,
\\
\Lie(j^1Y_Z)t^\alpha_{\beta\gamma}= \left(\frac{\partial f^\alpha}{\partial x^\lambda}\delta^\rho_{\beta}\delta^\sigma_{\gamma}-\frac{\partial f^\rho}{\partial x^\beta}\delta^\alpha_{\lambda}\delta^\sigma_{\gamma}-\frac{\partial f^\sigma}{\partial x^\gamma}\delta^\alpha_{\lambda}\delta^\rho_{\beta}\right)t^\lambda_{\rho\sigma}=0;
\ \mbox{\rm (on ${\cal S}_f$)} \ ,
\\
\Lie(j^1Y_Z)r^\alpha_{\beta\gamma,\nu}= \left(\frac{\partial f^\alpha}{\partial x^\lambda}\delta^\rho_{\beta}\delta^\sigma_{\gamma}\delta^\tau_{\nu}-\frac{\partial f^\rho}{\partial x^\beta}\delta^\alpha_{\lambda}\delta^\sigma_{\gamma}\delta^\tau_{\nu}-\frac{\partial f^\sigma}{\partial x^\gamma}\delta^\alpha_{\lambda}\delta^\rho_{\beta}\delta^\tau_{\nu}-\frac{\partial f^\tau}{\partial x^\nu}\delta^\alpha_{\lambda}\delta^\rho_{\beta}\delta^\sigma_{\gamma}\right)r^\lambda_{\rho\sigma,\tau}=0;
\ \mbox{\rm (on ${\cal S}_f$)} \ ,
\\
\Lie(j^1Y_Z)i_{\rho\sigma,\mu\nu}= 
\left(-\frac{\partial f^\alpha}{\partial x^\rho}\delta^\beta_{\sigma}\delta^\lambda_{\mu}\delta^\gamma_\nu-\frac{\partial f^\beta}{\partial x^\sigma}\delta^\alpha_{\rho}\delta^\lambda_{\mu}\delta^\gamma_\nu-\frac{\partial f^\lambda}{\partial x^\mu}\delta^\alpha_{\rho}\delta^\beta_{\sigma}\delta^\gamma_\nu-\frac{\partial f^\gamma}{\partial x^\nu}\delta^\alpha_{\rho}\delta^\beta_{\sigma}\delta^\lambda_\mu\right)i_{\alpha\beta,\lambda\gamma} =0;
\ \mbox{\rm (on ${\cal S}_f$)} \ .
\eeann{}
Thus, these vector fields are natural infinitesimal Lagrangian symmetries 
and, hence, natural infinitesimal Noether symmetries. 
Then an associated conserved quantity to each $j^1Y_Z$ is
$\xi_{Y_Z}=\inn(j^1Y_Z)\Theta_{\mathcal{L}_{\rm EP}}$ 
(see the Appendix \ref{appends}), which has the local expression:
$$
\xi_{Y_Z}=\inn(j^1Y)\Theta_{\mathcal{L}_{\rm EP}}=
(L^{\beta\gamma,\mu}_\alpha Y^\alpha_{\beta\gamma}-Hf^\mu)\d^3x_\mu+f^\mu L_\alpha^{\beta\gamma,\nu}\d \Gamma^\alpha_{\beta\gamma}\wedge\d^2 x_{\mu\nu} \ .
$$
Finally, given a section $\psi_\Lag$ solution the field equations, 
the Noether current associated with $j^1Y_Z$ is
$$
\psi_\Lag^*\xi_{Y_Z}=
\psi_\Lag^*(L^{\beta\gamma,\mu}_\alpha(Y^\alpha_{\beta\gamma}-\Gamma^\alpha_{\beta\gamma,\lambda}f^\lambda)-f^\mu L_{\rm EP})\d^3x_\mu \ .
$$

\noindent {\bf Comment}:
The term ``gauge'' is also used in physics to refer 
the invariance of the equations 
with respect to changes of variables in the base manifold $M$.
Nevertheless, in our geometric formalism, 
these are really the natural symmetries that we have studied in this Section,
and they are mathematically different from the geometric gauge symmetries 
that we have analysed in the previous Section.

\section{The Metric-Affine model: Hamiltonian formalism}
\label{Sec4}

\subsection{Canonical Hamiltonian formalism}

(See, for instance, \cite{CCI-91,LMM-96,EMR-99b,art:Roman09} 
for the general setting of the multisymplectic Hamiltonian formalism for first-order field theories).

First, let ${\cal M}\pi\equiv\Lambda_2^m\Tan^*E$,
be the bundle of $m$-forms on
$E$ vanishing by the action of two $\pi$-vertical vector fields,
which is called the {\sl extended multimomentum bundle},
and has local coordinates $(x^\mu,\,g_{\alpha\beta},\Gamma^\nu_{\lambda\gamma},\,p^{\alpha\beta,\mu},
\,p^{\lambda\gamma,\mu}_\nu,p)$, $(0\leq\alpha\leq\beta\leq 3)$.
Consider the quotient bundle $J^1\pi^* = {\cal M}\pi / \Lambda^4_1(\Tan^*{\rm E})$
(where $\Lambda_1^4(\Tan^*{\rm E})$ is the bundle of $\pi$-semibasic $4$-forms in ${\rm E}$),
which is the \textsl{restricted multimomentum bundle} of ${\rm E}$, 
and is endowed with the natural projections
 \[
  \tau\colon \ J^1\pi^*\to{\rm E}\qquad , \quad
 \overline{\tau}=\pi\circ\tau\colon \ J^1\pi^*\to M\qquad , \quad
\mu\colon{\cal M}\pi\to J^1\pi^*   .
  \]
Induced local coordinates in $J^1\pi^* $ are
$(x^\mu,\,g_{\alpha\beta},\Gamma^\nu_{\lambda\gamma},\,p^{\alpha\beta,\mu},
\,p^{\lambda\gamma,\mu}_\nu)$, $(0\leq\alpha\leq\beta\leq 3)$.

The {\sl Legendre map} $\mathcal{FL}_{\rm EP}\colon J^1\pi\longrightarrow J^1\pi^*$
(see \cite{EMR-00a} for the definition) is given, for the Einstein-Palatini Lagrangian, by
\bea
\mathcal{FL}_{\rm EP}^{\quad *}\,x^\mu=x^\mu &,&
\mathcal{FL}_{\rm EP}^{\quad *}\,g_{\alpha\beta}=g_{\alpha\beta}\quad ,\quad 
\mathcal{FL}_{\rm EP}^{\quad *}\,\Gamma^\alpha_{\beta\gamma}=\Gamma^\alpha_{\beta\gamma}
\nonumber \\
\ \mathcal{FL}_{\rm EP}^{\quad *}\,p^{\alpha\beta,\mu}=\frac{\partial L_{\rm EP}}{\partial g_{\alpha\beta,\mu}}=0 &,& 
\mathcal{FL}_{\rm EP}^{\quad *}\,p_\alpha^{\beta\gamma,\mu}=
\frac{\partial L_{\rm EP}}{\partial \Gamma^\alpha_{\beta\gamma,\mu}}=
L^{\beta\gamma,\mu}_{\alpha}=\varrho(\delta_\alpha^\mu g^{\beta\gamma}-\delta_\alpha^\beta g^{\mu\gamma}) \ ,
\label{Legtrans}
\eea
and $p^{\alpha\beta,\mu}$ and $p_\alpha^{\beta\gamma,\mu}$ are called
the {\sl momentum coordinates of the metric} and {\sl the connection}, respectively.
We have that, for every $j^1_x\phi\in J^1\pi$,
$$
\Tan_{j^1_x\phi}\mathcal{FL}_{\rm EP}=\left( \begin{array}{ccccc}
1 & 0 & 0 & 0 & 0 \\
0 & 1 & 0 & 0 & 0 \\
0 & 0 & 1 & 0 & 0  \\
0 & 0 & 0 & 0 & 0  \\
0 &
\displaystyle\frac{\partial^2 L_{\rm EP}}{\partial g_{\nu\lambda}\partial \Gamma^\alpha_{\beta\gamma,\mu}} & 
0 &
0  &
0
\end{array} \right) \ .
$$
Locally we have that
\beq
\ker\,(\mathcal{FL}_{\rm EP})_*={\left<\derpar{}{g_{\alpha\beta,\mu}},
\derpar{}{\Gamma_{\lambda\gamma,\mu}^\nu}\right>}_{0\leq\alpha\leq\beta\leq3}\ .
\label{kerfl}
\eeq

\begin{prop} 
\label{PdifE}
$\mathcal{P}\equiv\mathcal{FL}_{\rm EP}(J^1\pi)$ is a closed submanifold of $J^1\pi^*$, 
which is diffeomorphic to ${\rm E}$.
\end{prop}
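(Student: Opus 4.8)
The plan is to recognise $\mathcal{P}$ as the image of a global section of the projection $\tau\colon J^1\pi^*\to{\rm E}$, and then to invoke the elementary fact that a smooth section of a fibre bundle is a closed embedding whose image is diffeomorphic to the base.

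First I would observe, from the coordinate expression \eqref{Legtrans}, that none of the components of $\mathcal{FL}_{\rm EP}$ depends on the jet (derivative) coordinates $g_{\alpha\beta,\mu}$, $\Gamma^\nu_{\lambda\gamma,\mu}$: the base components $x^\mu,g_{\alpha\beta},\Gamma^\alpha_{\beta\gamma}$ are carried identically, the metric momenta vanish, $p^{\alpha\beta,\mu}=0$, and the connection momenta $p_\alpha^{\beta\gamma,\mu}=L^{\beta\gamma,\mu}_\alpha=\varrho(\delta_\alpha^\mu g^{\beta\gamma}-\delta_\alpha^\beta g^{\mu\gamma})$ are functions of $g$ alone. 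This is precisely the content of \eqref{kerfl}, namely that $\ker(\mathcal{FL}_{\rm EP})_*$ is the vertical bundle of $\pi^1$. Since $\pi^1$ is a surjective submersion with connected (affine) fibres and $\mathcal{FL}_{\rm EP}$ is constant along them, it is $\pi^1$-projectable: there is a unique smooth map $\widehat{\mathcal{FL}}\colon{\rm E}\to J^1\pi^*$ with $\mathcal{FL}_{\rm EP}=\widehat{\mathcal{FL}}\circ\pi^1$, given in coordinates by $(x^\mu,g_{\alpha\beta},\Gamma^\nu_{\lambda\gamma})\mapsto(x^\mu,g_{\alpha\beta},\Gamma^\nu_{\lambda\gamma},0,L^{\beta\gamma,\mu}_\alpha(g))$.

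Next I would note that, by construction, $\tau\circ\widehat{\mathcal{FL}}=\mathrm{id}_{\rm E}$, so $\widehat{\mathcal{FL}}$ is a global section of the fibre bundle $\tau\colon J^1\pi^*\to{\rm E}$. A continuous section of a fibre bundle with Hausdorff fibres is automatically a closed embedding: it is injective and an immersion because $\tau_*\circ(\widehat{\mathcal{FL}})_*=\mathrm{id}$, it is a homeomorphism onto its image because $\tau$ restricted to that image is a continuous inverse, and its image is closed, since any point off the image can be separated from the section's value in its own fibre and, by local triviality, this separation extends to a neighbourhood disjoint from the image. As $\mathrm{Im}\,\mathcal{FL}_{\rm EP}=\mathrm{Im}\,\widehat{\mathcal{FL}}=\mathcal{P}$, this yields at once that $\mathcal{P}$ is a closed submanifold of $J^1\pi^*$ and that $\widehat{\mathcal{FL}}\colon{\rm E}\to\mathcal{P}$ is a diffeomorphism.

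No genuinely hard step is involved; the only point requiring care is the closedness of $\mathcal{P}$. If one prefers to bypass the general section argument, $\mathcal{P}$ can instead be described directly as the common zero locus of the globally defined smooth functions $p^{\alpha\beta,\mu}$ and $p_\alpha^{\beta\gamma,\mu}-L^{\beta\gamma,\mu}_\alpha(g)$ on $J^1\pi^*$; these are smooth on all of $J^1\pi^*$ because $\varrho$ and $g^{\beta\gamma}$ are smooth on the manifold $\Sigma$ of Lorentzian metrics. This locus is closed, and since the momenta $p^{\alpha\beta,\mu}$, $p_\alpha^{\beta\gamma,\mu}$ belong to a coordinate system their differentials are independent, so the constraints are regular and cut out an embedded submanifold; the projection $\tau$ then identifies it diffeomorphically with ${\rm E}$, whose dimension $\dim{\rm E}=78$ is exactly that of $\mathcal{P}$.
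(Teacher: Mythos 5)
Your proposal is correct and follows essentially the same route as the paper: the paper's proof simply observes that $\mathcal{P}$ is cut out by the constraints \eqref{hamcons}, which fix the momenta as functions of $(x^\mu,g_{\alpha\beta},\Gamma^\nu_{\lambda\gamma})$ and thereby remove the fibre degrees of freedom of $\tau$, which is exactly the content of your section/graph argument. You merely make explicit the closedness and embedding details that the paper leaves implicit.
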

\begin{proof}
From \eqref{kerfl} we have that
$\mathcal{P}$ is locally defined by the constraints
\beq
p^{\alpha\beta,\mu}=0\quad , \quad 
p^{\beta\gamma,\mu}_\alpha=\varrho(\delta_\alpha^\mu g^{\beta\gamma}-\delta_\alpha^\beta g^{\mu\gamma}) \ ,
\label{hamcons}
\eeq
which remove the degrees of freedom in the fibers of the projection
$\tau$.
\end{proof}

If $\jmath\colon{\cal P}{\hookrightarrow} J^1\pi^*$
is the natural embedding, we denote by 
$$
\tau_\mathcal{P}=\tau\circ\jmath\colon\mathcal{P}\to E
\quad , \quad
\overline{\tau}_\mathcal{P}=\overline{\tau}\circ\jmath\colon\mathcal{P}\to M
$$
the restrictions to ${\cal P}$ of the natural projections
$\tau$ and $\overline{\tau}$.
Then, this Proposition states that $\tau_{\cal P}$ is a diffeomorphism.

\begin{prop} 
\label{PdifE2}
$\mathcal{L}_{\rm EP}$ is an {\sl almost-regular Lagrangian density}. 
\end{prop}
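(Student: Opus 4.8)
The plan is to verify directly the three defining conditions of an almost-regular Lagrangian density: (i) that $\mathcal{P} = \mathcal{FL}_{\rm EP}(J^1\pi)$ is a closed submanifold of $J^1\pi^*$; (ii) that the Legendre map $\mathcal{FL}_{\rm EP}$ is a submersion onto $\mathcal{P}$; and (iii) that the fibres of $\mathcal{FL}_{\rm EP}$ are connected submanifolds of $J^1\pi$. Condition (i) is exactly the statement of Proposition \ref{PdifE}, so I would simply invoke it.

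For (ii) and (iii), the key observation I would use is that $\mathcal{FL}_{\rm EP}$ factors through the jet projection $\pi^1 \colon J^1\pi \to \mathrm{E}$. From the explicit expressions \eqref{Legtrans}, the momentum components $p^{\alpha\beta,\mu}$ and $p^{\beta\gamma,\mu}_\alpha$ in the image depend only on the coordinates $(x^\mu, g_{\alpha\beta}, \Gamma^\alpha_{\beta\gamma})$ of $\mathrm{E}$, while these coordinates themselves are carried over identically. Writing $\mathcal{FL}_{\rm EP}^{0}\colon J^1\pi \to \mathcal{P}$ for the corestriction of the Legendre map to its image, I would check that $\tau_{\mathcal P} \circ \mathcal{FL}_{\rm EP}^{0}$ sends $j^1_x\phi$ to $(x^\mu, g_{\alpha\beta}, \Gamma^\alpha_{\beta\gamma}) = \pi^1(j^1_x\phi)$, so that $\tau_{\mathcal P} \circ \mathcal{FL}_{\rm EP}^{0} = \pi^1$. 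Since $\tau_{\mathcal P}$ is a diffeomorphism by Proposition \ref{PdifE}, this yields the factorization $\mathcal{FL}_{\rm EP}^{0} = \tau_{\mathcal P}^{-1} \circ \pi^1$.

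Both remaining conditions then follow immediately. As $\pi^1$ is a surjective submersion (it is a fibre-bundle projection) and $\tau_{\mathcal P}^{-1}$ a diffeomorphism, $\mathcal{FL}_{\rm EP}^{0}$ is a surjective submersion onto $\mathcal{P}$, which gives (ii). For (iii), I would note that by \eqref{kerfl} the distribution $\ker(\mathcal{FL}_{\rm EP})_*$ coincides with the $\pi^1$-vertical distribution $\langle \partial/\partial g_{\alpha\beta,\mu}, \partial/\partial \Gamma^\nu_{\lambda\gamma,\mu} \rangle$, so the fibres of $\mathcal{FL}_{\rm EP}$ are precisely the fibres of $\pi^1$; these are the fibres of the affine bundle $J^1\pi \to \mathrm{E}$, hence affine spaces and, in particular, connected.

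I do not expect any genuine obstacle here, since the result is essentially a corollary of Proposition \ref{PdifE}. The only point deserving care is establishing the factorization $\mathcal{FL}_{\rm EP}^{0} = \tau_{\mathcal P}^{-1} \circ \pi^1$, as this is what reduces the submersion and connected-fibre properties of the Legendre map to the corresponding standard properties of the jet-bundle projection $\pi^1$.
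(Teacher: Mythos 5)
Your proposal is correct and reaches the same conclusion by a mildly different route. The paper's proof verifies the three conditions directly: condition (i) by citing Proposition \ref{PdifE}, condition (ii) by a rank computation ($\dim\mathcal{P}={\rm rank}(\Tan_{j^1_x\phi}\mathcal{FL}_{\rm EP})=78$ at every point, hence a submersion onto the image), and condition (iii) by observing — as you do — that the fibres of the Legendre map are the fibres of $\pi^1$. What you do differently is to package conditions (ii) and (iii) together via the factorization $\mathcal{FL}_{\rm EP}^{0}=\tau_{\mathcal P}^{-1}\circ\pi^1$, which follows from \eqref{Legtrans} since the momenta in the image depend only on the point of ${\rm E}$ (equivalently, $\tau\circ\mathcal{FL}_{\rm EP}=\pi^1$, a general property of Legendre maps). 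This buys a slightly cleaner argument: the submersion property and the identification of the fibres with the affine (hence connected) fibres of $\pi^1\colon J^1\pi\to{\rm E}$ both drop out of the single factorization, with no need to compute the rank of the tangent map explicitly. The paper's rank computation, on the other hand, is more self-contained in that it does not presuppose that $\tau_{\mathcal P}$ has already been shown to be a diffeomorphism, only that $\dim\mathcal{P}=78$. Both arguments lean on Proposition \ref{PdifE} in an essential way, and there is no gap in yours.
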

\begin{proof}
We prove the three conditions that define this concept:
First, as we have seen, $\mathcal{P}$ is a closed submanifold of $J^1\pi^*$.
Second, as $\dim\,\mathcal{P}={\rm rank}(\Tan_{j^1_x\phi}\mathcal{FL}_{\rm EP})=78$, 
for every $j^1_x\phi\in J^1\pi$,  then $\mathcal{FL}_{\rm EP}$ is a submersion
onto its image.
Finally, taking into account Proposition \ref{PdifE}, 
we conclude that the fibers of the Legendre map,
$(\mathcal{FL}_{\rm EP})^{-1}(\mathcal{FL}(j^1_x\phi))$,
are just the fibers of the projection $\pi^1$, and they 
are connected submanifolds of $J^1\pi$ (recall that $J^1\pi$ 
is connected because we are considering metrics with fixed signature).
\end{proof}

As a consequence of this Proposition, 
the existence of the Hamiltonian formalism for this system is assured.
In fact; consider the so-called {\sl extended Legendre map}
 \cite{CCI-91,art:Roman09},
$\widetilde{\mathcal{FL}}_{\rm EP}\colon J^1\pi\longrightarrow {\cal M}\pi$, which is locally given by 
\bea
\widetilde{\mathcal{FL}}_{\rm EP}^{\quad *}p &=&
L_{\rm EP}-g_{\alpha\beta,\mu}\,\frac{\partial L_{\rm EP}}{\partial g_{\alpha\beta,\mu}}-
\Gamma^\alpha_{\beta\gamma,\mu}\,\frac{\partial L_{\rm EP}}{\partial \Gamma^\alpha_{\beta\gamma,\mu}} 
\nonumber\\ &=&
L_{\rm EP}-\Gamma^{\alpha}_{\beta\gamma,\mu}L^{\beta\gamma,\mu}_{\alpha}=
-H=
\varrho g^{\alpha\beta}\left(\Gamma^{\gamma}_{\beta\alpha}\Gamma^{\sigma}_{\sigma\gamma}-\Gamma^{\gamma}_{\beta\sigma}\Gamma^{\sigma}_{\gamma\alpha}\right)\ ,
\label{consp}
\eea
and the same expressions as in \eqref{Legtrans} for the other coordinates.
Let
$\widetilde{\cal P}:=\widetilde{\mathcal{FL}}_{\rm EP}(J^1\pi)$ and
$\tilde\jmath\colon\widetilde{\cal P}\hookrightarrow{\cal M}\pi$
the natural imbedding, and denote by 
$\widetilde{\mathcal{FL}}_{\rm EP}^o$ and $\mathcal{FL}_{\rm EP}^o$
the restrictions of $\widetilde{\mathcal{FL}}_{\rm EP}$ and ${\cal F}\Lag_{\rm EP}$
to their images; that is, the maps defined by
$\widetilde{\mathcal{FL}}_{\rm EP} = \tilde\jmath \circ\widetilde{\mathcal{FL}}_{\rm EP}^o$ and
$\mathcal{FL}_{\rm EP} = \jmath \circ\mathcal{FL}_{\rm EP}^o$,
respectively.
It can be proved \cite{LMM-96} that the $\mu$-transverse submanifold
$\tilde{\cal P}$ is dif\/feomorphic to ${\cal P}$
(observe that \eqref{consp} is really a constraint in ${\cal M}\pi$).
 This diffeomorphism is denoted
 $\tilde\mu\colon\widetilde{\cal P}\to{\cal P}$,
 and it is just the restriction of the projection $\mu$
 to $\widetilde{\cal P}$.
 Then, taking $H_{\cal P}:=\tilde\mu^{-1}$, we have the diagram
 \[
 \begin{array}{cccc}
\begin{picture}(20,52)(0,0)
\put(0,0){\mbox{$J^1\pi$}}
\end{picture}
&
\begin{picture}(65,52)(0,0)
 \put(7,28){\mbox{$\widetilde{\mathcal{FL}}_{\rm EP}^o$}}
 \put(24,7){\mbox{$\mathcal{FL}_{\rm EP}^o$}}
 \put(0,7){\vector(2,1){65}}
 \put(0,4){\vector(1,0){65}}
\end{picture}
&
\begin{picture}(90,52)(0,0)
 \put(5,0){\mbox{${\mathcal P}$}}
 \put(5,42){\mbox{$\widetilde{\mathcal P}$}}
 \put(5,13){\vector(0,1){25}}
 \put(10,38){\vector(0,-1){25}}
 \put(-12,22){\mbox{$H_{\mathcal P}$}}
 \put(12,22){\mbox{$\tilde\mu$}}
 \put(30,45){\vector(1,0){55}}
 \put(30,4){\vector(1,0){55}}
 \put(48,12){\mbox{$\jmath$}}
 \put(48,33){\mbox{$\tilde\jmath$}}
 \end{picture}
&
\begin{picture}(15,52)(0,0)
 \put(0,0){\mbox{$J^1\pi^*$}}
 \put(0,41){\mbox{${\mathcal M}\pi$}}
 \put(10,38){\vector(0,-1){25}}
 \put(0,22){\mbox{$\mu$}}
\end{picture}
 \end{array}
 \]
and $\tilde\jmath\circ H_{\mathcal P}$ is called a {\sl Hamiltonian section}.
As ${\cal M}\pi$ is a subbundle of $\Lambda^m\Tan^*E$,
it is endowed with a canonical form
$\Theta\in\df^4({\cal M}\pi)$ (the ``tautological form''),
and a canonical multisymplectic form
 $\Omega:=-\d\Theta\in\df^5({\cal M}\pi)$,
which are known as the {\sl multimomentum Liouville forms}.
Then we def\/ine the Hamilton--Cartan forms
\[
\Theta_H=(\tilde\jmath\circ H_{\cal P})^*\Theta\in\df^4({\cal P}) 
\quad ,\quad
\Omega_H=
 -\d\Theta_H=(\tilde\jmath\circ H_{\cal P})^*\Omega\in\df^5({\cal P})  .
 \]
In general, $\Omega_h^0$ is a pre-multisymplectic form.
The Poincar\'e-Cartan forms are $\mathcal{FL}_{\rm EP}^o$-projectable and, in particular,
$\Theta_{\Lag_{\rm EP}}={\mathcal{FL}^o_{\rm EP}}^*\, \Theta_H$ and
$\Omega_{\Lag_{\rm EP}}={\mathcal{FL}^o_{\rm EP}}^*\, \Omega_H$,

In this way we have constructed the Hamiltonian system 
$({\cal P},\Omega_H)$, which is associated with the
almost-regular Lagrangian system .
Then, the variational problem associated with this system  \cite{GPR-2017,art:Roman09}
consists in finding sections $\psi_H\colon M\rightarrow\mathcal{P}$
which are solutions to the equation
$$
\psi_H^*i(X)\Omega_H=0 \quad, \quad 
\mbox{\rm for every $X \in\mathfrak{X}(\mathcal{P})$}\ .
$$
or, what is equivalent, which are integral sections of a multivector field contained 
in a class of $\overline{\tau}_{\cal P}$-transverse integrable multivector fields 
$\{{\bf X}_H\}\subset\mathfrak{X}^4(\mathcal{P})$ such that
\beq{}
\label{eq:hvf}
\inn{({\bf X}_H)}\Omega_H=0 \quad ,\quad
\forall {\bf X}_H\in\{{\bf X}_H\}\subset\mathfrak{X}^4(\mathcal{P}) \ .
\eeq

In order to do a local analysis of the Hamiltonian formalism for this system,
we can use two kinds of coordinates on $\mathcal{P}$:
the so-called {\sl non-momenta} and {\sl pure connection coordinates}.

\subsection{Non-momenta coordinates}
\label{nmc}

Bearing in mind Proposition \ref{PdifE}, we can take
$(x^\lambda,g_{\rho\sigma},\Gamma^\alpha_{\beta\gamma})$ as local coordinates in ${\cal P}$, 
with $0\leq\rho\leq\sigma\leq  3$.
These are the {\sl non-momenta coordinates} of $\mathcal{P}$. 
Using them, the local expression of $\Omega_H$ is the same as that of $\Omega_{\Lag_{\rm EP}}$ 
(see \eqref{pc:la}). 
As a consequence, the Hamiltonian analysis of the system is similar to that in the Lagrangian formalism 
(up to the analysis of the holonomy). 

Note that the functions $L^{\beta\gamma,\mu}_{\alpha}$ and $H$
introduced in \eqref{auxfuns1} and \eqref{auxfuns2} are also
 $\mathcal{FL}_{\rm EP}^o$-projectable and, hence, 
we commit an abuse of notation denoting the corresponding functions of $\Cinfty({\cal P})$ with the same simbols.
Then, for a $\overline{\tau}_{\cal P}$-transverse multivector field ${\bf X}\in\mathfrak{X}^4(\mathcal{P})$,
whose local expression in these coordinates is
$$
\mathbf{X}=\bigwedge_{\nu=0}^3 X_\nu=\bigwedge_{\nu=0}^3\left(
\frac{\partial}{\partial x^\nu}+\sum_{\rho\leq\sigma}f_{\rho\sigma,\nu}\frac{\partial}{\partial g_{\rho\sigma}}+f^\alpha_{\beta\gamma,\nu}\frac{\partial}{\partial \Gamma^\alpha_{\beta\gamma}}
\right) \ ,
$$
the local expression of equation \eqref{eq:hvf} is
\bea
\label{eq:ha1}
\frac{\partial H}{\partial g_{\rho\sigma}}-f^\alpha_{\beta\gamma,\mu} \frac{\partial L_{\alpha}^{\beta\gamma,\mu}}{\partial g_{\rho\sigma}}&=&0,
\\ \label{eq:ha2}
\frac{\partial H}{\partial \Gamma^\alpha_{\beta\gamma} } +\sum_{\rho\leq\sigma}f_{\rho\sigma,\mu} \frac{\partial L_{\alpha}^{\beta\gamma,\mu}}{\partial g_{\rho\sigma}}&=&0,
\eea
together with other equalities which are consequence of these two sets of equations. 
This system of equations is the same as \eqref{eq:fun4} and \eqref{eq:fun5} and,
therefore, the analysis made in Section \ref{se:lanh} is valid here. 

\begin{prop}
[Constraints]
A necessary condition
for the existence of solutions to the system of equations \eqref{eq:ha1} and \eqref{eq:ha2}
(and, in particular, \eqref{eq:ha1})
is that the following equalities hold
$$
T^\alpha_{\beta\gamma}=
\frac13\delta^\alpha_\beta T^\nu_{\nu\gamma}-\frac13\delta^\alpha_\gamma T^\nu_{\nu\beta}\ .
$$
These constraints define the submanifold $\jmath_f\colon\mathcal{P}_f\hookrightarrow \mathcal{P}$.
\label{subham}
\end{prop}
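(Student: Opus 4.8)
The plan is to exploit the fact, already recorded above, that in the non-momenta coordinates $(x^\lambda,g_{\rho\sigma},\Gamma^\alpha_{\beta\gamma})$ the Hamilton--Cartan form $\Omega_H$ has exactly the same local expression \eqref{pc:la} as the Poincar\'e--Cartan form $\Omega_{\Lag_{\rm EP}}$. Consequently the Hamiltonian field equations \eqref{eq:ha1} and \eqref{eq:ha2} coincide, coefficient by coefficient, with the Lagrangian equations \eqref{eq:fun4} and \eqref{eq:fun5}; in particular, \eqref{eq:ha2} is identical to the metric equations \eqref{eq:fun5}. Thus the whole compatibility analysis of Section \ref{se:lanh} transfers verbatim, and it suffices to reproduce in this setting the obstruction argument that led to the torsion constraints there.

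Concretely, I would contract the metric equations \eqref{eq:ha2} with the auxiliary functions $\beth^\alpha_{\beta\gamma,\lambda\zeta\nu}$ of \eqref{eq:inversa}, which satisfy $\frac{\partial L_\alpha^{\beta\gamma,\mu}}{\partial g_{\rho\sigma}}\beth^\alpha_{\beta\gamma,\lambda\zeta\nu}=\frac{n(\rho\sigma)}{2}(\delta_\nu^\mu\delta_\zeta^\sigma\delta_\lambda^\rho+\delta_\nu^\mu\delta_\lambda^\sigma\delta_\zeta^\rho)$. This contraction turns \eqref{eq:ha2} into the equation $\beth^\alpha_{\beta\gamma,\lambda\zeta\nu}\frac{\partial H}{\partial\Gamma^\alpha_{\beta\gamma}}+\frac12(f_{\lambda\zeta,\nu}+f_{\zeta\lambda,\nu})=0$ for the unknown coefficients $f_{\lambda\zeta,\nu}$. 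Since the metric is symmetric, these coefficients are symmetric in $\lambda,\zeta$; hence antisymmetrizing in $\lambda\leftrightarrow\zeta$ cancels all the unknowns and leaves the purely geometric obstruction $A_{\alpha\beta\gamma}=0$, that is, the torsion constraints \eqref{torscond}.

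Finally, I would invoke Proposition \ref{pr:toco} to rewrite \eqref{torscond} in the equivalent form \eqref{torscons}, namely $T^\alpha_{\beta\gamma}=\frac13\delta^\alpha_\beta T^\nu_{\nu\gamma}-\frac13\delta^\alpha_\gamma T^\nu_{\nu\beta}$, which is precisely the asserted necessary condition; its zero set is by definition the submanifold $\jmath_f\colon\mathcal{P}_f\hookrightarrow\mathcal{P}$. I do not expect a genuine obstacle here: because the governing forms agree identically in these coordinates, the computation is a transcription of the Lagrangian case. The only point that deserves care is to confirm that the coordinate identification furnished by the diffeomorphism $\tau_{\cal P}\colon\mathcal{P}\to{\rm E}$ (Proposition \ref{PdifE}) really does make $\Omega_H$ and $\Omega_{\Lag_{\rm EP}}$ share the expression \eqref{pc:la}, which is what legitimizes transferring the argument.
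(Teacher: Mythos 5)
Your proposal is correct and follows essentially the same route as the paper: the authors' proof of Proposition \ref{subham} simply states that it is identical to the proofs of the torsion-constraint propositions in the Lagrangian setting (the $\beth$-contraction and antisymmetrization argument, followed by Proposition \ref{pr:toco}), which is exactly what you reconstruct after noting that $\Omega_H$ in non-momenta coordinates shares the local expression \eqref{pc:la} with $\Omega_{\Lag_{\rm EP}}$. The paper additionally remarks that these constraints are the projections of the Lagrangian torsion constraints by the Legendre map, but the substance of the argument is the one you give.
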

\begin{proof}
The proof is the same than for Propositions 
\ref{torscond} and \ref{pr:toco}. 
They are also the projections of the torsion constraints by the Legendre map.
\end{proof}

Finally, the tangency conditions of ${\bf X}$ for these constraints on ${\cal P}_f$ are
$$
\Lie(X_\nu)(T^\alpha_{\beta\gamma}-\frac13\delta^\alpha_\beta T^\nu_{\nu\gamma}+\frac13\delta^\alpha_\gamma T^\nu_{\nu\beta})=
(f^\alpha_{\beta\gamma,\nu}-\frac13\delta^\alpha_\beta f^\nu_{\nu\gamma,\nu}+\frac13\delta^\alpha_\gamma f^\nu_{\nu\beta,\nu})=0
\quad ; \quad \mbox{(\rm on ${\cal P}_f$)}\ ,
$$
which does not lead to new constraints. Notice that these results about the Hamiltonian constraints are coherent 
with the comment in Section \ref{ccc}
about the fact that, up to the torsion constraints $t^\alpha_{\beta\gamma}$,
all the other Lagrangian constraints appear as a consequence of demanding 
the semiholonomy condition for the solutions to the Lagrangian field equations
and, hence, they cannot be projectable functions under the Legendre map \cite{LMMMR-2002}.
In fact, a simple computation shows that 
$$
\Lie(X)c^{\mu\nu}\not=0 \ , \
\Lie(X)m_{\sigma\rho,\mu}\not= 0 \ , \
\Lie(X)r^\alpha_{\beta\gamma,\nu}\not= 0 \quad ; \quad
\mbox{\rm for some $X\in\ker({\cal FL}_{\rm EP}^o)_*=\ker({\cal FL}_{\rm EP})_*$} \ ,
$$
which are the necessary and sufficient conditions for these functions
not to be ${\cal FL}_{\rm EP}^o$-projectable.
In the same way, the integrability Lagrangian constraints are not
 ${\cal FL}_{\rm EP}^o$-projectable either.

\begin{prop}
[Solutions]
The solutions to the Hamiltonian field equations \eqref{eq:ha1} and \eqref{eq:ha2} are
\bea
\mathbf{X}_H=\bigwedge_{\nu=0}^3 X_\nu&=&
\bigwedge_{\nu=0}^3\left(\frac{\partial}{\partial x^\nu}+(\Gamma^\lambda_{\nu\gamma}\Gamma^\alpha_{\beta\lambda}+
C_{\beta,\nu}\delta^\alpha_\gamma+
K^\alpha_{\beta\gamma,\nu})\frac{\partial}{\partial \Gamma^\alpha_{\beta\gamma}}
\right. \nonumber
\\ & & +\left.\sum_{\rho\leq\sigma}(g_{\sigma\lambda}\Gamma^\lambda_{\mu\rho}+g_{\rho\lambda}\Gamma^\lambda_{\mu\sigma}+
\frac{2}{3}g_{\rho\sigma}T^\lambda_{\lambda\mu})
\frac{\partial}{\partial g_{\rho\sigma}}
\right) 
\quad ; \quad \mbox{(\rm on ${\cal P}_f$)} \ ;
\label{eq:sol:Ha}
\eea
with $C_{\beta,\nu}$, $K^\alpha_{\beta\gamma,\nu}\in C^\infty(\mathcal{P}_f)$ such that,
on the points of ${\cal P}_f$, they satisfy 
\bea
K^\mu_{\mu\gamma,\nu}&=&0 \quad , \quad
K^\mu_{\beta\gamma,\mu}+K^\mu_{\gamma\beta,\mu}\ =\ 0 \quad , 
\label{K1}\\
K^\alpha_{[\beta\gamma],\mu}&=&-\frac13\delta^\alpha_{[\beta} K^\nu_{\gamma]\nu,\mu}-\Gamma^\lambda_{\mu[\gamma}\Gamma^\alpha_{\beta]\lambda}+\frac13\delta^\alpha_{[\beta}\Gamma^\lambda_{\mu\gamma]}\Gamma^\nu_{\nu\lambda}-\frac13\delta^\alpha_{[\beta}\Gamma^\lambda_{\mu\nu}\Gamma^\nu_{\gamma]\lambda}\ .
\label{K2}
\eea
\end{prop}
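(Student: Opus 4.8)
The plan is to lean on the observation, already recorded in the text, that in non-momenta coordinates the Hamiltonian equations \eqref{eq:ha1} and \eqref{eq:ha2} are literally the same linear system as the Lagrangian connection and metric equations \eqref{eq:fun4} and \eqref{eq:fun5}. Hence essentially no new algebra is needed: the whole content of Proposition \ref{pr:so1} transfers verbatim, the only genuinely new ingredient being the tangency of $\mathbf{X}_H$ to $\mathcal{P}_f$. First I would invoke Proposition \ref{subham}, which cuts out $\mathcal{P}_f$ by the torsion constraints $t^\alpha_{\beta\gamma}=0$ and plays here the role that $\mathcal{S}_T$ played in the Lagrangian formalism; on this submanifold the system \eqref{eq:ha1}--\eqref{eq:ha2} is compatible.

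Second, I would apply Proposition \ref{pr:so1} directly on $\mathcal{P}_f$. The metric equation \eqref{eq:ha2} admits a unique solution, fixing the coefficients of $\partial/\partial g_{\rho\sigma}$ in $X_\nu$ to be $g_{\sigma\lambda}\Gamma^\lambda_{\nu\rho}+g_{\rho\lambda}\Gamma^\lambda_{\nu\sigma}+\tfrac23 g_{\rho\sigma}T^\lambda_{\lambda\nu}$, exactly as in \eqref{eq:sol1}. The connection equation \eqref{eq:ha1} determines the coefficients of $\partial/\partial\Gamma^\alpha_{\beta\gamma}$ only up to the homogeneous solutions classified in Proposition \ref{pr:so1}, so they take the form $\Gamma^\lambda_{\nu\gamma}\Gamma^\alpha_{\beta\lambda}+C_{\beta,\nu}\delta^\alpha_\gamma+K^\alpha_{\beta\gamma,\nu}$, where the trace part is parametrized by $C_{\beta,\nu}$ and the torsion part $K^\alpha_{\beta\gamma,\nu}$ satisfies the homogeneous conditions \eqref{K1}. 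Assembling the two legs yields the multivector field \eqref{eq:sol:Ha}, so far restricted only by \eqref{K1}.

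The remaining and decisive step is tangency to $\mathcal{P}_f$. Since $\mathcal{P}$ carries only the coordinates $(x^\mu,g_{\rho\sigma},\Gamma^\alpha_{\beta\gamma})$ and no jet (velocity) coordinates, one has $\Lie(X_\nu)\Gamma^\alpha_{\beta\gamma}=f^\alpha_{\beta\gamma,\nu}$, so that $\Lie(X_\nu)t^\alpha_{\beta\gamma}$ is an algebraic expression in the unknown coefficients $f^\alpha_{\beta\gamma,\nu}$ rather than in independent fiber coordinates $\Gamma^\alpha_{\beta\gamma,\nu}$. This is the crucial structural difference with the Lagrangian case: there the same Lie derivative produced the genuinely new constraint $r^\alpha_{\beta\gamma,\nu}$, whereas here, substituting the solution form for $f^\alpha_{\beta\gamma,\nu}$ and antisymmetrizing in $[\beta\gamma]$, it becomes a condition purely on the free functions, namely \eqref{K2}. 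I would read it off exactly as in the computation carried out right after Proposition \ref{pr:so1}. One finally checks, as noted in the paragraph preceding the statement, that the tangency of the already-unique metric leg produces no further restriction, so that \eqref{K1} and \eqref{K2} form the complete set of conditions.

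The main obstacle is thus not conceptual but the bookkeeping of this tangency identity: one must verify that, after inserting $f^\alpha_{\beta\gamma,\nu}=\Gamma^\lambda_{\nu\gamma}\Gamma^\alpha_{\beta\lambda}+C_{\beta,\nu}\delta^\alpha_\gamma+K^\alpha_{\beta\gamma,\nu}$, the trace-adjusted skew combination $f^\alpha_{[\beta\gamma],\nu}-\tfrac13\delta^\alpha_\beta f^\mu_{[\mu\gamma],\nu}+\tfrac13\delta^\alpha_\gamma f^\mu_{[\mu\beta],\nu}$ collapses precisely to the right-hand side of \eqref{K2}, with the quadratic-in-$\Gamma$ terms matching. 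Because each of equations \eqref{eq:ha1}--\eqref{eq:ha2} is the $\mathcal{FL}_{\rm EP}^o$-image of its Lagrangian counterpart, the validity of this identity is guaranteed by the earlier work; the only care needed is to discard all second-jet terms that are simply absent on $\mathcal{P}$.
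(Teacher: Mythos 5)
Your proposal is correct and follows essentially the same route as the paper: the authors' proof is a one-line appeal to Proposition \ref{pr:so1} together with the torsion constraints to get \eqref{eq:sol:Ha} and \eqref{K1}, and then the tangency condition on the torsion constraints to get \eqref{K2}, which is exactly your plan (including reading off \eqref{K2} from the identical computation already done after Proposition \ref{pr:so1} in the Lagrangian section). Your added observation that on $\mathcal{P}$ the Lie derivative of $t^\alpha_{\beta\gamma}$ is algebraic in the unknown coefficients, hence yields a restriction on the free functions rather than a new constraint, correctly captures why no analogue of $r^\alpha_{\beta\gamma,\nu}$ appears here.
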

\begin{proof}
From Proposition \ref{pr:so1} and \eqref{torscond},
we obtain \eqref{eq:sol:Ha} and \eqref{K1},
and the tangency conditions on the torsion constraints lead to obtain \eqref{K2}.
\end{proof}

Finally, the integrability condition is $[X_\mu,X_\nu]\vert_{{\cal P}_f}=0$. 
The vanishing of the coefficients of $\displaystyle\frac{\partial}{\partial g_{\sigma\rho}}$ 
do not lead to new constraints, but they do impose 
new restrictions for the possible solutions:
$$
g_{\alpha\lambda}K^\lambda_{[\nu\beta\mu]}+g_{\beta\lambda}K^\lambda_{[\nu\alpha\mu]}+2g_{\alpha\beta}T^\lambda_{\mu\nu}\Gamma^\sigma_{\sigma\lambda}=0
\quad ; \quad \mbox{(\rm on ${\cal P}_f$)}\ .
$$
The vanishing of the coefficients of $\displaystyle\frac{\partial}{\partial \Gamma^\alpha_{\beta\gamma}}$ 
lead to a system of first order PDE on the functions $C^\alpha_{\beta\gamma\mu}$ and $K^\alpha_{\beta\gamma\mu}$. This system of PDE has solutions everywhere on ${\cal P}_f$, as it is shown in Proposition \ref{par:int2}. 

The following diagram summarizes this situation:
\beq
\xymatrix{
\ & \ &  & \ & J^1\pi^* \\
J^1\pi \ar[urrrr]^<(0.45){\mathcal{FL}_{\rm EP}}
\ar[rrrr]^<(0.45){\mathcal{FL}^o_{\rm EP}} \ar[drr]^<(0.45){\pi^1}
\ar@/_6.0pc/[dddrr]_{\overline{\pi}^1}
 & \ & \ & \ & {\cal P} \ar[dll]_<(0.45){\tau_{\cal P}} \ar@{^{(}->}[u]_<(0.45){\jmath} \ar@/^6.0pc/[dddll]^{\overline{\tau}_{\cal P}}   \\
\ & \ & E \ar[dd]^<(0.45){\pi}
 & \ & \ \\
S_f \ar@{^{(}->}[uu]^<(0.45){{\rm j}_f} \ar[urr]^<(0.45){\pi^1_f} \ar[drr]^<(0.45){\overline{\pi}^1_f}
 & \ & \ & \ & {\cal P}_f \ar@{^{(}->}[uu]_<(0.45){\jmath_f}  \ar[ull]_<(0.45){\tau_f}
\ar[dll]_<(0.45){\overline{\tau}_f}  \\
 & \ & M & \ &
}
\label{diagLH}
\eeq

The study of the gauge vector fields in the Hamiltonian formalism is simpler than in the Lagrangian one. In fact:

\begin{prop}
[Gauge symmetries]
\label{gauge1}
The gauge vector fields of the system are
$$
X=C_\beta\delta^\alpha_\gamma\frac{\partial}{\partial \Gamma^\alpha_{\beta\gamma}}
\quad , \quad 
C_\beta\in C^\infty(\mathcal{P})
\quad ; \quad \mbox{(\rm on ${\cal P}_f$)}\ .
$$
\end{prop}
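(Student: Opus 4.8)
The plan is to transcribe the proof of Proposition \ref{pr:gv:la} into the Hamiltonian setting, exploiting two simplifications that shorten the analysis: in the non-momenta coordinates $(x^\mu,g_{\rho\sigma},\Gamma^\alpha_{\beta\gamma})$ there are no derivative coordinates $\Gamma^\alpha_{\beta\gamma,\mu}$, and by Proposition \ref{subham} the final constraint submanifold $\mathcal{P}_f$ is cut out \emph{only} by the torsion constraints. Recalling from the Appendix that a gauge vector field is a $\overline{\tau}_{\cal P}$-vertical vector field, tangent to $\mathcal{P}_f$, and lying in the kernel of $\Omega_H$ on $\mathcal{P}_f$, I would start from a generic
$$
X=f^\mu\frac{\partial}{\partial x^\mu}+\sum_{\rho\leq\sigma}f_{\rho\sigma}\frac{\partial}{\partial g_{\rho\sigma}}+f^\alpha_{\beta\gamma}\frac{\partial}{\partial \Gamma^\alpha_{\beta\gamma}}\in\mathfrak{X}(\mathcal{P}),
$$
and impose these three conditions. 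Since $\mathcal{P}_f$ is a bundle over $M$, verticality is immediately equivalent to $f^\mu=0$, exactly as in the Lagrangian case.

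Next I would use the fact, recorded in Section \ref{nmc}, that in non-momenta coordinates $\Omega_H$ has the same local expression \eqref{pc:la} as $\Omega_{\Lag_{\rm EP}}$. Consequently the contraction $\inn(X)\Omega_H$ and its pullback to $\mathcal{P}_f$ are computed word for word as in the proof of Proposition \ref{pr:gv:la}: splitting $\jmath_f^*\d\Gamma^\alpha_{\beta\gamma}$ into its symmetric part plus the trace of the torsion—which is legitimate on $\mathcal{P}_f$ precisely because of the torsion constraints—yields $f_{\rho\sigma}\vert_{\mathcal{P}_f}=0$ and reduces the problem to the algebraic system \eqref{eq:peg5}--\eqref{eq:peg2} for the functions $f^\alpha_{\beta\gamma}$. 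Solving this system as before gives the decomposition of any solution into a trace solution $f^\alpha_{\beta\gamma}=C_\beta\delta^\alpha_\gamma$, with $C_\beta\in\Cinfty(\mathcal{P})$ arbitrary, and a torsion solution $K^\alpha_{\beta\gamma}$ obeying $K^\alpha_{\alpha\gamma}=0$ and $K^\alpha_{\beta\gamma}+K^\alpha_{\gamma\beta}=0$, both verifying \eqref{eq:peg2}.

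Finally I would impose tangency to the torsion constraints defining $\mathcal{P}_f$. As in the Lagrangian computation one finds $\Lie(X)t^\alpha_{\beta\gamma}=2K^\alpha_{\beta\gamma}$, so tangency forces $K^\alpha_{\beta\gamma}=0$ and leaves only the trace solutions $X=C_\beta\delta^\alpha_\gamma\frac{\partial}{\partial\Gamma^\alpha_{\beta\gamma}}$, which are themselves tangent to $\mathcal{P}_f$ since then $\Lie(X)t^\alpha_{\beta\gamma}=0$. Here the Hamiltonian simplification is decisive: because $\mathcal{P}_f$ is defined by the torsion constraints alone—the pre-metricity, $r^\alpha_{\beta\gamma,\nu}$ and integrability constraints are not $\mathcal{FL}_{\rm EP}^o$-projectable, as observed after Proposition \ref{subham}, and so do not live on $\mathcal{P}$—there remain no further tangency conditions and no naturality hypothesis to check, which is exactly why this proof is shorter than its Lagrangian counterpart. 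The only point requiring genuine care, and the step I expect to be the main (though minor) obstacle, is the bookkeeping verification that the transported computation is valid on $\mathcal{P}_f$: one must confirm that every manipulation performed on $\mathcal{S}_f$ in Proposition \ref{pr:gv:la} used solely the torsion constraints and none of the extra Lagrangian constraints. Given the coincidence of the local expressions of $\Omega_H$ and $\Omega_{\Lag_{\rm EP}}$, I expect this to present no real difficulty.
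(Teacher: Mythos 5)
Your proposal is correct and follows essentially the same route as the paper: the paper's proof likewise reduces the analysis of $\inn(X)\Omega_H=0$ to the computation already done in Proposition \ref{pr:gv:la} (legitimate because $\Omega_H$ has the same local expression in non-momenta coordinates), obtains $f_{\rho\sigma}=0$ and $f^\alpha_{\beta\gamma}=C_\beta\delta^\alpha_\gamma+K^\alpha_{\beta\gamma}$ as a sum of trace and torsion solutions, and discards the torsion solutions because they are not tangent to $\mathcal{P}_f$. Your additional remarks on why the Hamiltonian case is shorter (only the torsion constraints survive the Legendre projection, and no naturality/derivative-coordinate conditions remain) are consistent with the paper's discussion in Section \ref{nmc}.
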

\begin{proof} 
A $\overline{\tau}$-vertical vector field has the local expression:
$$
X=\sum_{\rho\leq\sigma}f_{\rho\sigma}\frac{\partial}{\partial g_{\rho\sigma}}+f^\alpha_{\beta\gamma}\frac{\partial}{\partial \Gamma^\alpha_{\beta\gamma}}\ .
$$
The analysis of the equation $\inn(X)\Omega_H=0$ is analogous as in Proposition \ref{pr:gv:la}. 
We find that $f_{\rho\sigma}=0$ and $f^\alpha_{\beta\gamma}=C_\beta\delta_\gamma^\alpha+K^\alpha_{\beta \gamma}$, on the points of ${\cal P}_f$; 
that is, they are a combination of a trace  and a torsion solution;
but the torsion solutions are not tangent to $\mathcal{P}_f$.
\end{proof}

The multiple solutions of the system are given by the functions $C^\alpha_{\beta\gamma,\nu}$ and $K^\alpha_{\beta\gamma,\nu}$ (see \eqref{eq:sol:Ha}). The functions $C^\alpha_{\beta\gamma,\nu}$ are related with the gauge freedom, but the former ones $K^\alpha_{\beta\gamma,\nu}$ are not.

\subsection{Pure-connection coordinates}

The non-momenta coordinates arise in a natural way from the structure of the manifolds,
but their use turn out to be very similar to the analysis made in the Lagrangian formalism, 
thus providing little extra understanding about the theory. 
A more interesting coordinates can be obtained from the second set of constraints in
\eqref{hamcons}
\beq\label{eq:ltmc}
 p_\alpha^{\beta\gamma,\mu}=\varrho\left( \delta_\alpha^\mu g^{\beta\gamma}-\delta_\alpha^\beta g^{\mu\gamma}\right) \ ;
\eeq
that is, the momenta of the connection can be obtained from the metric. 
The converse is also true; in fact:

\begin{lemma}\label{le:mc}
Denoting ${\cal T}:=\sqrt{|\det (p_\mu^{\mu\alpha,\beta})|}$, we have that
$$
g^{\alpha\beta}=-\frac1{3\varrho}p_\mu^{\mu\alpha,\beta}=-\frac{3}{{\cal T}}p_\mu^{\mu\alpha,\beta}\ .
$$
\end{lemma}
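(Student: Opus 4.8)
The plan is to start from the second family of constraints \eqref{eq:ltmc} defining $\mathcal{P}$, namely $p_\alpha^{\beta\gamma,\mu}=\varrho(\delta_\alpha^\mu g^{\beta\gamma}-\delta_\alpha^\beta g^{\mu\gamma})$, and simply to contract the indices in the pattern that appears in the statement. First I would set, in that relation, the subscript equal to the first upper index and sum, i.e. evaluate $p_\mu^{\mu\alpha,\beta}$. After renaming the free indices this gives $p_\mu^{\mu\alpha,\beta}=\varrho(\delta_\mu^\beta g^{\mu\alpha}-\delta_\mu^\mu g^{\beta\alpha})$. Using $\delta_\mu^\beta g^{\mu\alpha}=g^{\alpha\beta}$ together with $\delta_\mu^\mu=\dim M=4$, this collapses to $p_\mu^{\mu\alpha,\beta}=-3\varrho\,g^{\alpha\beta}$. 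Solving for $g^{\alpha\beta}$ yields at once the first claimed equality $g^{\alpha\beta}=-\tfrac{1}{3\varrho}p_\mu^{\mu\alpha,\beta}$, and in passing shows that the metric is genuinely recoverable from the momenta, which is the point of the lemma.

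For the second equality I would take determinants. Viewing $(p_\mu^{\mu\alpha,\beta})$ as a $4\times4$ matrix indexed by $\alpha,\beta$, the identity just obtained says it equals $-3\varrho$ times the inverse metric matrix $(g^{\alpha\beta})$. Hence $\det(p_\mu^{\mu\alpha,\beta})=(-3\varrho)^4\det(g^{\alpha\beta})$, and since $\det(g^{\alpha\beta})=1/\det(g_{\alpha\beta})$ this is $81\varrho^4/\det(g_{\alpha\beta})$. Here the Lorentzian signature enters: $\det(g_{\alpha\beta})<0$, so from $|\det(g_{\alpha\beta})|=\varrho^2$ we get $\det(g_{\alpha\beta})=-\varrho^2$ and therefore $\det(p_\mu^{\mu\alpha,\beta})=-81\varrho^2$. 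Taking absolute values and square roots produces $\mathcal{T}=9\varrho$, whence $\tfrac{3}{\mathcal{T}}=\tfrac{1}{3\varrho}$, which is exactly what identifies the two expressions.

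I do not expect any genuine obstacle: the lemma is essentially a one-line contraction followed by a determinant count. The only points requiring care are bookkeeping ones, namely keeping the index substitution in \eqref{eq:ltmc} straight (the symmetry $g^{\alpha\beta}=g^{\beta\alpha}$ makes the result symmetric and self-consistent), correctly picking up the factor $\delta_\mu^\mu=4$ that produces the coefficient $-3$, and tracking the sign from the Lorentzian determinant so that one obtains $\mathcal{T}=9\varrho$ rather than some other multiple of $\varrho$. Once these are handled, the two stated formulas for $g^{\alpha\beta}$ coincide and the proof is complete.
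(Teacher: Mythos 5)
Your proof is correct and follows essentially the same route as the paper: contracting the lower index with the first upper index in the constraint $p_\alpha^{\beta\gamma,\mu}=\varrho(\delta_\alpha^\mu g^{\beta\gamma}-\delta_\alpha^\beta g^{\mu\gamma})$ to get $p_\mu^{\mu\alpha,\beta}=-3\varrho\,g^{\alpha\beta}$, then taking determinants to obtain $|\det(p_\mu^{\mu\alpha,\beta})|=3^4\varrho^4|\det(g_{\alpha\beta})|^{-1}$ and hence ${\cal T}=9\varrho$. The extra remark on the Lorentzian sign of $\det(g_{\alpha\beta})$ is harmless but unnecessary, since only the absolute value enters.
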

\begin{proof}
Contracting the indices $\alpha$ and $\beta$ on \eqref{eq:ltmc} we obtain
$$
 p_\nu^{\nu\gamma,\mu}=-3\varrho  g^{\gamma\mu} \ ,
$$
which is the first equality. Now, computing the determinant, 
as $\varrho=\sqrt{|\det (g_{\gamma\mu})|}$, we obtain
that the second equality holds:
$$
|\det (p_\nu^{\nu\gamma,\mu})|=3^4\varrho^4|\det (g_{\gamma\mu})|^{-1}
\ \Longleftrightarrow\ {\cal T}=9\varrho \ ,
$$
\end{proof}

It is interesting to point out that
all the results can be extended to an arbitrary dimension $m>2$; 
but ${\cal T}$ is proportional to $\varrho$ only for $m=4$.

Since the degrees of freedom of $g_{\alpha\beta}$ and 
$p_\alpha^{\beta\gamma,\mu}$ are not equal,  
equation \eqref{eq:ltmc} has several implicit restrictions.
In fact, using Lemma \ref{le:mc} to substitute the metric for momenta in 
\eqref{eq:ltmc} we obtain the constraints
$$
 p_\alpha^{\beta\gamma,\mu}=
\frac13\delta_\alpha^\beta p_\nu^{\nu\mu,\gamma}-
\frac13\delta_\alpha^\mu p_\nu^{\nu\beta,\gamma} \ ,
$$
which are very similar to the torsion constraints. 
Moreover, as $g_{\alpha\beta}=g_{\beta\alpha}$, from Lemma \ref{le:mc}
we have that  $p_\mu^{\mu\alpha,\beta}=p_\mu^{\mu\beta,\alpha}$. 
Therefore, the only degrees of freedom for the momenta of the connection 
are the symmetric part of $p_\mu^{\mu\beta,\alpha}$, 
which equals the degrees of freedom of the metric.

Denoting $p^{\alpha\beta}:=p_r^{r \alpha,\beta}$, 
we can consider the set of coordinates 
$(x^\mu,\Gamma^\alpha_{\beta\gamma},p^{\rho\sigma})$
in $\mathcal{P}$, with $0\leq\rho\leq\sigma\leq 3$,
which are called {\sl pure-connection coordinates}.
The relation between these coordinates  and the non-momenta ones is given
by the following map
$$
\Psi(x^\lambda,g_{\rho\sigma},\Gamma^\alpha_{\beta\gamma})=
(x^\mu,\Gamma^\alpha_{\beta\gamma},p^{\rho\sigma}=
-3\varrho g^{\rho\sigma}) \ ,
$$
which is invertible, and hence a local diffeomorphism, by Lemma \ref{le:mc}.

In pure-connection coordinates the Hamiltonian function has the local expression
$$
H=-\frac13p^{\alpha\beta}\left(\Gamma^{\gamma}_{\beta\sigma}\Gamma^{\sigma}_{\gamma\alpha}-\Gamma^{\gamma}_{\beta\alpha}\Gamma^{\sigma}_{\sigma\gamma}\right),
$$
and the Hamilton-Cartan form $\Omega_H$ is
\beann
\Omega_{H}&=&
\d H\wedge\d^4x +\frac16\delta_\alpha^\mu\d p^{\beta \gamma}\wedge\d \Gamma^{\alpha}_{(\beta\gamma)}\wedge \d^3x_{\mu}
\\ & &-\frac16\delta_\alpha^\beta\d p^{\mu \gamma}\wedge\d \Gamma^{\alpha}_{\beta\mu}\wedge \d^3x_{\gamma} -
\frac16\delta_\alpha^\beta\d p^{\mu \gamma}\wedge\d \Gamma^{\alpha}_{\beta\gamma}\wedge \d^3x_{\mu}\ .
\eeann

A general transverse locally decomposable multivector field in $\mathcal{P}$ has the local expression in pure-connection coordinates:
\beann
\mathbf{X}_H=\bigwedge_{\nu=0}^3 X_\nu&=&
\bigwedge_{\nu=0}^3\left(\frac{\partial}{\partial x^\nu}+f^\alpha_{\beta\gamma,\nu}\frac{\partial}{\partial \Gamma^\alpha_{\beta\gamma}}+\sum_{\alpha\leq\beta}G^{\alpha\beta}_\nu
\frac{\partial}{\partial p^{\alpha\beta}}
\right) \ .
\eeann
Then the field equations \eqref{eq:hvf} are locally
\bea \label{eq:hfun1}
\frac{1}{n(\alpha\beta)}\frac{\partial H}{\partial p^{\alpha\beta}}+\frac16 f^\mu_{(\alpha\beta),\mu}-\frac16f^\mu_{\mu(\alpha,\beta)}=0 \ ,
\\ \label{eq:hfun2}
\frac{\partial H}{\partial \Gamma^\alpha_{\beta\gamma}}-\frac13 G^{\beta\gamma}_{\alpha}+\frac13\delta^\beta_\alpha G^{\mu\gamma}_{\mu}=0 \ .
\eea

Next the results previously described in the above Section \ref{nmc} are recovered and extended:

The constraints and gauge variations are related to the connection, where both the non-momenta and pure-connection coordinates have the same expression. Therefore:

\begin{prop}
[Constraints]
A necessary condition
for the existence of solutions to the system
of equations \eqref{eq:hfun1} and \eqref{eq:hfun2}
(and, in particular, \eqref{eq:hfun2})
is that the following equalities hold
$$
T^\alpha_{\beta\gamma}=
\frac13\delta^\alpha_\beta T^\nu_{\nu\gamma}-\frac13\delta^\alpha_\gamma T^\nu_{\nu\beta}\ .
$$
These constraints define the submanifold $\jmath_f\colon\mathcal{P}_f\hookrightarrow \mathcal{P}$.
\label{subham2}
\end{prop}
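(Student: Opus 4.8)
The plan is to extract the constraint directly from the connection-variation equation \eqref{eq:hfun2}, which is the pure-connection analogue of the metric equations \eqref{eq:fun5} and \eqref{eq:ha2}. The key structural fact is that both the Hamiltonian function $H$ and the torsion $T^\alpha_{\beta\gamma}=\Gamma^\alpha_{\beta\gamma}-\Gamma^\alpha_{\gamma\beta}$ depend only on the connection coordinates $\Gamma^\alpha_{\beta\gamma}$, which are shared by the non-momenta and the pure-connection charts; moreover, since $p^{\rho\sigma}$ and $g_{\rho\sigma}$ are related by a transformation not involving $\Gamma$ (Lemma \ref{le:mc}), the partial derivatives $\partial H/\partial\Gamma^\alpha_{\beta\gamma}$ computed in either chart coincide. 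Consequently the compatibility condition forced by \eqref{eq:hfun2} must be the very same one already found in the non-momenta coordinates (Proposition \ref{subham}), and the only task is to exhibit it by eliminating the free multiplier functions $G^{\beta\gamma}_\alpha$.

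First I would rewrite \eqref{eq:hfun2} as $\tfrac13 G^{\beta\gamma}_\alpha=\partial H/\partial\Gamma^\alpha_{\beta\gamma}+\tfrac13\delta^\beta_\alpha G^{\mu\gamma}_\mu$ and use that $G^{\beta\gamma}_\alpha$ is symmetric in $\beta,\gamma$ (because the momentum $p^{\beta\gamma}$ is symmetric). Antisymmetrising the right-hand side in $\beta,\gamma$ then yields
\[
\frac{\partial H}{\partial\Gamma^\alpha_{\beta\gamma}}-\frac{\partial H}{\partial\Gamma^\alpha_{\gamma\beta}}=\frac13\left(\delta^\gamma_\alpha V^\beta-\delta^\beta_\alpha V^\gamma\right),\qquad V^\nu:=G^{\mu\nu}_\mu,
\]
a relation in which the only surviving multiplier is the trace vector $V^\nu$. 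Contracting $\gamma=\alpha$ solves for it, $V^\beta=\partial H/\partial\Gamma^\lambda_{\beta\lambda}-\partial H/\partial\Gamma^\lambda_{\lambda\beta}$, and substituting back removes $G$ altogether, leaving a relation purely among the derivatives of $H$. This relation is the sought-after necessary condition: it cannot be satisfied identically and therefore defines a submanifold.

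Finally I would identify this $G$-independent relation with the torsion constraint. Since $\partial H/\partial\Gamma^\alpha_{\beta\gamma}$ is the same function as in the Lagrangian and non-momenta analyses, the relation obtained is exactly the antisymmetric combination that appeared in the proof of Proposition \ref{torscond}; by Proposition \ref{pr:toco} it is equivalent to $T^\alpha_{\beta\gamma}=\tfrac13\delta^\alpha_\beta T^\nu_{\nu\gamma}-\tfrac13\delta^\alpha_\gamma T^\nu_{\nu\beta}$, which cuts out $\mathcal{P}_f\hookrightarrow\mathcal{P}$. The main obstacle is the bookkeeping in the elimination of the multipliers $G^{\beta\gamma}_\alpha$ and in checking that the result is genuinely a constraint rather than a trivial identity; but this reduces to the computation already carried out in Propositions \ref{torscond} and \ref{pr:toco}, so no new analytic difficulty arises --- the content is essentially that the constraint lives in the connection sector common to both coordinate systems.
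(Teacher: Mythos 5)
Your proposal is correct and coincides with the paper's own (one-line) alternative derivation: the paper proves this proposition either by observing that these are the Legendre-map projections of the Lagrangian torsion constraints, or by deducing them from \eqref{eq:hfun2} after imposing $G^{\beta\gamma}_\alpha-G^{\gamma\beta}_\alpha=0$, which is exactly the symmetry-of-$p^{\beta\gamma}$ argument you elaborate. Your elimination of the trace $V^\nu=G^{\mu\nu}_\mu$ and the identification of the resulting $G$-free relation with the torsion constraint via Propositions 1 and \ref{pr:toco} is a faithful, more detailed version of that same route.
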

\begin{proof}
They are the projections of the torsion constraints by the Legendre map. 
Alternatively, they can be deduced from \eqref{eq:hfun2} imposing that $G_\alpha^{\beta\gamma}-G_\alpha^{\gamma\beta}=0$.
\end{proof}

Taking into account the results presented in the above Section \ref{nmc}, we have:

\begin{prop}
[Solutions]
\label{solham}
The solutions to the Hamiltonian field equation \eqref{eq:hvf}
in the pure-connection coordinates are:
\beann
\mathbf{X}_H=\bigwedge_{\nu=0}^3 X_\nu&=&
\bigwedge_{\nu=0}^3\left(\frac{\partial}{\partial x^\nu}+(\Gamma^\lambda_{\nu\gamma}\Gamma^\alpha_{\beta\lambda}+
C_{\beta,\nu}\delta^\alpha_\gamma+
K^\alpha_{\beta\gamma,\nu})\frac{\partial}{\partial \Gamma^\alpha_{\beta\gamma}}
\right. 
\\ & & +\left.\sum_{\alpha\leq\beta}(-p^{\alpha\mu}\Gamma^\beta_{\nu\mu}-p^{\beta\mu}\Gamma^\alpha_{\nu\mu}-\tfrac13p^{\alpha\beta}T^{\mu}_{\mu\nu}+p^{\alpha\beta}\Gamma^{\mu}_{\mu\nu})
\frac{\partial}{\partial p^{\alpha\beta}}
\right) 
\quad ; \quad \mbox{(\rm on ${\cal P}_f$)} \ ;
\eeann
with $C_{\beta,\nu}$, $K^\alpha_{\beta\gamma,\nu}\in C^\infty(\mathcal{P}_f)$ such that,
on the points of ${\cal P}_f$, they satisfy 
\beann
K^\mu_{\mu\gamma,\nu}&=&0 \quad , \quad
K^\mu_{\beta\gamma,\mu}+K^\mu_{\gamma\beta,\mu}\ =\ 0 \quad , \\
K^\alpha_{[\beta\gamma],\mu}&=&-\frac13\delta^\alpha_{[\beta} K^\nu_{\gamma]\nu,\mu}-\Gamma^\lambda_{\mu[\gamma}\Gamma^\alpha_{\beta]\lambda}+\frac13\delta^\alpha_{[\beta}\Gamma^\lambda_{\mu\gamma]}\Gamma^\nu_{\nu\lambda}-\frac13\delta^\alpha_{[\beta}\Gamma^\lambda_{\mu\nu}\Gamma^\nu_{\gamma]\lambda}\ .
\eeann 
\end{prop}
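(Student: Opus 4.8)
The plan is to transport the solution already obtained in non-momenta coordinates, formula \eqref{eq:sol:Ha}, through the local diffeomorphism
$$
\Psi(x^\lambda,g_{\rho\sigma},\Gamma^\alpha_{\beta\gamma})=(x^\mu,\Gamma^\alpha_{\beta\gamma},p^{\rho\sigma}=-3\varrho g^{\rho\sigma}),
$$
whose invertibility is guaranteed by Lemma \ref{le:mc}. The equation $\inn(\mathbf{X}_H)\Omega_H=0$ is a single intrinsic condition whose expression in the two charts gives, respectively, \eqref{eq:ha1}--\eqref{eq:ha2} and \eqref{eq:hfun1}--\eqref{eq:hfun2}; hence it suffices to re-express the multivector field \eqref{eq:sol:Ha} in the new coordinates. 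Since $\Psi$ fixes the coordinates $x^\mu$ and $\Gamma^\alpha_{\beta\gamma}$, the components of $\mathbf{X}_H$ along $\partial/\partial x^\nu$ and $\partial/\partial\Gamma^\alpha_{\beta\gamma}$, together with the algebraic conditions \eqref{K1}--\eqref{K2} on $C_{\beta,\nu}$ and $K^\alpha_{\beta\gamma,\nu}$, are inherited verbatim from the preceding Proposition. Thus the only genuinely new content is the computation of the components $G^{\alpha\beta}_\nu$ along $\partial/\partial p^{\alpha\beta}$.

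First I would obtain these components as the action of the vector field $X_\nu$ on the coordinate function $p^{\alpha\beta}=-3\varrho g^{\alpha\beta}$, that is $G^{\alpha\beta}_\nu=X_\nu(p^{\alpha\beta})$. As $p^{\alpha\beta}$ depends only on the metric, only the metric-direction of $X_\nu$ contributes, and from \eqref{eq:sol:Ha} that direction is the pre-metricity velocity $f_{\rho\sigma,\nu}=g_{\sigma\lambda}\Gamma^\lambda_{\nu\rho}+g_{\rho\lambda}\Gamma^\lambda_{\nu\sigma}+\tfrac23 g_{\rho\sigma}T^\lambda_{\lambda\nu}$. Applying the chain rule with the standard identities
$$
\frac{\partial \varrho}{\partial g_{\rho\sigma}}=\tfrac12\varrho\,g^{\rho\sigma},\qquad
\frac{\partial g^{\alpha\beta}}{\partial g_{\rho\sigma}}=-g^{\alpha\rho}g^{\beta\sigma},
$$
and contracting via $g^{\rho\sigma}g_{\sigma\lambda}=\delta^\rho_\lambda$ and $g^{\rho\sigma}g_{\rho\sigma}=4$, the traces collapse and one is led to the intermediate expression
$$
G^{\alpha\beta}_\nu=-p^{\alpha\mu}\Gamma^\beta_{\nu\mu}-p^{\beta\mu}\Gamma^\alpha_{\nu\mu}+p^{\alpha\beta}\Gamma^\lambda_{\nu\lambda}+\tfrac23 p^{\alpha\beta}T^\lambda_{\lambda\nu}.
$$

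To cast this in the form displayed in the statement I would use the elementary identity $\Gamma^\lambda_{\nu\lambda}=\Gamma^\lambda_{\lambda\nu}-T^\lambda_{\lambda\nu}$, which is immediate from $T^\lambda_{\lambda\nu}=\Gamma^\lambda_{\lambda\nu}-\Gamma^\lambda_{\nu\lambda}$. Substituting it converts $p^{\alpha\beta}\Gamma^\lambda_{\nu\lambda}+\tfrac23 p^{\alpha\beta}T^\lambda_{\lambda\nu}$ into $p^{\alpha\beta}\Gamma^\mu_{\mu\nu}-\tfrac13 p^{\alpha\beta}T^\mu_{\mu\nu}$, so that
$$
G^{\alpha\beta}_\nu=-p^{\alpha\mu}\Gamma^\beta_{\nu\mu}-p^{\beta\mu}\Gamma^\alpha_{\nu\mu}-\tfrac13 p^{\alpha\beta}T^\mu_{\mu\nu}+p^{\alpha\beta}\Gamma^\mu_{\mu\nu},
$$
which reproduces exactly the coefficients of $\partial/\partial p^{\alpha\beta}$ in the statement. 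Since the remaining components and the conditions on $C_{\beta,\nu}$ and $K^\alpha_{\beta\gamma,\nu}$ involve only the $\Gamma$-directions that $\Psi$ leaves fixed, they follow directly from the non-momenta Proposition and Lemma \ref{le:mc}.

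I expect the main obstacle to be purely bookkeeping rather than conceptual: correctly handling the symmetric-index summation over the independent components $g_{\rho\sigma}$ with $\rho\le\sigma$ through the combinatorial factor $n(\rho\sigma)$ (exactly as in the proof of Proposition \ref{pr:so1}), and, above all, not conflating the two distinct traces $\Gamma^\lambda_{\lambda\nu}$ and $\Gamma^\lambda_{\nu\lambda}$ of the non-symmetric connection. Their difference is precisely the torsion trace $T^\lambda_{\lambda\nu}$, and it is this difference that produces the specific coefficient $-\tfrac13$ of $p^{\alpha\beta}T^\mu_{\mu\nu}$ in the final formula; overlooking it would yield the wrong sign and magnitude of that term.
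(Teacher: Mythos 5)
Your proposal is correct and follows the same route the paper takes: the paper's "proof" of Proposition \ref{solham} consists precisely of the remark that the result follows from the non-momenta Proposition of Section \ref{nmc} together with the coordinate change $\Psi$ of Lemma \ref{le:mc}, which is exactly your strategy. Your explicit computation of $G^{\alpha\beta}_\nu=X_\nu(-3\varrho g^{\alpha\beta})$, including the rewriting $\Gamma^\lambda_{\nu\lambda}=\Gamma^\mu_{\mu\nu}-T^\mu_{\mu\nu}$ that produces the coefficient $-\tfrac13$ of $p^{\alpha\beta}T^\mu_{\mu\nu}$, checks out and usefully supplies the details the paper leaves implicit.
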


The integrability condition is
$$
0=[X_\nu,X_\mu]=F^\epsilon\frac{\partial}{\partial x^\epsilon}+F^\alpha_{\beta\gamma}\frac{\partial}{\partial \Gamma^\alpha_{\beta\gamma}}+\sum_{\alpha\leq\beta}F_{\alpha\beta}
\frac{\partial}{\partial p^{\alpha\beta}}
\quad ; \quad \mbox{(\rm on ${\cal P}_f$)}\ .
$$
We have that $F^\epsilon\vert_{{\cal P}_f}=0$, and imposing $F_{\alpha\beta}\vert_{{\cal P}_f}=0$, we derive the following condition on the possible solutions
$$
p^{\alpha\sigma}K^\beta_{[\mu\sigma\nu]}+p^{\beta\sigma}K^\alpha_{[\mu\sigma\nu]}-\frac13p^{\alpha\beta}K^\sigma_{[\mu\sigma\nu]}=\frac23p^{\alpha\beta}T^\lambda_{\nu\mu}\Gamma^\sigma_{\sigma\lambda}
\quad ; \quad \mbox{(\rm on ${\cal P}_f$)}\ .
$$
The conditions $F^\alpha_{\beta\gamma}\vert_{{\cal P}_f}=0$ lead to a system of PDE on the functions $C_{\alpha,\beta}$ and $K^\alpha_{\beta\gamma,\mu}$ which has solutions everywhere on ${\cal P}_f$, as it is shown in Proposition \ref{par:int2}. 

\begin{prop}
[Gauge symmetries]
The gauge variations of the system are:
$$
X=C_\beta\delta^\alpha_\gamma\frac{\partial}{\partial \Gamma^\alpha_{\beta\gamma}}
\quad , \quad 
C_\beta\in C^\infty(\mathcal{P}_f)
\quad ; \quad \mbox{(\rm on ${\cal P}_f$)}\ .
$$
\label{gauge2}
\end{prop}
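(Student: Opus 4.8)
The plan is to treat this as the pure‑connection counterpart of Proposition \ref{gauge1}, exploiting the structural remark made just before Proposition \ref{subham2}: both the constraints and the gauge directions are carried entirely by the connection variables, and on those variables the two coordinate descriptions agree. The cleanest route is to observe that gauge vector fields are intrinsic objects—$\overline{\tau}$‑vertical elements of $\ker\Omega_H$ that are tangent to $\mathcal{P}_f$—so they are the very same vector fields produced in Proposition \ref{gauge1}. Since the diffeomorphism $\Psi$ relating the two charts fixes $x^\mu$ and $\Gamma^\alpha_{\beta\gamma}$ and sends $g$ to $p^{\rho\sigma}=-3\varrho g^{\rho\sigma}$, and since this $p$ is independent of the connection, the Jacobian piece $\partial p^{\rho\sigma}/\partial\Gamma^\alpha_{\beta\gamma}$ vanishes; hence the coordinate fields $\partial/\partial\Gamma^\alpha_{\beta\gamma}$ coincide in both systems, and the expression $X=C_\beta\delta^\alpha_\gamma\,\partial/\partial\Gamma^\alpha_{\beta\gamma}$ transfers verbatim.

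To make the argument self‑contained I would instead redo the computation directly. I start from a generic $\overline{\tau}$‑vertical field
$$
X=\sum_{\alpha\leq\beta}f^{\alpha\beta}\frac{\partial}{\partial p^{\alpha\beta}}+f^\alpha_{\beta\gamma}\frac{\partial}{\partial \Gamma^\alpha_{\beta\gamma}}\ ,
$$
and impose $\inn(X)\Omega_H=0$ using the pure‑connection expression for $\Omega_H$ displayed above. Because $X$ is $\overline{\tau}$‑vertical, the term $\d H\wedge\d^4x$ contributes only $X(H)\,\d^4x$, while the three $\d p\wedge\d\Gamma\wedge\d^3x$ summands, contracted with $X$ and sorted into the pieces proportional to the independent one‑forms $\d\Gamma^\alpha_{\beta\gamma}$ and $\d p^{\alpha\beta}$, yield a system of the same algebraic shape as \eqref{eq:peg5}–\eqref{eq:peg2}. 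From the coefficients of the $\d p^{\alpha\beta}$ factors I expect $f^{\alpha\beta}\vert_{\mathcal{P}_f}=0$ (exactly as $f_{\rho\sigma}=0$ in Proposition \ref{pr:gv:la}), and the coefficients of $\d\Gamma^\alpha_{\beta\gamma}$ force $f^\alpha_{\beta\gamma}$ to solve the same linear relation, whose general solution splits into a trace part $C_\beta\delta^\alpha_\gamma$ and a torsion part $K^\alpha_{\beta\gamma}$ satisfying $K^\alpha_{\alpha\gamma}=0$ and $K^\alpha_{\beta\gamma}+K^\alpha_{\gamma\beta}=0$.

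The final step is tangency to $\mathcal{P}_f$. The computation $\Lie(X)\left(T^\alpha_{\beta\gamma}-\tfrac13\delta^\alpha_\beta T^\mu_{\mu\gamma}+\tfrac13\delta^\alpha_\gamma T^\mu_{\mu\beta}\right)=2K^\alpha_{\beta\gamma}$, identical to the one in Proposition \ref{pr:gv:la}, shows that the trace solutions are tangent whereas the torsion solutions are not; so the torsion part is discarded and only $X=C_\beta\delta^\alpha_\gamma\,\partial/\partial\Gamma^\alpha_{\beta\gamma}$ survives. The main obstacle I anticipate is purely computational: the pure‑connection $\Omega_H$ carries three momentum–connection terms with distinct index contractions and symmetrizations, so correctly isolating the coefficient equations and checking that they reduce to the known system \eqref{eq:peg5}–\eqref{eq:peg2} is the delicate part. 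Once that reduction is in place (or once one simply invokes the intrinsic character of the gauge fields together with the observation that $\Psi$ fixes $\Gamma$), the trace/torsion split and the tangency check are a verbatim repetition of Proposition \ref{pr:gv:la}.
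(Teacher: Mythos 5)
Your proposal is correct and takes essentially the same route as the paper: contract a generic $\overline{\tau}$-vertical field with $\Omega_H$ in pure-connection coordinates, conclude that the momentum components vanish, reduce the connection components to the trace/torsion alternative exactly as in Proposition \ref{pr:gv:la}, and use tangency to $\mathcal{P}_f$ to discard the torsion solutions (your opening observation that the gauge fields are intrinsic and that the chart change fixes $\Gamma^\alpha_{\beta\gamma}$ is also a legitimate shortcut the paper itself alludes to before Proposition \ref{subham2}). One bookkeeping slip to fix: since every momentum term of $\Omega_H$ has the shape $\d p\wedge\d\Gamma\wedge\d^3x_\mu$, it is the coefficient of ${\rm j}_f^*\d\Gamma^\alpha_{(\beta\gamma)}$ that forces the momentum components $G^{\alpha\beta}$ to vanish, while the coefficients of $\d p^{\alpha\beta}$ carry the linear system on $f^\alpha_{\beta\gamma}$ — you have these two roles interchanged, although this does not alter the conclusion.
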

\begin{proof} For a generic vertical vector field
$$
X=f^\alpha_{\beta\gamma}\frac{\partial}{\partial \Gamma^\alpha_{\beta\gamma}}+\sum_{\alpha\leq\beta}G^{\alpha\beta}\frac{\partial}{\partial p^{\alpha\beta}} \ ,
$$
we have that
\beann
\inn(X)\Omega_H&=&\left(\sum_{\alpha\leq\beta}\frac{\partial H}{\partial p^{\alpha\beta}}G^{\alpha\beta}+\frac{\partial H}{\partial \Gamma^\alpha_{\beta\gamma}}f^\alpha_{\beta\gamma}\right)\d^4x-\left(\frac13\delta_\alpha^\mu G^{\beta\gamma}-\frac13\delta_\alpha^\beta G^{\mu\gamma}\right)\d\Gamma^\alpha_{\beta\gamma}\wedge \d^3x_{\mu}
\\ &&
+\left(\frac16f^\mu_{\alpha\beta}+\frac16f^\mu_{\beta\alpha}-\frac16\delta^\mu_\beta f^\nu_{\nu\alpha}-\frac16\delta^\mu_\alpha f^\nu_{\nu\beta}\right)\d g_{\rho\sigma}\wedge \d^3x_{\mu}=0\ .
\eeann
Doing the pullback to $\mathcal{P}_f$, we have that
$j^*\d \Gamma^\alpha_{\beta\gamma}=\frac12\d \Gamma^\alpha_{(\beta\gamma)}+\frac16\delta^\alpha_\beta \d T^r_{r\gamma}-\frac16\delta^\alpha_\gamma \d T^r_{r\beta}$. As every coefficient must vanish, taking
in particular the corresponding to the factor 
$\d\Gamma^\alpha_{(\beta\gamma)}$ and contracting with $\delta_\mu^\alpha$, we obtain that 
$G^{\beta\gamma}=0$. Therefore we have
\beann
\frac{\partial H}{\partial \Gamma^\alpha_{\beta\gamma}}f^\alpha_{\beta\gamma}&=&0 
\quad ; \quad \mbox{(\rm on ${\cal P}_f$)} \ ,
\\
-\frac16f^\mu_{\alpha\beta}+\frac16f^\mu_{\beta\alpha}+\frac16\delta^\mu_\beta f^\nu_{\nu\alpha}+\frac16\delta^\mu_\alpha f^\nu_{\nu\beta}  &=&  0 
\quad ; \quad \mbox{(\rm on ${\cal P}_f$)} \ .
\eeann
Following the same argument as in Proposition \ref{pr:gv:la}, 
these equations have two kinds of solutions on ${\cal P}_f$: 
trace solutions, $f^\alpha_{\beta\gamma}=C_\beta\delta_\gamma^\alpha$, and torsion solutions,
$f^\alpha_{\beta\gamma}=k^\alpha_{\beta\gamma}$; with $k^\alpha_{\beta\gamma}+k^\alpha_{\gamma\beta}=0$ and $k^\mu_{\mu\gamma}=0$. 
Likewise, only the trace solutions are tangent to $\mathcal{P}_f$.
\end{proof}

\subsection{Intrinsic interpretation of the pure-connection coordinates}

Now we present a fibered manifold and a Hamiltonian function which involve 
only the connection and we prove that this system is equivalent 
to the Hamiltonian formalism for the Metric-Affine action.

The configuration bundle for this pure-connection system is the bundle $\pi_\Gamma\colon E_\Gamma\rightarrow M$, 
where $M$ is the connected orientable 4-dimensional manifold representing space-time, as above,
and $E_\Gamma=C(LM)$, the bundle of connections on $M$;
that is, linear connections in $\Tan M$. Then, consider the bundles 
${\cal M}\pi_\Gamma\equiv\Lambda_2^4(\Tan^*E_\Gamma)$  
and $J^1\pi^*_\Gamma\equiv {\cal M}\pi_\Gamma/\Lambda_1^4(\Tan^*E_\Gamma)$, 
with local coordinates 
$(x^\mu,\Gamma^\alpha_{\beta\gamma},p, p_\alpha^{\beta\gamma,\mu})$ 
and $(x^\mu,\Gamma^\alpha_{\beta\gamma},p_\alpha^{\beta\gamma,\mu})$ respectively.

 Consider a {\sl Hamiltonian section} $h_\Gamma\colon J^1\pi_\Gamma^*\to{\cal M}\pi_\Gamma$ of the projection
 $\mu_\Gamma\colon{\cal M}\pi_\Gamma\to J^1\pi_\Gamma^*$.
 In a local chart of natural coordinates, $U\subset J^1\pi_\Gamma^*$,
 this Hamiltonian section is specif\/ied by a {\sl local Hamiltonian function}
 $H_\Gamma\in\Cinfty (U)$ such that
 $h_\Gamma(x^\mu,\Gamma^\alpha_{\beta\gamma},p_\alpha^{\beta\gamma,\mu})=
(x^\mu,\Gamma^\alpha_{\beta\gamma},p=
-H_\Gamma(x^\nu,\Gamma^\delta_{\rho\sigma},p_\delta^{\rho\sigma,\nu}),p_\alpha^{\beta\gamma,\mu})$
(see \cite{CCI-91,art:Roman09}).
This Hamiltonian function is
$$
H_\Gamma= -\frac13p^{\alpha\beta}\left(\Gamma^{\gamma}_{\beta\sigma}\Gamma^{\sigma}_{\gamma\alpha}-\Gamma^{\gamma}_{\beta\alpha}\Gamma^{\sigma}_{\sigma\gamma}\right) \ .
$$
The bundle ${\cal M}\pi_\Gamma$ is canonically endowed with the
 corresponding multisymplectic Liouville $5$-form $\Omega_\Gamma\in\df^5({\cal M}\pi_\Gamma)$. 
Then, the Hamilton-Cartan form is
$$
\Omega_{H_\Gamma}\equiv h_\Gamma^*\Omega_\Gamma=
\d H\wedge\d^4x -\d p_\alpha^{\beta\gamma,\mu}\wedge\d \Gamma^{\alpha}_{\beta\gamma}\wedge \d^3x_{\mu}\in\df^5(J^1\pi_\Gamma^*).
$$

Furthermore, we introduce the following constraints on $J^1\pi^*_\Gamma$:
$$
p_\alpha^{\beta\gamma,\mu}=\frac13\delta_\alpha^\beta p_\nu^{\nu\mu,\gamma}- \frac13\delta_\alpha^\mu p_\nu^{\nu\beta,\gamma}
\quad , \quad
p_\mu^{\mu\alpha,\beta}=p_\mu^{\mu\beta,\alpha}.
$$
Let $\jmath_\Gamma\colon{\cal P}_{\Gamma}\hookrightarrow J^1\pi^*_\Gamma$ 
be  the submanifold locally defined by these constraints. 
Then we can construct the premultisymplectic form
\beann
\Omega^0_{H_\Gamma}=\jmath_\Gamma^*\Omega_{H_\Gamma}&=&
\d H\wedge\d^4x +\frac16\delta_\alpha^\mu\d p_\nu^{\nu\beta, \gamma}\wedge\d \Gamma^{\alpha}_{(\beta\gamma)}\wedge \d^3x_{\mu}
\\& &-\frac16\delta_\alpha^\beta\d p_\nu^{\nu\mu, \gamma}\wedge\d \Gamma^{\alpha}_{\beta\mu}\wedge \d^3x_{\gamma}  -\frac16\delta_\alpha^\beta\d p_\nu^{\nu\mu ,\gamma}\wedge\d \Gamma^{\alpha}_{\beta\gamma}\wedge \d^3x_{\mu}\ .
\eeann

\begin{prop}
There exists a diffeomorphism 
$\zeta\colon{\cal P}_\Gamma\rightarrow \mathcal{P}$ such that 
$\Omega_{H_\Gamma}^0=\zeta^*\Omega_H$
and hence
the Hamiltonian systems $({\cal P}_\Gamma,\Omega_{H_\Gamma})$ and 
$(\mathcal{P},\Omega_H)$ are equivalents.
\end{prop}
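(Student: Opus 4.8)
The plan is to exhibit $\zeta$ explicitly in pure-connection coordinates and then to observe that, in those coordinates, the two Hamiltonian structures are literally given by the same formulas, so that the stated pullback identity becomes a term-by-term comparison.

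First I would check that the triple $(x^\mu,\Gamma^\alpha_{\beta\gamma},p^{\alpha\beta})$, with $p^{\alpha\beta}:=p_\nu^{\nu\alpha,\beta}$ and $0\leq\alpha\leq\beta\leq 3$, is a genuine coordinate system on ${\cal P}_\Gamma$. This is the exact analogue on ${\cal P}_\Gamma$ of the pure-connection coordinates on ${\cal P}$: the first constraint defining $\jmath_\Gamma\colon{\cal P}_\Gamma\hookrightarrow J^1\pi^*_\Gamma$ expresses every momentum $p_\alpha^{\beta\gamma,\mu}$ as a linear combination of the traces $p_\nu^{\nu\mu,\gamma}$ and $p_\nu^{\nu\beta,\gamma}$, while the second constraint makes $p^{\alpha\beta}$ symmetric; a dimension count ($4+64+10=78$) then confirms that $(x^\mu,\Gamma^\alpha_{\beta\gamma},p^{\alpha\beta})$ are coordinates on ${\cal P}_\Gamma$. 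On the other side, the pure-connection coordinate on ${\cal P}$ is by definition $p^{\alpha\beta}=p_r^{r\alpha,\beta}$, which equals $-3\varrho g^{\alpha\beta}$ by the Legendre constraints \eqref{hamcons} and Lemma \ref{le:mc}. Thus both manifolds carry coordinates of exactly the same type, namely the space-time coordinates, the connection, and the symmetric trace of the connection momenta.

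Then I would define $\zeta\colon{\cal P}_\Gamma\to{\cal P}$ to be the identity in these matched coordinates, i.e. $\zeta(x^\mu,\Gamma^\alpha_{\beta\gamma},p^{\alpha\beta})=(x^\mu,\Gamma^\alpha_{\beta\gamma},p^{\alpha\beta})$. It is a diffeomorphism because it is the identity between two charts that cover ${\cal P}_\Gamma$ and ${\cal P}$ respectively, the only delicate point being that $p^{\alpha\beta}$ is nondegenerate on ${\cal P}$ (being a nonzero multiple of $g^{\alpha\beta}$, cf. Lemma \ref{le:mc}), so that $\zeta$ is onto with smooth inverse. Comparing the Hamiltonian functions, $H$ on ${\cal P}$ and $H_\Gamma$ on ${\cal P}_\Gamma$ are given by the identical expression $-\tfrac13 p^{\alpha\beta}(\Gamma^\gamma_{\beta\sigma}\Gamma^\sigma_{\gamma\alpha}-\Gamma^\gamma_{\beta\alpha}\Gamma^\sigma_{\sigma\gamma})$, whence $\zeta^*H=H_\Gamma$. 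Finally I would match the two $5$-forms: writing out $\Omega_H$ in pure-connection coordinates and $\Omega^0_{H_\Gamma}=\jmath_\Gamma^*\Omega_{H_\Gamma}$, and using $\zeta^*p^{\alpha\beta}=p_\nu^{\nu\alpha,\beta}$, one sees that the four summands (the $\d H\wedge\d^4x$ term, the $\d p\wedge\d\Gamma^\alpha_{(\beta\gamma)}$ term, and the two $\delta^\beta_\alpha$ terms) coincide identically. Hence $\zeta^*\Omega_H=\Omega^0_{H_\Gamma}$, and since $\zeta$ is a diffeomorphism it carries solutions of $\inn({\bf X}_H)\Omega_H=0$ into solutions of the corresponding equation for $\Omega^0_{H_\Gamma}$, giving the asserted equivalence.

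The hard part will not be the computation but the bookkeeping required to make $\zeta$ genuinely intrinsic rather than a mere coincidence of chart formulas: one must verify that the pure-connection coordinates are globally well defined on ${\cal P}_\Gamma$ (equivalently, that the constraint system cutting out ${\cal P}_\Gamma$ admits the solved form used above everywhere, with $p^{\alpha\beta}$ nondegenerate), and that the coordinate identification is independent of the chosen natural chart, so that $\zeta$ is a global diffeomorphism between the two ambient multimomentum settings and not only a local one.
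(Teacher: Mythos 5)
Your proposal is correct and follows essentially the same route as the paper: the paper also defines $\zeta$ by $\zeta^*x^\mu=x^\mu$, $\zeta^*\Gamma^\alpha_{\beta\gamma}=\Gamma^\alpha_{\beta\gamma}$, $\zeta^*p^{\gamma\mu}=p_\nu^{\nu\gamma,\mu}$ (i.e.\ the identity in the matched pure-connection charts, with the inverse reconstructing the full momenta from the traces via the constraints cutting out ${\cal P}_\Gamma$) and then verifies $\Omega^0_{H_\Gamma}=\zeta^*\Omega_H$ directly from the local expressions. Your additional checks (the dimension count showing the traces form a chart on ${\cal P}_\Gamma$, and the remark that the paper's construction is a priori only local) are sensible refinements of the same argument rather than a different approach.
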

\begin{proof}
Using the pure-connection coordinates in $\mathcal{P}$, the diffeomorphism is locally given by
$$
\zeta^*x^\mu=x^\mu \quad ,\quad
\zeta^*\Gamma^\alpha_{\beta\gamma}=\Gamma^\alpha_{\beta\gamma} \quad ,\quad
\zeta^*p^{\gamma\mu}=p_\nu^{\nu\gamma,\mu} \ .
$$
Its inverse acting on the momenta is given by 
$$
{\zeta^{-1}}^*x^\mu=x^\mu \ ,\
{\zeta^{-1}}^*\Gamma^\alpha_{\beta\gamma}=\Gamma^\alpha_{\beta\gamma} \ ,\
{\zeta^{-1}}^*p_\alpha^{\beta\gamma,\mu}={\zeta^{-1}}^*\left(\frac13\delta_\alpha^\beta p_\nu^{\nu\mu,\gamma}- \frac13\delta_\alpha^\mu p_\nu^{\nu\beta,\gamma}\right)=
\frac13\delta_\alpha^\beta p^{\mu\gamma}- \frac13\delta_\alpha^\mu p^{\beta\gamma} \ ,
$$
and is an exhaustive map because ${\rm Im}(\zeta^{-1})={\cal P}_{\Gamma}$, 
as a consequence of the reasoning done before in this paragraph. 
The equality $\Omega_{H_\Gamma}^0=\zeta^*\Omega_H$ is obtained
straightforwardly from the local expressions of these forms.
\end{proof}

\section{Relation with the Einstein-Hilbert model}
\label{se:rel:he-ma}

The Einstein-Hilbert model can be recovered from the Einstein-Palatini (Metric-Affine) model by demanding the connection to be the Levi-Civita connection associated with the metric \cite{pons}. 
In this section we will show this equivalence geometrically.

\subsection{The Einstein-Hilbert model}

(See \cite{art:GR-2017} for more details and the proofs of the results).

The Lagrangian description of the Einstein-Hilbert model (without energy-matter sources) is developed in the bundle $\pi_\Sigma\colon\Sigma\rightarrow M$, where the fibres are spaces of 
Lorentz metrics on $M$;
that is, for every $x\in M$, the fiber $\pi_\Sigma^{-1}(x)$
is the set of metrics with signature $(-+++)$ acting on $\Tan_xM$.
The adapted fiber coordinates in $E$ are $(x^\mu,g_{\alpha\beta})$. The canonical projections of the jet bundles are $\overline{\pi}^k_\Sigma: J^k\pi_\Sigma\rightarrow M$.
The {\sl Hilbert-Einstein Lagrangian density} (in vacuum) is
$\mathcal{L}_{\mathfrak V}=L_{\mathfrak V}\,\d^4x$, 
being $L_{\mathfrak V}\in\Cinfty(J^2\pi_\Sigma)$ the {\sl Hilbert-Einstein Lagrangian function}, which is again
$L_{\mathfrak V}=\varrho R$,
where $\varrho=\sqrt{|det(g_{\alpha\beta})|}$ and $R$ 
is the {\sl scalar curvature}, but now
the connection is the {\sl Levi-Civita connection} of the metric $g$.

The Lagrangian formalism takes place in the higher-order bundle $J^3\pi_\Sigma$, with local coordinates
$(x^\mu,\,g_{\alpha\beta},\,g_{\alpha\beta,\mu},\,g_{\alpha\beta,\mu\nu},
\,g_{\alpha\beta,\mu\nu\lambda})$, which is endowed with
the {\sl Poincar\'{e}-Cartan $5$-form} associated with $L_{\mathfrak V}$,
denoted by $\Omega_{\mathcal{L}_{\mathfrak V}}\in\df^5(J^3\pi_\Sigma)$, 
and so we have the Lagrangian system $(J^3\pi_\Sigma,\Omega_{\mathcal{L}_{\mathfrak V}})$.
It is a premultisymplectic system since $L_{\mathfrak V}$ is singular
and then, the constraint algorithm leads to a final constraint submanifold 
$S_f\hookrightarrow J^3\pi_\Sigma$ where there are tangent holonomic multivector fields 
which are solutions to the Lagrangian field equations.

The Hamiltonian formalism takes place in the bundle ${\cal P}_\Sigma\to M$,
where $\mathcal{P}_\Sigma=\mathcal{FL}_{\mathfrak V}(J^3\pi_\Sigma)$.
In a similar way as in the Einstein-Palatini model, 
we can construct the Hamilton-Cartan form
$\Omega_{h_{\mathfrak V}}\in\Omega^5(\mathcal{P})$ which verifies that
$\Omega_{\mathcal{L}_{\mathfrak V}}=
{\mathcal{FL}^o_{\mathfrak V}}^*\Omega_{h_{\mathfrak V}}$;
where 
${\mathcal{FL}^o_{\mathfrak V}}\colon J^3\pi_\Sigma\to{\cal P}_\Sigma$
is the restricted Legendre map associated with $\mathcal{L}_{\mathfrak V}$.
So we have the Hamiltonian system 
$({\cal P},\Omega_{h_{\mathfrak V}})$.
The form $\Omega_{h_{\mathfrak V}}$ is multisymplectic and 
then $\mathcal{P}_\Sigma$ is the final constraint submanifold
for the Hamiltonian field equations.
The essential thing is that it can be proved that $\mathcal{P}_\Sigma$ is diffeomorphic to $J^1\pi_\Sigma$
(and hence to ${J^1\pi_\Sigma}^*$).

It is proved \cite{first,rosado2} that there are first-order (regular) Lagrangians in
$J^1\pi_\Sigma$ which are equivalent to the the Hilbert-Einstein Lagrangian 
and that allow us a description of the Einstein-Hilbert model in $J^1\pi_\Sigma$ (with coordinates $(x^\mu,g_{\alpha\beta},g_{\alpha\beta,\mu})$). 
The first-order Lagrangian density proposed in \cite{rosado} is 
$\overline\Lag=\overline{L}\,\d^4x$, where the Lagrangian function is
\beann
\overline{L}&=&L_0-\sum_{\substack{\alpha\leq\beta\\\lambda\leq\sigma}}g_{\alpha\beta,\mu}g_{\lambda \sigma,\nu}\derpar{L^{\alpha\beta,\mu\nu}}{g_{\lambda\sigma}}
\in\Cinfty(J^1\pi_\Sigma) \ ;
\\
\displaystyle L^{\alpha\beta,\mu\nu}&=&
\frac{n(\alpha\beta)}{2}\varrho(g^{\alpha\mu}g^{\beta\nu}+
g^{\alpha\nu}g^{\beta\mu}-2g^{\alpha\beta}g^{\mu\nu})\ ,
\\
L_0&=&\varrho g^{\alpha\beta}\{g^{\gamma\delta}(g_{\delta\mu,\beta}\tilde\Gamma^{\mu}_{\alpha\gamma}-g_{\delta\mu,\gamma}\tilde\Gamma^{\mu}_{\alpha\beta}) +\tilde\Gamma^{\delta}_{\alpha\beta}\tilde\Gamma^{\gamma}_{\gamma\delta}-\tilde\Gamma^{\delta}_{\alpha\gamma}\tilde\Gamma^{\gamma}_{\beta\delta}\} \ ,
\eeann
where $\tilde\Gamma^{\mu}_{\alpha\gamma}$ are the Christoffel symbols of the Levi-Civita connection associated with the metric $g_{\alpha\beta}$.
The corresponding Poincar\'e-Cartan form is
$$
\Omega_{\overline\Lag}=\d \overline{L}\wedge\d^4x-\sum_{\alpha\leq\beta}\d \frac{\partial \overline{L}}{\partial g_{\alpha\beta,\mu}}\wedge\d g_{\alpha\beta}\wedge\d^3x_\mu
\in\df^5(J^1\pi_\Sigma)\ .
$$
So we have the Lagrangian system $(J^1\pi_\Sigma,\Omega_{\overline\Lag})$ and,
as the Lagrangian $\overline{L}$ is regular, then
$\Omega_{\overline\Lag}$ is a multisymplectic form
and the Lagrangian field equations
have solutions everywhere in $J^1\pi_\Sigma$.

In addition, the corresponding Legendre map 
$\mathcal{F\overline L}\colon J^1\pi_\Sigma\to{J^1\pi_\Sigma}^*$
is a diffeomorphism. Then we have the Hamilton-Cartan form 
$\Omega_{\overline{h}}:=((\mathcal{F\overline L})^{-1})^*\Omega_{\overline\Lag}\in\df^5({J^1\pi_\Sigma}^*)$.
So we have the Hamiltonian system $({J^1\pi_\Sigma}^*,\Omega_{\overline{h}})$
and the corresponding Hamiltonian field equations
have solutions everywhere in ${J^1\pi_\Sigma}^*$. In addition, 
the solutions to the Lagrangian problem are in one-to-one correspondence
with thes solution to the Hamiltonian problem through the Legendre map.

\subsection{Relation between the Einstein-Hilbert and the Metric-Affine models}

The pre-metricity constraints determine the derivatives of the metric in function of the metric and the connection. 
The converse, which is a similar result to the existence of the Levi-Civita connection, can be formulated as follows:

\begin{prop}\label{pr:im}
Let $(M,g)$ be a (semi)-Riemmanian manifold of dimension $m>1$ and $C_\alpha\in C^\infty(U)$, $1\leq\alpha\leq m$, fixed functions defined on a open set $U\subset M$. Then there exists a unique linear connection $\Gamma$ defined on $U$ such that:
\begin{enumerate}
\item Pre-metricity: $(\nabla ^\Gamma g)_{\rho\sigma,\mu}=\displaystyle\frac{2}{m-1}g_{\rho\sigma}T^\lambda_{\lambda\mu}$.
\item Torsion: $T^\alpha_{\beta\gamma}=
\displaystyle\frac1{m-1}\,\delta^\alpha_\beta \,T^\lambda_{\lambda\gamma}-
\frac1{m-1}\,\delta^\alpha_\gamma\,T^\lambda_{\lambda\beta}$
\item Gauge fixing: $\Gamma^\lambda_{\alpha\lambda}=C_\alpha$.
\end{enumerate}
\end{prop}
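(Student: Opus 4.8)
The plan is to read the three conditions as a linear system for the Christoffel symbols $\Gamma^\alpha_{\beta\gamma}$ on $U$ (the derivatives $g_{\rho\sigma,\mu}$ entering as known inhomogeneous data), and to solve it by adapting the classical argument for the Levi-Civita connection to the case of a prescribed non-metricity and torsion. Writing $t_\mu:=T^\lambda_{\lambda\mu}$ for the trace of the torsion, the first remark is that the torsion condition~(2) already fixes the \emph{whole} antisymmetric part of the connection in terms of $t_\mu$:
\[
\Gamma^\alpha_{[\beta\gamma]}=\tfrac12\,T^\alpha_{\beta\gamma}=\tfrac{1}{2(m-1)}\big(\delta^\alpha_\beta t_\gamma-\delta^\alpha_\gamma t_\beta\big).
\]
Thus the only genuine freedom in the problem is the covector field $t_\mu$ ($m$ functions on $U$), and the whole argument reduces to showing that condition~(3) determines $t_\mu$ uniquely.

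First I would prove that, for each fixed $t_\mu$, conditions~(1) and~(2) possess a unique solution $\Gamma=\Gamma[t]$. With the antisymmetric part fixed as above, the pre-metricity condition~(1) is a linear equation for the symmetric part $\Gamma^\alpha_{(\beta\gamma)}$, solved exactly as in the Koszul computation: writing~(1) for the three cyclic orderings of $(\rho,\sigma,\mu)$ and forming the alternating sum, every antisymmetric contribution is re-expressed through the known torsion, and one isolates
\[
2\,g_{\sigma\lambda}\Gamma^\lambda_{(\mu\rho)}=g_{\rho\sigma,\mu}+g_{\sigma\mu,\rho}-g_{\mu\rho,\sigma}+(\text{terms linear in }g\text{ and }t_\mu),
\]
from which $\Gamma^\lambda_{(\mu\rho)}$ follows upon raising the index with $g^{\nu\sigma}$. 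Since the construction is algebraic in $g,g^{-1},t$ and uses $g_{\rho\sigma,\mu}$ only through the inhomogeneous Christoffel combination, the result transforms as a genuine linear connection, and the Koszul manipulation being reversible shows that this $\Gamma[t]$ does satisfy~(1). Conceptually, the cleanest formulation is that the map $\Gamma\mapsto\big((\nabla^\Gamma g)_{\rho\sigma,\mu},\,T^\alpha_{\beta\gamma}\big)$ is affine, its underlying linear map acts on differences of connections (i.e.\ on $(1,2)$-tensors) between spaces of equal dimension $m^3$, and its kernel is trivial (a difference of connections with vanishing non-metricity and vanishing torsion is forced to be zero, by the same cyclic argument that yields uniqueness of the Levi-Civita connection). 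Hence prescribing the non-metricity $\tfrac{2}{m-1}g_{\rho\sigma}t_\mu$ and the torsion of~(2) singles out $\Gamma[t]$ uniquely; note that $m>1$ is already needed here for the constants $\tfrac{1}{m-1}$ to make sense.

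Next I would determine $t_\mu$ from condition~(3). The efficient route is to contract the pre-metricity condition~(1) with $g^{\rho\sigma}$. Using the identity $g^{\rho\sigma}g_{\rho\sigma,\mu}=2\,\partial_\mu\ln\varrho$, together with $g^{\rho\sigma}\big(g_{\sigma\lambda}\Gamma^\lambda_{\mu\rho}+g_{\rho\lambda}\Gamma^\lambda_{\mu\sigma}\big)=2\,\Gamma^\lambda_{\mu\lambda}$ and $g^{\rho\sigma}g_{\rho\sigma}=m$, the contraction collapses to the single scalar identity
\[
\Gamma^\lambda_{\mu\lambda}=\derpar{\ln\varrho}{x^\mu}-\frac{m}{m-1}\,t_\mu .
\]
Imposing the gauge condition $\Gamma^\lambda_{\alpha\lambda}=C_\alpha$ then gives $t_\alpha=\tfrac{m-1}{m}\big(\partial_\alpha\ln\varrho-C_\alpha\big)$, which is uniquely solvable precisely because the coefficient $\tfrac{m}{m-1}$ is nonzero for $m>1$. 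Feeding this $t_\mu$ back into $\Gamma[t]$ yields a connection satisfying all three conditions, while the fact that $t_\mu$ was forced shows it is the only one; this establishes both existence and uniqueness.

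The cyclic Koszul manipulation and the back-substitution verifying~(1) are routine and demand only careful index bookkeeping; they are not the conceptual core. The essential point — and the only place, beyond giving meaning to the constants, where the hypothesis $m>1$ is used — is the invertibility of the trace relation linking $\Gamma^\lambda_{\alpha\lambda}$ and $t_\mu$, which is what lets the gauge datum $C_\alpha$ pick out a unique torsion trace, hence a unique connection. I would therefore expect this trace computation, rather than the Koszul step, to be the crux to get right.
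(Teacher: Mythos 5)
Your proof is correct, and it reaches the same two computational pillars as the paper's argument --- a Koszul-type cyclic manipulation of the pre-metricity condition and a trace contraction that pins down the remaining scalar freedom --- but it packages them differently. The paper substitutes the torsion and gauge conditions into the cyclic sum right away, contracts the indices $\alpha$ and $\rho$ of the resulting equation to solve for $\Gamma^\lambda_{\lambda\sigma}$, and back-substitutes to land on the explicit closed formula
$\Gamma^\alpha_{\rho\sigma}=\tfrac12 g^{\mu\alpha}(g_{\rho\mu,\sigma}+g_{\sigma\mu,\rho}-g_{\rho\sigma,\mu})-\tfrac{1}{2m}g^{\mu\nu}g_{\mu\nu,\rho}\delta^\alpha_\sigma+\tfrac1m C_\rho\delta^\alpha_\sigma$. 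You instead first observe that conditions (1)--(2) describe a family of connections parametrized bijectively by the torsion trace $t_\mu$ (via the standard fact that a connection is uniquely determined by its metric, torsion and non-metricity --- your dimension count $m^3=m^3$ and trivial-kernel argument are the right justification), and then show that contracting (1) with $g^{\rho\sigma}$ forces $t_\alpha=\tfrac{m-1}{m}(\partial_\alpha\ln\varrho-C_\alpha)$; this value is consistent with the paper's $\Gamma^\lambda_{\lambda\sigma}=\tfrac{m-1}{m}\partial_\sigma\ln\varrho+\tfrac1m C_\sigma$ after subtracting $C_\sigma$. What the paper's route buys is the explicit formula for $\Gamma$, which makes the uniqueness manifest and is reusable in the surrounding discussion of the map $\xi$; what your route buys is a cleaner conceptual separation (the gauge datum $C_\alpha$ does nothing but select the torsion trace) and a sharper identification of where $m>1$ is genuinely needed. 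The one place you should be careful is the step you dismiss as routine: the claim that $\Gamma[t]$ transforms as a connection and actually satisfies (1) is exactly the reversibility of the Koszul manipulation, and writing out the three cyclic permutations with the prescribed non-metricity $\tfrac{2}{m-1}g_{\rho\sigma}t_\mu$ on the right is where the specific coefficients of the paper's first displayed equation come from; omitting it leaves the existence half resting on an appeal to a standard lemma rather than a verification, though the lemma is indeed standard and correct.
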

\begin{proof}
From the pre-metricity conditions we have
\beann
\frac12g^{\mu\alpha}(g_{\rho\mu,\sigma}+g_{\sigma\mu,\rho}-g_{\rho\sigma,\mu})&=&
\Gamma^\alpha_{\rho\sigma}+\frac12(g^{\mu\alpha}g_{\rho\lambda}T^\lambda_{\sigma\mu}+g^{\mu\alpha}g_{\sigma\lambda}T^\lambda_{\rho\mu}-T^\alpha_{\rho\sigma})
\\ & &
+\frac1{m-1}(T^\lambda_{\lambda\sigma}\delta^\alpha_\rho+T^\lambda_{\lambda\rho}\delta^\alpha_\sigma-g^{\alpha\mu}g_{\rho\sigma}T^\lambda_{\lambda\mu}) \ .
\eeann
Using the torsion conditions and the gauge fixing we get
$$
\frac12\,g^{\mu\alpha}(g_{\rho\mu,\sigma}+g_{\sigma\mu,\rho}-g_{\rho\sigma,\mu})=\Gamma^\alpha_{\rho\sigma}+\frac{1}{m-1}\Gamma^\lambda_{\lambda\rho}\delta_\sigma^\alpha-\frac{1}{m-1}C_\rho\delta_\sigma^\alpha \ ,
$$
and contracting the indices $\alpha$ and $\rho$ and rearranging the terms:
$$
\frac{1}{m-1}\,\Gamma^\lambda_{\lambda\sigma}=\frac{1}{2m}g^{\mu\nu}g_{\mu\nu,\sigma}+\frac{1}{m(m-1)}C_\sigma \ .
$$
Finally, incorporating this result to the previous equation, we conclude that
$$
\Gamma^\alpha_{\rho\sigma}=\frac12g^{\mu\alpha}(g_{\rho\mu,\sigma}+g_{\sigma\mu,\rho}-g_{\rho\sigma,\mu})-\frac{1}{2m}g^{\mu\nu}g_{\mu\nu,\rho}\delta^\alpha_\sigma+\frac1mC_\rho\delta^\alpha_\sigma \ ,
$$
which determines uniquely the connection in $U$.
\end{proof}

\noindent {\bf Comment}: This proposition is invariant under diffeomorphism in the following sense: it has been shown in Section \ref{sec:lag:sym} that the pre-metricity and torsion conditions are invariant. For the gauge fixing condition 3, consider an infinitesimal Lagrangian symmetry $j^1Y_Z$, and compute de Lie derivative
$$
0=\Lie(j^1Y_Z)(C_\alpha-\Gamma^\lambda_{\alpha\lambda})=f^\mu\frac{\partial C_\alpha}{\partial x^\mu}+\frac{\partial f^\mu}{\partial x^\alpha}\Gamma^{\lambda}_{\mu\lambda}+\frac{\partial^2 f^{\lambda}}{\partial x^\alpha \partial x^\lambda}\ .
$$
Since $f$ is a diffeomorphism in $M$, its Jacobian matrix $J_f$ is invertible:

$$
\Gamma^\lambda_{\alpha\lambda}=-(J^{-1}_f)^\mu_\alpha\left(\frac{\partial^2 f^{\lambda}}{\partial x^\mu \partial x^\lambda}+f^\mu\frac{\partial C_\alpha}{\partial x^\mu}\right)\equiv C'_\alpha\in C^\infty (M) \ .
$$
Therefore, a diffeomorphism in the space-time manifold changes
only the functions $C_\alpha$; that is, the particular choice of a gauge.

In order to establish the relation between both models,
our standpoint is the Hamiltonian formalism of the
Einstein-Palatini model developed in Section \ref{nmc}.
So, let ${\cal P}_f\hookrightarrow{\cal P}$ be
the final constraint submanifold for this last model.
Then, consider the following local map:
\beann
\xi\colon& {\cal P} & \rightarrow  J^1\pi_\Sigma
\\
&(x^\mu,g_{\alpha\beta},\Gamma^\alpha_{\beta\gamma})&\mapsto (x^\mu,g_{\alpha\beta},\overline{g}_{\alpha\beta,\gamma})
\eeann
where $\overline{g}_{\alpha\beta,\gamma}=g_{\alpha\lambda}\Gamma^\lambda_{\mu\beta}+
g_{\beta\lambda}\Gamma^\lambda_{\mu\alpha}+\frac{2}{3}g_{\alpha\beta}T^\lambda_{\lambda\mu}$.
Notice that 
$\overline{\tau}_{\mathcal{P}}\circ \jmath_f=\overline{\pi}^1_\Sigma\circ\xi$.

\begin{lemma}\label{le:im:inj}
Denoting  by ${\cal G}$ the set of gauge variations obtained in Proposition \ref{gauge1},
we have that
$\ker\,\xi_*=~{\cal G}$.
\end{lemma}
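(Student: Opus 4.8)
The plan is to compute $\xi_*$ directly in the non-momenta coordinates $(x^\mu,g_{\alpha\beta},\Gamma^\alpha_{\beta\gamma})$ and read off its kernel on $\mathcal{P}_f$. Writing a generic vector field on $\mathcal{P}$ as
\[
X=f^\mu\frac{\partial}{\partial x^\mu}+\sum_{\rho\leq\sigma}f_{\rho\sigma}\frac{\partial}{\partial g_{\rho\sigma}}+f^\alpha_{\beta\gamma}\frac{\partial}{\partial\Gamma^\alpha_{\beta\gamma}},
\]
and recalling that $\xi^*x^\mu=x^\mu$, $\xi^*g_{\alpha\beta}=g_{\alpha\beta}$ and $\xi^*g_{\alpha\beta,\gamma}=\overline{g}_{\alpha\beta,\gamma}$, the condition $\xi_*X=0$ is equivalent to $X(x^\mu)=0$, $X(g_{\alpha\beta})=0$ and $X(\overline{g}_{\alpha\beta,\gamma})=0$. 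The first two equations force immediately $f^\mu=0$ and $f_{\alpha\beta}=0$, so every element of $\ker\xi_*$ is $\tau_{\cal P}$-vertical with components only along $\partial/\partial\Gamma^\alpha_{\beta\gamma}$; the whole content of the Lemma is thus carried by the single linear system
\[
g_{\alpha\lambda}f^\lambda_{\gamma\beta}+g_{\beta\lambda}f^\lambda_{\gamma\alpha}+\tfrac23\,g_{\alpha\beta}\bigl(f^\lambda_{\lambda\gamma}-f^\lambda_{\gamma\lambda}\bigr)=0,
\]
obtained from $X(\overline{g}_{\alpha\beta,\gamma})=0$ using the definition of $\overline{g}$ (equivalently, the pre-metricity relation \eqref{eq:pm}).

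For the inclusion ${\cal G}\subseteq\ker\xi_*$ I would substitute the gauge field $f^\alpha_{\beta\gamma}=C_\beta\delta^\alpha_\gamma$ into this system: the two metric-contraction terms give $2C_\gamma g_{\alpha\beta}$, while $\tfrac23 g_{\alpha\beta}X(T^\lambda_{\lambda\gamma})=\tfrac23 g_{\alpha\beta}(C_\gamma-4C_\gamma)$ gives $-2C_\gamma g_{\alpha\beta}$, so the sum vanishes identically. Since these fields are $\tau_{\cal P}$-vertical and, by Proposition \ref{gauge1}, tangent to $\mathcal{P}_f$, this yields ${\cal G}\subseteq\ker\xi_*$.

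The converse is the substantive part, and here I would argue through Proposition \ref{pr:im}. The fibre of $\xi$ over a point $(x^\mu,g_{\alpha\beta},g_{\alpha\beta,\gamma})\in J^1\pi_\Sigma$ consists of the connections $\Gamma\in\mathcal{P}_f$ with the prescribed metric $g_{\alpha\beta}$ and with $\overline{g}_{\alpha\beta,\gamma}(g,\Gamma)$ equal to the prescribed $g_{\alpha\beta,\gamma}$. By the definition of $\overline{g}$ and \eqref{eq:pm}, this last requirement is precisely the pre-metricity relation $(\nabla^\Gamma g)_{\alpha\beta,\gamma}=\tfrac23 g_{\alpha\beta}T^\lambda_{\lambda\gamma}$, i.e. condition (1) of Proposition \ref{pr:im} for $m=4$; condition (2) (the torsion being pure trace) holds because the point lies in $\mathcal{P}_f$. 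Proposition \ref{pr:im} then asserts that such a $\Gamma$ is unique once the gauge trace $\Gamma^\lambda_{\alpha\lambda}=C_\alpha$ is fixed, so each fibre of $\xi|_{\mathcal{P}_f}$ is a single orbit of the four free functions $C_\alpha$. Since a gauge field $C_\beta\delta^\alpha_\gamma\,\partial/\partial\Gamma^\alpha_{\beta\gamma}$ precisely varies this trace while leaving $g$ and $\overline{g}$ fixed, the tangent spaces to these fibres are exactly ${\cal G}$, and therefore $\ker\xi_*={\cal G}$.

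The main obstacle is the interpretation of the kernel. On all of $\mathcal{P}$ the linear system above has solutions beyond the trace fields---for instance any $f^\alpha_{\beta\gamma}=K^\alpha_{\beta\gamma}$ with $g_{\alpha\lambda}K^\lambda_{\beta\gamma}$ totally skew-symmetric satisfies it---so the equality $\ker\xi_*={\cal G}$ can hold only after the kernel is restricted to vector fields tangent to $\mathcal{P}_f$, the submanifold on which ${\cal G}$ is defined. This tangency is what removes the spurious solutions: imposing $\Lie(X)t^\alpha_{\beta\gamma}=0$ forces the skew (torsion) part of $f^\alpha_{\beta\gamma}$ to vanish, exactly as in the splitting into trace and torsion solutions used in Proposition \ref{pr:gv:la}. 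Invoking Proposition \ref{pr:im} simply packages this uniqueness-up-to-gauge cleanly and spares the delicate index bookkeeping.
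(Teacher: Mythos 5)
Your proposal is correct, and it reproduces the paper's computation up to and including the reduction of the problem to the linear system $g_{\alpha\lambda}f^\lambda_{\gamma\beta}+g_{\beta\lambda}f^\lambda_{\gamma\alpha}+\tfrac23 g_{\alpha\beta}(f^\lambda_{\lambda\gamma}-f^\lambda_{\gamma\lambda})=0$ together with the verification that the trace fields solve it; you also correctly isolate the point that the equality $\ker\xi_*={\cal G}$ only holds for fields tangent to ${\cal P}_f$, which is exactly the hypothesis the paper builds into its opening line. Where you genuinely diverge is in the inclusion $\ker\xi_*\subseteq{\cal G}$: the paper argues purely algebraically, subtracting the trace part $f^\lambda_{\lambda\beta}\delta^\alpha_\gamma$ to get a remainder $h$, showing $h^\nu_{\nu\gamma}=0$ and (by contracting with $g^{\alpha\beta}$) $h^\lambda_{\gamma\lambda}=0$, then using tangency to the torsion constraints to make $h$ symmetric in its lower indices, whence $S_{\alpha\gamma\beta}=g_{\alpha\lambda}h^\lambda_{\gamma\beta}$ is symmetric in one pair of slots and antisymmetric in another and so vanishes --- the same index gymnastics that proves uniqueness of the Levi-Civita connection. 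You instead invoke Proposition \ref{pr:im} to identify each fibre of $\xi\vert_{{\cal P}_f}$ with the affine family of connections parametrized by the gauge trace $C_\alpha$, so that the tangent spaces to the fibres are exactly the trace directions. Your route is legitimate and arguably cleaner, since Proposition \ref{pr:im} already encapsulates the uniqueness-up-to-trace statement; it also delivers surjectivity of $\xi\vert_{{\cal P}_f}$ for free, which the paper only extracts later in Theorem \ref{difeoequiv} (where it, too, appeals to Proposition \ref{pr:im}). The small price is that you must implicitly use the constant-rank of $\xi\vert_{{\cal P}_f}$ (supplied by the explicit affine dependence of $\Gamma$ on $C_\alpha$ in the proof of Proposition \ref{pr:im}) to identify $\ker(\xi\vert_{{\cal P}_f})_*$ with the tangent space to the fibre, and your parenthetical claim that tangency alone kills the skew part of $f^\alpha_{\beta\gamma}$ is slightly loose (tangency only forces the skew part to be pure trace; it vanishes only after combining with the trace conditions coming from the kernel equation) --- but neither point affects the validity of your main argument.
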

\begin{proof}
Consider a generic vector field $X\in\vf({\cal P})$,
tangent to ${\cal P}_f$,
$$
X=f^\mu\frac{\partial}{\partial x^\mu}+\sum_{\alpha\leq\beta}f_{\alpha\beta}\frac{\partial}{\partial g_{\alpha\beta}}+f^\alpha_{\beta\gamma}\frac{\partial}{\partial \Gamma^\alpha_{\beta\gamma}} \ .
$$
If $X\in\ker\,\xi_*$,
then $f^\mu=0$ and $f_{\alpha\beta}=0$. 
For the last coefficients we have:
$$
0=\xi_*X
=g_{\alpha\lambda}f^\lambda_{\gamma\beta}+
g_{\beta\lambda}f^\lambda_{\gamma\alpha}+\frac{2}{3}g_{\alpha\beta}\left(f^\lambda_{\lambda\gamma}-f^\lambda_{\gamma\lambda}\right) \ .
$$
For the coefficients of the form $f^\alpha_{\beta\gamma}=C_\beta\delta^\alpha_\gamma$ for $C_\beta\in\Cinfty({\cal P})$, 
the condition holds. 
Now, for every solution $f^\alpha_{\beta\gamma}$ to these equations, consider $h^\alpha_{\beta\gamma}=f^\alpha_{\beta\gamma}-f^\lambda_{\lambda\beta}\delta^\alpha_{\gamma}$, 
which are also solutions because the equation is linear. Thus
\beq\label{eq:im:aux}
g_{\alpha\lambda}h^\lambda_{\gamma\beta}+
g_{\beta\lambda}h^\lambda_{\gamma\alpha}-\frac{2}{3}g_{\alpha\beta}h^\lambda_{\gamma\lambda}=0 \ .
\eeq
Notice that $h^\nu_{\nu\gamma}=0$. 
Now, contracting with $g^{\alpha\beta}$, we obtain that $h^\lambda_{\gamma\lambda}=0$.
Furthermore, as we are on the points of ${\cal P}_f$, 
where the torsion constraints hold, this implies that $h^\alpha_{\beta\gamma}-h^\alpha_{\gamma\beta}=0$, 
and therefore they are symmetric functions
(for the indices $\beta\gamma$). Now, if 
$S_{\alpha\gamma\beta}:=g_{\alpha\lambda}h^\lambda_{\gamma\beta}$;
taking into account the symmetry of
$h^\alpha_{\beta\gamma}$, we have that $S_{\alpha\gamma\beta}=S_{\alpha\beta\gamma}$, 
and from \eqref{eq:im:aux} we obtain $S_{\alpha\gamma\beta}=-S_{\beta\gamma\alpha}$. 
These two conditions hold simultaneously only if $S_{\alpha\gamma\beta}=0$. 
Therefore, $h^\alpha_{\beta\gamma}=0$, and hence
$\displaystyle
\ker\,\xi_*=~\left\langle C_\beta\delta^\alpha_\gamma\frac{\partial}{\partial \Gamma^\alpha_{\beta\gamma}}
\right\rangle={\cal G}$.
\end{proof}

Let ${\cal P}'_f$ be the manifold obtained making the quotient of ${\cal P}_f$ 
(which is defined by the torsion constraints) 
by the gauge vector fields, and let the natural projection
$\tau_f'\colon{\cal P}_f\rightarrow{\cal P}'_f$.
Then:

\begin{theo}
\label{difeoequiv}
${\cal P}'_f$ is locally diffeomorphic to $J^1\pi_\Sigma$
and hence to ${J^1\pi_\Sigma}^*$.
\end{theo}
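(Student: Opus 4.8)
The plan is to promote the local map $\xi$ to a genuine (local) diffeomorphism on the quotient by showing that its restriction $\xi_f=\xi\circ\jmath_f\colon{\cal P}_f\to J^1\pi_\Sigma$ is a surjective submersion whose fibres are exactly the gauge orbits. First I would check that the components of $\xi_f$ are constant along gauge orbits, working directly from the coordinate expression $\overline{g}_{\alpha\beta,\gamma}=g_{\alpha\lambda}\Gamma^\lambda_{\gamma\beta}+g_{\beta\lambda}\Gamma^\lambda_{\gamma\alpha}+\frac{2}{3}g_{\alpha\beta}T^\lambda_{\lambda\gamma}$. For a gauge field $X=C_\beta\delta^\alpha_\gamma\,\partial/\partial\Gamma^\alpha_{\beta\gamma}\in{\cal G}$ one has $\Lie(X)\Gamma^\lambda_{\gamma\beta}=C_\gamma\delta^\lambda_\beta$ and $\Lie(X)T^\lambda_{\lambda\gamma}=-3C_\gamma$, so that $\Lie(X)\overline{g}_{\alpha\beta,\gamma}=g_{\alpha\beta}C_\gamma+g_{\alpha\beta}C_\gamma-2g_{\alpha\beta}C_\gamma=0$; the $g_{\alpha\beta}C_\gamma$ terms cancel against the variation of the torsion trace. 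Combined with Lemma \ref{le:im:inj}, which gives $\ker\xi_*={\cal G}$, this shows that $\xi_f$ factors through $\tau_f'\colon{\cal P}_f\to{\cal P}'_f$, yielding a well-defined map $\overline{\xi}\colon{\cal P}'_f\to J^1\pi_\Sigma$ whose differential is injective (the only tangent directions annihilated by $\xi_{f*}$ are the gauge ones, which are killed in passing to the quotient).

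Next I would exhibit the inverse of $\overline{\xi}$, which is precisely the content of Proposition \ref{pr:im} for $m=4$. Indeed, the pre-metricity condition there, $(\nabla^\Gamma g)_{\rho\sigma,\mu}=\frac{2}{3}g_{\rho\sigma}T^\lambda_{\lambda\mu}$, is exactly the equation $g_{\rho\sigma,\mu}=\overline{g}_{\rho\sigma,\mu}$ cutting out the image of $\xi$, and the torsion condition of that proposition is the very constraint defining ${\cal P}_f$. Hence, given any $(x^\mu,g_{\alpha\beta},g_{\alpha\beta,\gamma})\in J^1\pi_\Sigma$ together with a gauge choice $C_\alpha=\Gamma^\lambda_{\alpha\lambda}$, Proposition \ref{pr:im} produces a \emph{unique} connection $\Gamma$ lying on ${\cal P}_f$ whose $\xi_f$-image is the prescribed jet. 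Since the gauge flow moves $\Gamma^\lambda_{\alpha\lambda}$ by $4C_\alpha$ while leaving $\overline{g}_{\alpha\beta,\gamma}$ fixed, sweeping $C_\alpha$ traverses precisely one gauge orbit; thus each fibre of $\xi_f$ is a single gauge orbit, and $\overline{\xi}$ is a (local) bijection, with inverse given by the explicit formula of Proposition \ref{pr:im} read modulo the choice of $C_\alpha$.

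Finally I would confirm consistency by a dimension count: $\dim{\cal P}=\dim E=78$ by Proposition \ref{PdifE}, and the torsion constraints fix the $24$ components of $T^\alpha_{\beta\gamma}$ in terms of the $4$ components of its trace, imposing $20$ independent conditions, so $\dim{\cal P}_f=58$; since ${\cal G}$ is $4$-dimensional, $\dim{\cal P}'_f=54=\dim J^1\pi_\Sigma$. An injective immersion between equidimensional manifolds is a local diffeomorphism, so $\overline{\xi}$ is the asserted local diffeomorphism, and composing with the global diffeomorphism $\mathcal{F\overline L}\colon J^1\pi_\Sigma\to{J^1\pi_\Sigma}^*$ yields the second claim. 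I expect the delicate point to be neither step individually but their interlocking: one must verify that the four parameters $C_\alpha$ governing uniqueness in Proposition \ref{pr:im} are \emph{exactly} the four gauge directions of ${\cal G}$ and nothing larger, so that the fibres of $\xi_f$ coincide with the orbits rather than merely containing them. This is why the explicit gauge-invariance computation of $\overline{g}_{\alpha\beta,\gamma}$ in the first paragraph, matched against the infinitesimal statement $\ker\xi_*={\cal G}$, is essential to the argument.
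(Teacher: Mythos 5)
Your proposal is correct and follows essentially the same route as the paper: injectivity of the induced map on the quotient comes from Lemma \ref{le:im:inj} ($\ker\xi_*={\cal G}$) and surjectivity from the existence--uniqueness statement of Proposition \ref{pr:im}, with the paper realizing the quotient map via a section $\varsigma$ of $\tau_f'$ rather than by factoring $\xi_f$ directly. Your explicit check that $\Lie(X)\overline{g}_{\alpha\beta,\gamma}=0$ for gauge fields, the identification of the fibres of $\xi_f$ with whole gauge orbits via the parameter $C_\alpha$, and the dimension count $\dim{\cal P}'_f=78-20-4=54=\dim J^1\pi_\Sigma$ are all accurate and merely make explicit what the paper leaves implicit.
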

\begin{proof}
Consider a smooth section $\varsigma$ of $\tau_f'$, and let 
$\xi':=\xi\circ\varsigma\colon {\cal P}'_f\rightarrow J^1\pi_\Sigma$. 
From lemma \ref{le:im:inj}, $\ker\,\xi_*\supset{\cal G}$; therefore $\xi'$ does not depend on the section chosen. 
Moreover, $\ker\,\xi_*\subset{\cal G}$ and it is injective. Finally, it is exhaustive because for every point of $J^1\pi_\Sigma$, its preimage contains the connection given by proposition \ref{pr:im}. In conclusion, $\xi'$ is a local diffeomorphism and then ${\cal P}'_f$ is (locally) diffeomorphic to $J^1\pi_\Sigma$.
\end{proof}

Then, a simple calculation in coordinates leads to the following result:

\begin{prop}
\label{PCforms}
$\Omega_H=\xi^*\Omega_{\overline\Lag}=(\mathcal{F\overline L}\circ\xi)^*\Omega_{\overline{h}}$.
\end{prop}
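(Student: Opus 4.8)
The second equality is purely formal and requires no computation. By construction $\Omega_{\overline{h}}=((\mathcal{F\overline L})^{-1})^*\Omega_{\overline\Lag}$, and since $\mathcal{F\overline L}\colon J^1\pi_\Sigma\to{J^1\pi_\Sigma}^*$ is a diffeomorphism we have $(\mathcal{F\overline L})^*\Omega_{\overline{h}}=\Omega_{\overline\Lag}$. Composing with $\xi$ and using functoriality of the pullback, $(\mathcal{F\overline L}\circ\xi)^*\Omega_{\overline{h}}=\xi^*(\mathcal{F\overline L})^*\Omega_{\overline{h}}=\xi^*\Omega_{\overline\Lag}$. Thus the whole statement reduces to the first equality $\Omega_H=\xi^*\Omega_{\overline\Lag}$.

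For this I would work in the non-momenta coordinates on $\mathcal{P}$, where $\Omega_H$ coincides with the Poincar\'e-Cartan expression \eqref{pc:la}, namely $\Omega_H=\d H\wedge\d^4x-\d L^{\beta\gamma,\mu}_\alpha\wedge\d\Gamma^\alpha_{\beta\gamma}\wedge\d^3x_\mu$, with $H$ and $L^{\beta\gamma,\mu}_\alpha$ viewed as functions on $\mathcal{P}$. The plan is to pull back $\Omega_{\overline\Lag}=\d\overline L\wedge\d^4x-\sum_{\alpha\leq\beta}\d\frac{\partial\overline L}{\partial g_{\alpha\beta,\mu}}\wedge\d g_{\alpha\beta}\wedge\d^3x_\mu$ term by term, using $\xi^*x^\mu=x^\mu$, $\xi^*g_{\alpha\beta}=g_{\alpha\beta}$, the pre-metricity substitution $\xi^*g_{\alpha\beta,\mu}=\overline g_{\alpha\beta,\mu}=g_{\alpha\lambda}\Gamma^\lambda_{\mu\beta}+g_{\beta\lambda}\Gamma^\lambda_{\mu\alpha}+\frac23 g_{\alpha\beta}T^\lambda_{\lambda\mu}$, and the fact that $\xi^*$ commutes with $\d$. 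After the substitution both $\xi^*\overline L$ and $\xi^*\frac{\partial\overline L}{\partial g_{\alpha\beta,\mu}}$ are functions of $(g,\Gamma)$ alone, so $\xi^*\Omega_{\overline\Lag}$ is already a form of the same type as $\Omega_H$, and one compares the two pieces separately. Matching the terms of the form $(\,\cdot\,)\wedge\d^4x$ reduces to the scalar identity $\xi^*\overline L=H$, while matching the terms of the form $(\,\cdot\,)\wedge\d^3x_\mu$ reduces to a momentum identity: the $\d\Gamma^\alpha_{\beta\gamma}$-part generated by the chain-rule differential of the substitution $\overline g_{\alpha\beta,\mu}$ must reassemble, together with $\xi^*\frac{\partial\overline L}{\partial g_{\alpha\beta,\mu}}$, into exactly $-\d L^{\beta\gamma,\mu}_\alpha\wedge\d\Gamma^\alpha_{\beta\gamma}\wedge\d^3x_\mu$. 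Both checks use the explicit forms $L^{\beta\gamma,\mu}_\alpha=\varrho(\delta_\alpha^\mu g^{\beta\gamma}-\delta_\alpha^\beta g^{\mu\gamma})$ and $L^{\alpha\beta,\mu\nu}=\tfrac{n(\alpha\beta)}{2}\varrho(g^{\alpha\mu}g^{\beta\nu}+g^{\alpha\nu}g^{\beta\mu}-2g^{\alpha\beta}g^{\mu\nu})$.

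The main obstacle is not conceptual but lies in two places. First, the scalar identity $\xi^*\overline L=H$ is the genuine heart of the equivalence: substituting the pre-metricity values into $\overline L$ does not simply turn the Levi-Civita symbols $\tilde\Gamma$ into $\Gamma$, and it is precisely the divergence-corrected term $-\sum g_{\alpha\beta,\mu}g_{\lambda\sigma,\nu}\,\partial L^{\alpha\beta,\mu\nu}/\partial g_{\lambda\sigma}$ in $\overline L$ that conspires with $L_0$ to yield $H=\varrho g^{\alpha\beta}(\Gamma^{\gamma}_{\beta\sigma}\Gamma^{\sigma}_{\gamma\alpha}-\Gamma^{\gamma}_{\beta\alpha}\Gamma^{\sigma}_{\sigma\gamma})$. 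Second, the reorganization in the $\wedge\d^3x_\mu$ block is a delicate index bookkeeping: one must handle the restricted sums over $\alpha\leq\beta$ and the weights $n(\alpha\beta)$, verify that the spurious $\d g_{\alpha\beta}\wedge\d g_{\rho\sigma}$ contributions produced by differentiating $\overline g$ cancel by skew-symmetry of the wedge, and confirm that the torsion-trace piece $\frac23 g_{\alpha\beta}T^\lambda_{\lambda\mu}$ enters consistently. One could instead compare the Poincar\'e-Cartan potentials $\Theta_H$ and $\xi^*\Theta_{\overline\Lag}$ and show their difference is closed, but since the former carries $\d\Gamma^\alpha_{\beta\gamma}$ and the latter $\d g_{\alpha\beta}$ one would still have to exhibit an explicit exact correction; the direct comparison of the $\Omega$'s is therefore cleaner, and once the pullback substitution is performed the remaining work is lengthy but entirely mechanical, justifying the phrase ``a simple calculation in coordinates''.
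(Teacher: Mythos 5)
Your proposal takes exactly the route the paper intends: the paper's entire proof is the phrase ``a simple calculation in coordinates,'' i.e.\ the direct pullback comparison in non-momenta coordinates that you outline, and your formal reduction of the second equality to the first via $\Omega_{\overline{h}}=((\mathcal{F\overline L})^{-1})^*\Omega_{\overline\Lag}$ is exactly how the paper sets these objects up. The two checks you isolate --- the scalar identity $\xi^*\overline L=H$ (up to an additive constant, since only its differential enters) and the reassembly of the $\wedge\,\d^3x_\mu$ block, including the symmetry argument that kills the $\d g\wedge\d g$ contributions --- are indeed the whole content of that calculation; you leave them unexecuted, but the paper supplies no more detail either, so your outline is faithful to (and more explicit than) the paper's own argument.
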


\noindent{\bf Comment}:
The comparison between the multiplicity of solutions 
of the Einstein-Hilbert and the Metric-Affine models 
can help us to interpret some of the conditions. 
The multiplicity of the semiholonomic solutions of the Einstein-Hilbert model 
appears in the second derivative of the components of the metric 
(in the Hamiltonian formalism using the non-momentum coordinates). 
They are of the form (see  \cite{art:GR-2017}) $F_{\alpha\beta;\mu,\nu}=\frac12g_{\lambda\sigma}(\Gamma_{\nu \alpha }^\lambda\Gamma_{\mu \beta}^\sigma+\Gamma_{\nu \beta}^\lambda\Gamma_{\mu \alpha }^\sigma)+F^{\mathfrak h}_{\alpha\beta;\mu,\nu}$, where
$$
F^{\mathfrak h}_{\alpha\beta;\mu,\nu}=F^{\mathfrak h}_{\beta\alpha;\mu,\nu}=F^{\mathfrak h}_{\alpha\beta;\nu,\mu}
\quad ,\quad
g^{\alpha\beta}\left(F^{\mathfrak h}_{\eta\tau;\alpha,\beta}+F^{\mathfrak h}_{\alpha\beta;\eta,\tau}-F^{\mathfrak h}_{\alpha\eta;\tau,\beta}-F^{\mathfrak h}_{\alpha\tau;\eta,\beta}\right)=0 \ .
$$
The map $\xi$ transforms any section $\psi$ solution of the Einstein-Palatini model into a solution $\xi^*\psi$ of the Einstein-Hilbert model. The functions $C^\alpha_{\beta\gamma,\mu}$ in \eqref{eq:sol:Ha}, corresponding to the gauge variation, get annihilated by the action of $\xi$. Therefore, we can say that the functions $K^\alpha_{\beta\gamma,\mu}$ (corresponding to $\psi$)  and $F^{\mathfrak h}_{\alpha\beta;\mu,\nu}$ (corresponding to $\xi^*\psi$) are related, as they are in one to one correspondence. Their conditions can be related using this equivalence as it is shown in the following table: supposing that $F^{\mathfrak h}_{\alpha\beta;\mu,\nu}$ and $K^\alpha_{\beta\gamma,\mu}$ are related, we have:
\begin{center}
\begin{tabular}{ c c c }
\text{Metric-Affine} &  & \text{Einstein-Hilbert} 
\\ 
 $K^\lambda_{(\eta\tau)\lambda}=0$& $\Leftrightarrow$ & $g^{\alpha\beta}(F^{\mathfrak h}_{\eta\tau;\alpha,\beta}+F^{\mathfrak h}_{\alpha\beta;\eta,\tau}-F^{\mathfrak h}_{\alpha\eta;\tau,\beta}-F^{\mathfrak h}_{\alpha\tau;\eta,\beta})=0$
 \\
$g_{\alpha\lambda}K^\lambda_{[\nu\beta\mu]}+g_{\beta\lambda}K^\lambda_{[\nu\alpha\mu]}+2g_{\alpha\beta}T^\lambda_{\mu\nu}\Gamma^\sigma_{\sigma\lambda}=0$ &$ \Leftrightarrow$ & $F^{\mathfrak{h}}_{\alpha\beta,[\mu\nu]}=0$  
\\
$K^\lambda_{\lambda\gamma,\mu}=0$&& For any $F^{\mathfrak{h}}_{\alpha\beta,\mu\nu}$
\\
$K^\alpha_{[\beta\gamma],\mu}+\frac13\delta^\alpha_{[\beta} K^\nu_{\gamma]\nu,\mu}+\Gamma^\lambda_{\mu[\gamma}\Gamma^\alpha_{\beta]\lambda}-\frac13\delta^\alpha_{[\beta}\Gamma^\lambda_{\mu\gamma]}\Gamma^\nu_{\nu\lambda}$
\\ $+\frac13\delta^\alpha_{[\beta}\Gamma^\lambda_{\mu\nu}\Gamma^\nu_{\gamma]\lambda}=0$&& For any $F^{\mathfrak{h}}_{\alpha\beta,\mu\nu}$
\\
For any $K^\alpha_{\beta\gamma,\mu}$&& $F^{\mathfrak{h}}_{[\alpha\beta],\mu\nu}=0$
\end{tabular}
\end{center}

\subsection{Integrability}

In the (first-order) Einstein-Hilbert model, every point $p\in J^1\pi_{\Sigma}$ 
is in the image of a section solution to the field equations, 
${\rm Im}(\varphi_p)$, since
$J^1\pi_{\Sigma}$ is the final manifold for this model. 
As a consequence of the equivalence between both models,
$\mathcal{P}_f$ must be also the final constraint submanifold 
for the Einstein-Palatini model; that is: 

\begin{prop}  
For every $q\in\mathcal{P}_f$, there exists a section $\psi_H$ solution to the Hamiltonian field equations of the Metric-Affine model such that $q\in{\rm Im}(\psi_H)$.
\label{par:int1}
\end{prop}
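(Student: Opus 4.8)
The plan is to prove this by transporting the existence of solution sections from the regular, first-order Einstein--Hilbert model back to $\mathcal{P}_f$, using the equivalence set up in Theorem \ref{difeoequiv} and Proposition \ref{PCforms}; this circumvents a direct (and far more laborious) analysis of the integrability system $[X_\mu,X_\nu]=0$ and the associated PDE's on the functions $C_{\beta,\nu}$ and $K^\alpha_{\beta\gamma,\nu}$. The structural facts I would rely on are three. First, since $\overline\Lag$ is regular, $\Omega_{\overline\Lag}$ is multisymplectic and $J^1\pi_\Sigma$ is the final manifold, so a solution section passes through every point (as recalled immediately before the statement). Second, $\overline\tau_{\mathcal P}\circ\jmath_f=\overline\pi^1_\Sigma\circ\xi$ and, by Proposition \ref{PCforms}, $\Omega_H=\xi^*\Omega_{\overline\Lag}$. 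Third, by Lemma \ref{le:im:inj} together with Theorem \ref{difeoequiv}, the restriction $\xi|_{\mathcal{P}_f}$ is a submersion onto $J^1\pi_\Sigma$ whose vertical distribution is exactly $\ker\xi_*=\mathcal{G}$ (the gauge orbits), with induced local diffeomorphism $\mathcal{P}'_f\cong J^1\pi_\Sigma$.

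First I would fix $q\in\mathcal{P}_f$, set $x_0=\overline\tau_{\mathcal P}(q)$ and $p=\xi(q)\in J^1\pi_\Sigma$, which lies over $x_0$. Because $J^1\pi_\Sigma$ is final for the Einstein--Hilbert model, there is a (local) section $\varphi\colon M\to J^1\pi_\Sigma$ solving its field equations with $\varphi(x_0)=p$. The next step is to lift $\varphi$ through $\xi$ to a section $\psi_H\colon M\to\mathcal{P}_f$ with $\xi\circ\psi_H=\varphi$ and $\psi_H(x_0)=q$. Such a lift exists locally: passing to the quotient $\tau'_f\colon\mathcal{P}_f\to\mathcal{P}'_f$ and composing $\varphi$ with the inverse of the local diffeomorphism of Theorem \ref{difeoequiv} produces a section $M\to\mathcal{P}'_f$ through $\tau'_f(q)$, which is then lifted along the gauge bundle $\tau'_f$ by fixing the gauge so that it meets $q$ at $x_0$. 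Since $\overline\tau_{\mathcal P}\circ\jmath_f=\overline\pi^1_\Sigma\circ\xi$ and $\overline\pi^1_\Sigma\circ\varphi=\mathrm{id}_M$, the lift $\psi_H$ is automatically a section of $\overline\tau_{\mathcal P}$ valued in $\mathcal{P}_f$.

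It then remains to verify that any such lift solves the Einstein--Palatini Hamiltonian equations $\psi_H^*\inn(X)\Omega_H=0$. Here I would use $\Omega_H=\xi^*\Omega_{\overline\Lag}$ and check the equation on a local frame of $\xi$-projectable and vertical vector fields, which suffices by $C^\infty$-linearity of $\psi_H^*\inn(-)\Omega_H$. For a vertical field $X\in\ker\xi_*=\mathcal{G}$ one has $\inn(X)\Omega_H=\inn(X)\xi^*\Omega_{\overline\Lag}=0$ directly, since $\xi_*X=0$. For a $\xi$-projectable field with $\xi_*X=Y$ one computes $\psi_H^*\inn(X)\Omega_H=\psi_H^*\xi^*\inn(Y)\Omega_{\overline\Lag}=(\xi\circ\psi_H)^*\inn(Y)\Omega_{\overline\Lag}=\varphi^*\inn(Y)\Omega_{\overline\Lag}=0$, the last equality because $\varphi$ solves the Einstein--Hilbert field equations. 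Hence $\psi_H$ is a solution section through $q$, as required.

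The step I expect to be the main obstacle is the lifting: producing a smooth section $\psi_H$ into $\mathcal{P}_f$ that simultaneously projects onto $\varphi$ and satisfies the boundary condition $\psi_H(x_0)=q$. This is precisely where the gauge structure is indispensable, since one must choose the gauge (the functions $C_\beta$ parametrising $\mathcal{G}$) consistently along $\varphi$; the identification $\ker\xi_*=\mathcal{G}$ of Lemma \ref{le:im:inj} and the local triviality of $\tau'_f$ coming from Theorem \ref{difeoequiv} are exactly what guarantee that this can be done. Everything else reduces to naturality of pullbacks under $\xi$ and the regularity of the Einstein--Hilbert model, both already established.
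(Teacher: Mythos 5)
Your proposal is correct and follows essentially the same route as the paper: take the Einstein--Hilbert solution through $\xi(q)$, lift it back to $\mathcal{P}_f$ through $\xi$ with the lift passing through $q$, and verify the solution property by pulling back $\Omega_H=\xi^*\Omega_{\overline{\Lag}}$. The only (cosmetic) differences are that the paper builds the lift directly as $\zeta\circ\varphi_{\xi(q)}$ for a section $\zeta$ of $\xi$ with $\zeta(\xi(q))=q$, rather than passing through the quotient $\mathcal{P}'_f$, and checks the field equation pointwise for an arbitrary vector field instead of on a frame of $\xi$-projectable and $\xi$-vertical fields.
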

\begin{proof}
Consider the solution $\varphi_{\xi(q)}$ in the Einstein-Hilbert Hamiltonian formalism. 
Moreover, consider $\zeta\colon J^1\pi_{\Sigma}\rightarrow \mathcal{P}_f\subset{\cal P}$ a section of $\xi$ such that $\zeta(\xi(q))=q$ which exists because $\xi$ is exhaustive. 
Therefore $q\in 
{\rm Im}(\zeta\circ\varphi_{\xi(q)})$ and, in order to check that $\zeta\circ\phi_{\xi(q)}$ is a solution, consider an arbitrary $Y\in\mathfrak{X}({\cal P})$; then
\beann{}
(\zeta\circ\varphi_{\xi(q)})^*(i(Y)\Omega_H)&=&(\zeta\circ\varphi_{\xi(q)})^*(\inn(Y)\xi^*\Omega_{\Lag_{\mathfrak{V}}})
\\ &=&
(\xi\circ\zeta\circ\varphi_{\xi(q)})^*(\inn(\xi_*Y)\Omega_{\Lag_{\mathfrak{V}}})=\varphi_{\xi(q)}^*(\inn(\xi_*Y)\Omega_{\Lag_{\mathfrak{V}}})=0 \ ;
\eeann{}
where we have used that $(\xi\circ\zeta)(p)=p$ because it is a section, 
and that $\varphi_{\xi(q)}$ is a solution. Finally, 
$$
\overline{\tau}_{\mathcal{P}}\circ\jmath_f\circ \zeta\circ\varphi_{\xi{q}}=
\overline{\pi}^1_\Sigma\circ\xi\circ \zeta\circ\varphi_{\xi(q)}=
\overline{\pi}^1_\Sigma\circ\varphi_{\xi(q)}={\rm Id}_M \ ;
$$
thus $\psi_H=\zeta\circ\varphi_{\xi(q)}$ is a section of 
$\overline{\tau}_{\mathcal{P}}\circ\jmath_f=\overline{\tau}_f$, 
and hence it is a solution.
\end{proof}

The Lagrangian counterpart of this result also holds, 
although it is not straightforward because we are working with a singular field theory.

\begin{prop}  
For every $p\in\mathcal{S}_f$, there exists a holonomic section $\psi_\Lag$ 
solution to the Lagrangian field equations of the Metric-Affine model such that $p\in{\rm Im}(\psi_\Lag)$.
\label{par:int2}
\end{prop}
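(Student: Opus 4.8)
The plan is to reduce the statement to the construction of a single holonomic section, exploiting the equivalence with the Einstein-Hilbert model exactly as in the Hamiltonian counterpart, Proposition~\ref{par:int1}. Combining $\Omega_{\Lag_{\rm EP}}={\mathcal{FL}^o_{\rm EP}}^*\,\Omega_H$ with $\Omega_H=\xi^*\Omega_{\overline\Lag}$ (Proposition~\ref{PCforms}), the Poincar\'e--Cartan form of the Metric-Affine model is a pullback $\Omega_{\Lag_{\rm EP}}=\Xi^*\Omega_{\overline\Lag}$, where $\Xi:=\xi\circ\mathcal{FL}^o_{\rm EP}\colon J^1\pi\to J^1\pi_\Sigma$. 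Hence, for any holonomic section $j^1\phi$ of $\pi$, one has $(j^1\phi)^*\inn(X)\Omega_{\Lag_{\rm EP}}=0$ for every $X\in\vf(J^1\pi)$ as soon as $\Xi\circ j^1\phi$ is a solution of the Einstein-Hilbert field equations: decomposing $X$ pointwise into a $\Xi$-projectable part and a $\Xi$-vertical part, the former pulls the equation down to $J^1\pi_\Sigma$, while the latter annihilates the pulled-back form automatically. Thus it suffices to build, through the prescribed point $p\in\mathcal{S}_f$, a holonomic section $j^1\phi$ whose image under $\Xi$ is an Einstein-Hilbert solution.

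To produce the section, I would set $q=\mathcal{FL}^o_{\rm EP}(p)\in\mathcal{P}_f$ (it lies in $\mathcal{P}_f$ since the torsion constraints of $\mathcal{S}_f$ project onto those defining $\mathcal{P}_f$, cf.\ Proposition~\ref{subham}) and, because $J^1\pi_\Sigma$ is the final manifold of the regular first-order system $(J^1\pi_\Sigma,\Omega_{\overline\Lag})$, choose a holonomic Einstein-Hilbert solution $\varphi=j^1\phi_\Sigma$ through $\xi(q)$. Writing $\xi$ out, the condition $\Xi\circ j^1\phi=\varphi$ amounts to requiring that the metric part of $\phi$ equal $\phi_\Sigma$ and that $\phi$ satisfy the pre-metricity relation together with the torsion constraint $t^\alpha_{\beta\gamma}=0$; these are precisely the hypotheses of Proposition~\ref{pr:im} with $m=4$. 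Fixing gauge functions $C_\alpha\in\Cinfty(U)$ on a neighbourhood of $x=\overline\pi^1(p)$ with $C_\alpha(x)=\Gamma^\lambda_{\alpha\lambda}(p)$, that proposition produces a unique connection field $\phi_\Gamma$ realizing pre-metricity, torsion and the gauge fixing. Then $\phi=(\phi_\Sigma,\phi_\Gamma)$ is a section of $\pi$, $j^1\phi$ is holonomic with $\Xi\circ j^1\phi=\varphi$, and so solves the Metric-Affine equations by the first paragraph. The base point and the $(g,\Gamma)$-components of $j^1\phi(x)$ agree with those of $p$ by construction, and the metric velocities agree automatically, since $\partial_\mu\phi_\Sigma(x)$ equals the pre-metricity expression at $q$, which coincides with $g_{\alpha\beta,\mu}(p)$ because $p$ satisfies $m_{\rho\sigma,\mu}=0$; the connection value $\phi_\Gamma(x)$ matches $\Gamma^\alpha_{\beta\gamma}(p)$ because both are the unique connection of trace $\Gamma^\lambda_{\alpha\lambda}(p)$ attached to the same metric $1$-jet via Proposition~\ref{pr:im}.

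The genuinely delicate point, and the main obstacle, is to match the connection velocities $\partial_\mu\phi_\Gamma(x)$ with the prescribed $\Gamma^\alpha_{\beta\gamma,\mu}(p)$. Differentiating the closed formula of Proposition~\ref{pr:im}, $\partial_\mu\phi_\Gamma(x)$ is an affine function of the gauge derivatives $\partial_\mu C_\alpha(x)$ and of the second metric jet $\partial^2\phi_\Sigma(x)$, the latter not being fixed by the Einstein-Hilbert equations but carrying exactly the residual freedom encoded in the functions $F^{\mathfrak h}_{\alpha\beta;\mu,\nu}$ of Section~\ref{se:rel:he-ma}. On the other hand, since $p\in\mathcal{S}_f$, its connection velocity admits the decomposition $\Gamma^\alpha_{\beta\gamma,\mu}(p)=\Gamma^\lambda_{\mu\gamma}\Gamma^\alpha_{\beta\lambda}+C_{\beta,\mu}\delta^\alpha_\gamma+K^\alpha_{\beta\gamma,\mu}$ of the solution Proposition, with $C_{\beta,\mu}$ and $K^\alpha_{\beta\gamma,\mu}$ constrained by $c^{\mu\nu}=0$, $r^\alpha_{\beta\gamma,\nu}=0$ and $i_{\rho\sigma,\mu\nu}=0$. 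I would then adjust $\partial_\mu C_\beta(x)$ to reproduce the trace piece $C_{\beta,\mu}\delta^\alpha_\gamma$ (the quadratic piece being reproduced identically by the formula) and use the one-to-one correspondence $K^\alpha_{\beta\gamma,\mu}\leftrightarrow F^{\mathfrak h}_{\alpha\beta;\mu,\nu}$ recorded in the table of Section~\ref{se:rel:he-ma} to reproduce the piece $K^\alpha_{\beta\gamma,\mu}$ through the second-jet freedom of the Einstein-Hilbert solution. The hard step is precisely this last matching: one must verify that the constraint set cut out on the $1$-jet of the connection by $\mathcal{S}_f$ (that is, $r=0$ and $i=0$) is no larger than the image of the admissible $F^{\mathfrak h}$ under the differentiated map of Proposition~\ref{pr:im}, which is exactly what the compatibility of the two columns in that table records. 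Once this is secured, the realized $1$-jet of $j^1\phi$ at $x$ equals $p$ and, as $\overline\pi^1\circ j^1\phi=\mathrm{Id}_M$, the section $\psi_\Lag=j^1\phi$ is the desired holonomic solution with $p\in\mathrm{Im}(\psi_\Lag)$.
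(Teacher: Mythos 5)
Your overall strategy is sound and, in substance, close to the paper's, although you take a more direct route: the paper factors through the Hamiltonian result, setting $q=\tau_{\cal P}^{-1}(\pi^1_f(p))\in{\cal P}_f$, taking the Hamiltonian solution $\psi_H$ through $q$ given by Proposition \ref{par:int1}, and defining $\psi_\Lag=j^1(\tau_{\cal P}\circ\psi_H)$; it then verifies that $\psi_\Lag$ solves the Lagrangian equations by splitting $\vf(J^1\pi)$ into ${\cal FL}^o_{\rm EP}$-projectable fields and fields in $\ker({\cal FL}^o_{\rm EP})_*$, and that ${\rm Im}\,\psi_\Lag\subset{\cal S}_f$ because $c^{\mu\nu}$ and $m_{\rho\sigma,\mu}$ follow from the Hamiltonian equations while $r^\alpha_{\beta\gamma,\nu}$ and $i_{\rho\sigma,\mu\nu}$ hold automatically on holonomic solutions. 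Your first two paragraphs reproduce essentially this argument with the pullback taken one step further, through $\Xi=\xi\circ{\cal FL}^o_{\rm EP}$ to the Einstein--Hilbert system, and with the connection built explicitly from Proposition \ref{pr:im}; this part is correct and yields a holonomic solution through a point that agrees with $p$ in the coordinates $(x^\mu,g_{\alpha\beta},\Gamma^\alpha_{\beta\gamma},g_{\alpha\beta,\mu})$.

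The genuine gap is the one you yourself flag: the matching of the connection velocities $\Gamma^\alpha_{\beta\gamma,\mu}(p)$ is never actually carried out. You defer it to ``the compatibility of the two columns'' of the table in Section \ref{se:rel:he-ma}, but in the paper that table is an unproven heuristic comment, not a lemma; moreover it explicitly records that the conditions $K^\lambda_{\lambda\gamma,\mu}=0$ and the relation fixing $K^\alpha_{[\beta\gamma],\mu}$ have no counterpart on the $F^{\mathfrak h}$ side, so the correspondence is not a clean bijection of parameter sets and the surjectivity you need (that every admissible $(C_{\beta,\mu},K^\alpha_{\beta\gamma,\mu})$ at $p$ is realized by some choice of $\partial_\mu C_\alpha(x)$ and some admissible second metric jet of an Einstein--Hilbert solution) would have to be proved, not cited. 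As written, ``once this is secured'' leaves open exactly the step you call the hard one, so the proposal is incomplete. Two ways out: either drop the third paragraph and argue as the paper does --- proving only that $\psi_\Lag$ is a holonomic solution with ${\rm Im}\,\psi_\Lag\subset{\cal S}_f$ passing through the fibre point $\pi^1(p)$, which is the level of detail the published proof actually attains --- or, if you insist on realizing the full $1$-jet $p$, supply an explicit count-and-solve argument for the differentiated map of Proposition \ref{pr:im}; the latter is genuinely more than the paper proves.
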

\begin{proof}
Consider the diffeomorphism $\tau_{\cal P}\colon \mathcal{P}\rightarrow E$ stated in Proposition \ref{PdifE}
(in particular, it relates the Lagrangian coordinates with the non-momenta coordinates). 
Then we have that $\tau_{\cal P}^{-1}(\pi_f^1(p))\in\mathcal{P}_f$. 
Furthermore there exists a solution to the Hamiltonian field equations $\psi_H$ 
such that $\tau_{\cal P}^{-1}(\pi_f^1(p))\in{\rm Im}(\psi_H)$, 
as it is shown in the above Proposition. 
Then, we are going to prove that the holonomic section 
$\psi_\Lag$ solution in the Lagrangian formalism is 
$\psi_\Lag=j^1(\tau_{\cal P}\circ\psi_H)$.

In fact, first observe that,
for the Metric-Affine model, the fibers of the Legendre map 
$\mathcal{FL}_{\rm EP}^o$ are the vertical fibers of 
$\pi^1\colon J^1\pi\to E$
(since ${\cal P}={\rm Im}\,\mathcal{FL}_{\rm EP}^o$ 
is diffeomorphic to $E$), and then, 
as $\psi_\Lag$ is a canonical lifting to $J^1\pi$ of a section in $E$,
we have that $\mathcal{FL}_{\rm EP}^o\circ\psi_\Lag=\psi_H$.
Furthermore, $\psi_\Lag$ is a solution to the Lagrangian field equations.
Indeed, as $\mathcal{FL}^o_{\rm EP}$ is a submersion, 
we can take a local basis of $\vf(J^1\pi)$
made by vector fields $\{ Y_A,Z_a\}$,
where $Y_A$ are $\mathcal{FL}^o_{\rm EP}$-projectable 
and $Z_a\in\ker\,(\mathcal{FL}^o_{\rm EP})_*$;
and then the vector fields 
$X_A=(\mathcal{FL}^o_{\rm EP})_*Y_A$ are a local basis
for $\vf({\cal P})$. Therefore, taking into account that
$\mathcal{FL}_{\rm EP}^o\circ\psi_\Lag=\psi_H$
and that $\psi_H$ is a solution to the Hamiltonian field equations,
\beann
\psi_\Lag^*\inn(Y_A)\Omega_{\Lag_{EP}}&=&
\psi_\Lag^*\inn(Y_A)(\mathcal{FL}_{\rm EP}^{o\ \ *}\Omega_H)=
\psi_\Lag^*\mathcal{FL}_{\rm EP}^{o\ \ *}\inn(X_A)\Omega_H
\\ &=&
(\mathcal{FL}_{\rm EP}^o\circ\psi_\Lag)^*\inn(X_A)\Omega_H=
\psi_H^*\inn(X)\Omega_H=0\ ;
\eeann
and $\psi_\Lag^*\inn(Z_a)\Omega_{\Lag_{EP}}=0$ trivially.
This allows us to conclude that $\psi_\Lag^*\inn(Y)\Omega_{\Lag_{EP}}=0$,
for every $Y\in\vf(J^1\pi)$,
and hence $\psi_\Lag$ is 
is a solution to the Lagrangian field equations.

Finally, ${\rm Im}\,\psi_\Lag\subset\mathcal{S}_f$. 
Indeed, equations \eqref{eq:ha1} and $\eqref{eq:ha2}$ for 
$\psi_H$ imply that all the points in ${\rm Im}\,\psi_\Lag$
verify the constraints $c^{\mu\nu}$ and $m_{\rho\sigma,\mu}$.
The constraints $r^\alpha_{\beta\gamma,\nu}$ and $i_{\rho\sigma,\mu\nu}$ 
are also satisfied because they arise from the tangency condition on the 
semiholonomic constraints (see Section \ref{semiholcons}) 
and the integrability condition respectively; 
and then they are satisfied for holonomic sections which are solutions to the Lagrangian field equations.

The following diagram summarizes the situation
(see also the diagram \eqref{diagLH}).
$$
\xymatrix{
J^1\pi\supset{\cal S}_f  
\ar[rrrr]^<(0.45){\mathcal{FL}^o_{\rm EP}} \ar[drr]^<(0.45){\pi^1}
 & \ & \ & \ & {\cal P}_f\subset{\cal P} \ar[dll]_<(0.45){\tau_{\cal P}} \\
\ & \ & E & \ & \ \\
 & \ & M \ar[uull]^{\psi_{\cal L}=j^1\phi} \ar[uurr]_{\psi_H} \ar[u]^<(0.45){\phi}
& \ &}
$$
\end{proof}

\section{Conclusions and outlook}

We have presented a multisymplectic covariant description
of the Lagrangian and Hamiltonian formalisms of the Einstein-Palatini model of General Relativity
(without energy-matter sources).
It is described by a first-order ``metric-affine'' Lagrangian which is (highly) degenerate and hence it originates a theory with constraints and gauge content.
 
The Lagrangian field equations are expressed in terms of holonomic multivector fields 
which are associated with distributions whose integral sections are the solutions to the theory.
Then, we use a constraint algorithm 
to determine a submanifold of the jet bundle $J^1\pi$ where, first,
there exist semi-holonomic multivector fields which are solution to these equations and are tangent to this submanifold, and second, these multivector field are integrable (i.e., holonomic). 
The constraints arising from the algorithm
determine where the image of the sections may lay.

In coordinates, the Lagrangian field equations split into two kinds: 
the {\sl metric} and the {\sl connection equations}
(equations \eqref{eq:fun3}, \eqref{eq:fun4},
\eqref{eq:fun5}).
In the same way, the Lagrangian constraints can be classified into three different types. First there are the {\sl torsion constraints}, which impose strict limitations on the torsion of the connection. Then we have the constraints which  appear as a consequence of demanding the semi-holonomy condition for the multivector field solutions (Theorem \ref{finalconstraints}). In particular, the {\sl Euler-Lagrange equations} themselves
(which appear as constraints of the theory as a consequence of the fact that the Poincar\'e-Cartan form
is $\pi^1$-projectable and the equations are first-order PDE's), and specially the so-called {\sl pre-metricity constraints}, 
which are closely related to the metricity condition for the Levi-Civita connection. Only the tangency condition on the torsion constraint lead also to new constraints. Finally, a family of additional {\sl integrability constraints} 
appear as a consequence of demanding the integrability of the multivector fields 
which are solutions. 
Only the initial torsion constraints are projectable under the Legendre map ${\cal F}\Lag_{\rm EP}$ (because the other ones appear as a consequence of demanding the (semi)holonomy of the solutions), and thus they are the only ones that also appear in the Hamiltonian formalism
(see \cite{GPR-91} for an analysis of this subject for higher-order dynamical theories). We have obtained explicitly all semiholonomic multivector fields solutions to the field equations
(Proposition \ref{ELmvf}).

It is interesting to point out that, although there are regular Lagrangians that are 
equivalent to the Hilbert-Einstein and the Einstein-Palatini Lagrangians
(after a gauge reduction procedure),
and which are then defined in a shorter fiber bundle,
these regular Lagrangians have not a clear physical and/or mathematical interpretation,
as it is the case of those of Hilbert-Einstein and Einstein-Palatini 
where the Lagrangian function is essentially the scalar curvature.

We have done also a brief discussion about symmetries and conserved quantities, 
giving the expression of the natural Lagrangian symmetries, their conserved quantities and the corresponding flows.

The (covariant) multimomentum Hamiltonian formalism for the
Einstein-Palatini model has been also developed.
The final constraint submanifold is also obtained in this formalism, 
and it is defined by the ${\cal F}\Lag_{\rm EP}$-projection of the torsion constraints
(Propositions \ref{subham} and  \ref{subham2}).
The explicit expression of the multivector field solutions is obtained
(Proposition \ref{solham}) and their integrability
is briefly analysed.
The local description is given using two different kinds of coordinates: 
the {\sl non-momenta coordinates} which, as a consequence of the Legendre map, 
are the same as in the Lagrangian case, and the 
{\sl pure-connection coordinates}, where the momenta associated to the connection replace the metric, resulting in metric-free coordinates.
An intrinsic interpretation of these last coordinates
is also given.

Analyzing the gauge content of the model, 
we have obtained the local expression of the natural gauge vector fields, both in the Lagrangian and the Hamiltonian formalisms (Propositions \ref{pr:gv:la} and \ref{gauge2}). 
We have recovered the gauge symmetries discussed in \cite{pons}, showing that there are no more.
As it is known \cite{art:Capriotti2,pons}, it is possible to recover the Einstein-Hilbert model by a gauge fixing in the Einstein-Palatini model, which consists in imposing the trace of the torsion to vanish. This particular gauge fixing transforms the torsion and the pre-metricity constraints, which are a consequence of the constraint algorithm, to  the torsionless and the metricity conditions respectively (Proposition \ref{pr:pm}).
 This equivalence has been studied in detail if a gauge quotient is used instead of a particular gauge fixing (Theorem \ref{difeoequiv} and Propositions  \ref{pr:im} and \ref{PCforms}). We have used this analysis to establish the geometric relation between the Einstein-Palatini and the Einstein-Hilbert models, including  the relation between the holonomic solutions in both formalisms.

Finally, using this equivalence, we have been able to prove that 
the constraint submanifolds ${\cal S}_f$ and ${\cal P}_f$ obtained from
the Lagrangian and Hamiltonian constraint algorithms, respectively
(where there exist multivector fields tangent to them,
satisfying the geometric Lagrangian and Hamiltonian field equations on them)
are the (maximal) {\sl final constraint submanifolds} 
where these multivector fields are integrable; i.e., there are sections solutions to the field equations passing through every point on them
(Propositions \ref{par:int1} and \ref{par:int2}).

In a next paper we will study the Einstein-Palatini model with energy-matter sources,
analyzing how the type of source influences the constraints, the gauge freedom and the symmetries of the theory.

\appendix

\section{Appendix: Symmetries and gauge symmetries of a Lagrangian system}
\label{appends}

In this appendix we state geometrically the basic definitions and results about 
symmetries of Lagrangian field theories (see, for instance, \cite{EMR-96,GPR-2017} for details).
 
Thus, consider a singular Lagrangian system $(J^1\pi,\Omega_\Lag)$, ($\Omega_\Lag\in\df^4({J^1\pi})$),
with final constraint submanifold ${\rm j}_f\colon{\cal S}_f\hookrightarrow J^1\pi$,
and the natural submersions
$\pi^1_f=\pi^1\circ {\rm j}_f\colon{\cal S}_f\to E$, 
$\overline{\pi}^1_f=\overline{\pi}^1\circ {\rm j}_f\colon{\cal S}_f\to M$.
Let  $\Omega_f={\rm j}_f^*\Omega_\Lag$ be the 
{\sl restricted Poincar{\'e}-Cartan form}.

The most relevant kinds of symmetries are the following:

 \begin{definition}
A {\rm Cartan} or {\rm Noether symmetry} of $(J^1\pi,\Omega_\Lag)$
is a diffeomorphism $\Phi\colon J^1\pi\to J^1\pi$
such that $\Phi({\cal S}_f)={\cal S}_f$ and $\Phi^*\Omega_\Lag=\Omega_\Lag$ (on ${\cal S}_f$).
In addition, if $\Phi^*\Theta_\Lag=\Theta_\Lag$ (on ${\cal S}_f$), then 
$\Phi$ is an {\rm exact Cartan symmetry}.
Furthermore, if $\Phi=j^1\varphi$ \ for a diffeormorphism $\varphi\colon E\to E$, 
the Cartan symmetry is said to be {\rm natural}.

An {\rm infinitesimal Cartan} or {\rm Noether symmetry} of $(J^1\pi,\Omega_\Lag)$
is a vector field $X\in\vf (J^1\pi)$ tangent to ${\cal S}_f$ satisfying that $\Lie(X)\Omega_\Lag=0$ (on ${\cal S}_f$).
In addition, if $\Lie(X)\Theta_\Lag=0$ (on ${\cal S}_f$), then $Y$ is an
{\rm infinitesimal exact Cartan symmetry}.
Furthermore, if $X=j^1Y$ for some 
$Y\in\vf (E)$, then the infinitesimal Cartan symmetry is said to be {\rm natural}.
 \end{definition}

Symmetries transform solutions to the field equations into solutions.
In particular, for natural symmetries we have:

\begin{prop}
If $\Phi=j^1\phi\colon J^1\pi\to J^1\pi$,
for a diffeormorphism $\varphi\colon E\to E$,
is a natural Cartan symmetry, 
and ${\bf X}\in\ker^4\Omega_\Lag$ is holonomic,
then $\Phi$ transforms the holonomic sections of ${\bf X}$
into holonomic sections, and hence
$\Phi_*{\bf X}\in\ker^4\Omega_\Lag$ is also holonomic.

As a consequence, if $X=j^1Y\in\vf(J^1\pi)$ is a natural infinitesimal Cartan symmetry,
and $\Phi_t$ is a local flow of $X$, then
$\Phi_t$ transforms the holonomic sections of ${\bf X}$
into holonomic sections.
\label{conserveqs}
\end{prop}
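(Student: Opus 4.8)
The plan is to reduce everything to the naturality of the first prolongation, treating first the finite statement for a single natural Cartan symmetry $\Phi=j^1\varphi$ and then deducing the infinitesimal version by applying it to the flow $\Phi_t$ of $X=j^1Y$. Throughout I would use that, since $\varphi\colon E\to E$ is a bundle automorphism, it projects to a diffeomorphism $\varphi_M\colon M\to M$ with $\pi\circ\varphi=\varphi_M\circ\pi$, and that its prolongation acts on $1$-jets by $(j^1\varphi)(j^1_x\phi)=j^1_{\varphi_M(x)}(\varphi\circ\phi\circ\varphi_M^{-1})$.

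First I would extract the core identity
\[
(j^1\varphi)\circ(j^1\phi)\circ\varphi_M^{-1}=j^1\big(\varphi\circ\phi\circ\varphi_M^{-1}\big) ,
\]
valid for every section $\phi\in\Gamma(\pi)$. Since $\mathbf{X}$ is holonomic it is integrable and its integral sections are of the form $\psi=j^1\phi$; the identity then shows that $\Phi\circ\psi\circ\varphi_M^{-1}$ is again the canonical lift of a section of $\pi$ (namely of $\phi'=\varphi\circ\phi\circ\varphi_M^{-1}$), hence holonomic. This proves the first assertion, that $\Phi$ carries holonomic sections to holonomic sections.

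Next I would show that $\Phi_*\mathbf{X}\in\ker^4\Omega_{\Lag}$ and that it is holonomic. For the kernel condition I would use the natural behaviour of the contraction under a diffeomorphism, $\Phi^*\big(\inn(\Phi_*\mathbf{X})\Omega_{\Lag}\big)=\inn(\mathbf{X})\big(\Phi^*\Omega_{\Lag}\big)$. Since $\Phi$ is a Cartan symmetry one has $\Phi^*\Omega_{\Lag}=\Omega_{\Lag}$ on ${\cal S}_f$, so the right-hand side equals $\inn(\mathbf{X})\Omega_{\Lag}=0$ there; as $\Phi^*$ is injective and $\Phi({\cal S}_f)={\cal S}_f$, this gives $\inn(\Phi_*\mathbf{X})\Omega_{\Lag}=0$ on ${\cal S}_f$. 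For holonomy I would note that pushforward by a diffeomorphism preserves integrability (the distribution of $\Phi_*\mathbf{X}$ is $\Phi_*$ applied to that of $\mathbf{X}$) and that the integral manifolds of $\Phi_*\mathbf{X}$ are the $\Phi$-images of those of $\mathbf{X}$, which by the first part are images of canonical lifts; hence $\Phi_*\mathbf{X}$ is holonomic.

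Finally, for the infinitesimal statement I would use that the local flow of $X=j^1Y$ is the prolongation $\Phi_t=j^1(\phi_t)$ of the local flow $\phi_t$ of the (projectable) vector field $Y$, because the first prolongation commutes with flows of bundle automorphisms. Since $\Lie(X)\Omega_{\Lag}=0$ on ${\cal S}_f$ forces $\Phi_t^*\Omega_{\Lag}=\Omega_{\Lag}$ there, and tangency of $X$ to ${\cal S}_f$ gives $\Phi_t({\cal S}_f)={\cal S}_f$, each $\Phi_t$ is a natural Cartan symmetry, so the first part applies verbatim. I expect the only genuinely delicate point to be the core naturality identity for the prolongation together with the bookkeeping of the base map $\varphi_M$ (correctly composing $\Phi\circ\psi$ with $\varphi_M^{-1}$ to recover an honest section of $\overline{\pi}^1$); once this is settled, the kernel computation and the passage to the flow are routine.
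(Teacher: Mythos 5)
Your proposal is correct and rests on the same two ingredients as the paper's own proof: the naturality of the first prolongation (so that $\Phi=j^1\varphi$ carries canonical lifts to canonical lifts) and the invariance $\Phi^*\Omega_{\Lag}=\Omega_{\Lag}$ on ${\cal S}_f$ combined with the naturality of the contraction under pullback (so that solutions remain solutions); the paper runs this computation at the level of sections, pulling back $\inn(X')\Omega_{\Lag}$ along the transformed section, whereas you additionally verify the multivector-field statement $\inn(\Phi_*{\bf X})\Omega_{\Lag}=0$ directly. Your bookkeeping of the induced base diffeomorphism $\varphi_M$ (composing with $\varphi_M^{-1}$ to recover an honest section of $\overline{\pi}^1$) is in fact more careful than the paper's, which writes $j^1(\phi\circ\varphi)$ without this correction; otherwise the two arguments are essentially the same.
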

\proof
Let $j^1\varphi\colon M\to J^1\pi$ be an holonomic section of ${\bf X}$,
for $\varphi\colon M\to E$; then it is a solution to the field equations
and  then $(j^1\varphi)^*\inn(X')\Omega_\Lag=0$, for every $X'\in\vf(J^1\pi)$.
Therefore, on the points of ${\cal S}_f$,
\beann
(j^1(\phi\circ\varphi))^*\inn(X')\Omega_\Lag &=&
((j^1\varphi)^*(j^1\phi)^*\inn(X')\Omega_\Lag)=
(j^1\varphi)^*\inn((j^1\phi)_*^{-1}X')(j^1\phi)^*\Omega_\Lag
\\ &=&
(j^1\varphi)^*\inn((j^1\phi)_*^{-1}X')\Omega_\Lag=0 \ ,
\eeann
since  $(j^1\varphi)$ is a solution to the field equations.
Then $j^1(\phi_t\circ\varphi)$ is also a solution to the field equation.
The last statement is immediate since, by definition,
the local flows $\Phi_t\colon J^1\pi\to J^1\pi$ of $j^1Y$
are canonical liftings of the local flows $\phi_t\colon E\to E$ of $Y$.
\qed

In particular, we are specially interested in symmetries of the Lagrangian:

\begin{definition}
A {\rm Lagrangian symmetry} of $(J^1\pi,\Omega_\Lag)$ is a diffeomorphism
$j^1\phi\colon J^1\pi\to J^1\pi$, for some $\phi\in{\rm Diff}(E)$,
 such that $(j^1\phi)({\cal S}_f)=S_f$ and
$(j^1\phi)(\Lag)=\Lag$ (on ${\cal S}_f$).

An {\rm infinitesimal Lagrangian symmetry} of
$(J^1\pi,\Omega_\Lag)$ is a
vector field $j^1Y\in\vf(J^1\pi)$, for some $Y\in\vf(E)$,
such that $j^1Y$ is tangent to ${\cal S}_f$ and
$\Lie (j^1Y)(\Lag)=0$ (on ${\cal S}_f$).
\end{definition}

\noindent {\bf Comment}:
It is well known that canonical liftings of diffeomorphisms and vector
fields preserve the canonical structures of $J^1\pi$.
Therefore, if $j^1\phi\colon J^1\pi\to J^1\pi$ is a Lagrangian symmetry, 
as the Lagrangian density $\Lag$ is invariant, then
$(j^1\phi)^*\Theta_{\Lag} =\Theta_{\Lag}$, and hence
it is an exact Cartan symmetry.
As a consequence, if \ $j^1Y\in\vf(J^1\pi)$ is an infinitesimal Lagrangian symmetry, then
$\Lie (j^1Y)\Theta_{\Lag} =0$, and hence it is
an infinitesimal exact Cartan symmetry.

Symmetries are associated to the existence of
conserved quantities or conservation laws:

\begin{definition}
A {\rm conserved quantity}
of  the Lagrangian system $(J^1\pi,\Omega_\Lag)$
is a form $\xi\in\df^{m-1}(J^1\pi)$ such that
$\Lie({\bf X})\xi=0$ (on ${\cal S}_f$), for every
 ${\bf X}\in\ker^m_{\overline{\pi}^1}\Omega_\Lag$.
 \end{definition}
 
If $\xi\in\df^{m-1}(J^1\pi)$ is a
conserved quantity and ${\bf X}\in\ker^m\Omega_\Lag$ is integrable,
then $\xi$ is closed on the integral submanifolds of ${\bf X}$; that is,
if ${\rm j}_S\colon S\hookrightarrow J^1\pi$ is an integral submanifold, 
then $\d {\rm j}_S^*\xi=0$.
Therefore, for every integral section $\psi\colon M\to J^1\pi$
of ${\bf X}$, in a bounded domain $W\subset M$, Stokes theorem allows to write
$$
\int_{\partial W}{\psi^*\xi}=\int_W\d{\psi^*\xi}=0 \ ;
$$
and the form $\psi^*\xi$ is called the {\sl current} associated with the conserved quantity $\xi$.

Furthermore, {\sl Noether's theorem} in this context states that
if $X\in\vf (J^1\pi)$ is an infinitesimal Cartan symmetry, with $\inn(X)\Omega_\Lag=\d\xi_X$ (on $U\subset J^3\pi$), 
then $\xi_X$ is a conserved quantity.
 As a particular case, if $X$ is an exact infinitesimal Cartan symmetry then
 $\xi_X=\inn(X)\Theta_\Lag$.
For every integral submanifold $\psi$ of ${\bf X}$,
the form $\psi^*\xi_X$ is then called a {\rm Noether current}.

The standard use of the term {\sl gauge} in Physics is for describing 
certain kinds of symmetries which arise as a consequence of 
the non-regularity  of the system (i.e. the Lagrangian function)
and lead to the existence of states (i.e., sections solution to the field equations) that are physically equivalent.
This characteristic is known as {\sl gauge freedom}.
Next we introduce and discuss the geometric concept of these {\sl gauge symmetries}
for Lagrangian field theories, inspired by the geometric treatment given in
\cite{BK-86,GN-79} about {\sl gauge freedom} and {\sl gauge vector fields}
for non-regular dynamical systems.

When a Lagrangian system has gauge symmetries, 
a relevant problem consists in removing the unphysical redundant information 
introduced by the existence of gauge equivalent states.
This is achieved implementing the well-known procedures of reduction.
This procedure rules as follows:
their local generators, which are called `gauge vector fields',
generate an involutive distribution in $\Tan{\cal S}_f$ and hence
we can quotient the manifold ${\cal S}_f$ by this distribution
in order to obtain a quotient set which is made of the true physical degrees of freedom of the theory
and is assumed to be a differentiable manifold $\widetilde{\cal S}_f$.
Furthermore, $\widetilde{\cal S}_f$ is a fiber bundle over $M$, 
with projection $\widetilde\pi_{{\cal S}_f}\colon\widetilde{\cal S}_f\to M$.
and the {\sl real} physical states of the field are the sections of this projection.
This is known as the {\sl gauge reduction procedure} for removing the (unphysical)
gauge degrees of freedom of the theory.
An alternative way to remove the gauge freedom consists in taking 
a (local) section of the projection $\tilde\pi_f$,
and this is called a {\sl gauge fixing}.

Gauge vector fields must have the following properties:

\noindent - Denote $\underline{\vf({\cal S}_f)}:=\{ X\in \vf(J^1\pi)\ \vert\ \mbox{\rm $X$ is tangent to ${\cal S}_f$}\}$.
As the flux of gauge vector fields connect equivalent physical states,
they must be elements of $\underline{\vf({\cal S}_f)}$.

\noindent - As we have said, the existence of gauge symmetries and of
gauge freedom is related to the non-regularity of the Lagrangian $\Lag$ (and conversely).
As a consequence of this, in general the restricted Poincar\'e-Cartan form $\Omega_f$
is degenerated and then it is a pre-multisymplectic form.
Therefore, it is reasonable to think that the gauge reduction procedure, 
which removes  the (unphysical) gauge degrees of freedom, 
must remove also the degeneracy of the form.
Hence, gauge vector fields should be the elements of the set
$$
\underline{\ker\,\Omega_f}:=\{ X\in\underline{\vf({\cal S}_f)}\,\vert\, {\rm j}_f^*\inn(X)\Omega_\Lag=0\} \ ,
$$
or, what is equivalent, if $X^{{\cal S}_f}\in\vf({\cal S}_f)$
is such that ${\rm j}_{f*}X^{{\cal S}_f}=X\vert_{{\cal S}_f}$,
for every $X\in\underline{\vf({\cal S}_f)}$, then
$$
0={\rm j}_f^*\inn(X)\Omega_\Lag=\inn(X^{{\cal S}_f}){\rm j}_f^*\Omega_\Lag=
 \inn(X^{{\cal S}_f})\Omega_f \ ,
$$
and then $X^{{\cal S}_f}\in\ker\,\Omega_f$.
The flux of these vector fields transform solutions to the field equations into solutions, but without preserving the holonomy necessarily.

\noindent - Gauge vector fields must be $\overline{\pi}^1$-vertical
(we denote by $\vf^{V(\overline{\pi}^1)}(J^1\pi)$ the set of $\overline{\pi}^1$-vertical
vector fields).
In this way, we assure that the base manifold $M$ 
does not contain gauge equivalent points and then all the 
gauge degrees of freedom are in the fibres of $J^1\pi$.
Therefore, after doing the reduction procedure
or a gauge fixing in order to remove the gauge multiplicity,
the base manifold $M$ remains unchanged.

\noindent - Furthermore, it is usual to demand that physical symmetries are
{\sl natural}. This means that they are canonical liftings to the 
bundle of phase states of symmetries in the configuration space $E$; 
that is, canonical lifting to $J^1\pi$ of vector fields in $E$.
This condition assures that gauge symmetries transform holonomic solutions to the field equations into holonomic solutions (see Prop. \ref{conserveqs}).

As a consequence of all of this, we define:

\begin{definition} 
$X\in\vf(J^1\pi) $ is a {\rm geometric gauge vector field}
(or a {\rm gauge variation}) of $(J^1\pi,\Omega_\Lag)$ if
\,$X\in\underline{\ker\,\Omega_f}$.
The elements
\,$X\in\underline{\ker\,\Omega_f}\cap\vf^{V(\overline{\pi}^1)}(J^1\pi))$
are the {\rm vertical gauge vector fields}
(or {\rm vertical gauge variations}).
Finally, if \,$X\in\underline{\ker\,\Omega_f}\cap\vf^{V(\overline{\pi}^1)}(J^1\pi))$
and is a natural vector field, it is said to be a {\rm natural gauge vector field}
(or a {\rm natural gauge symmetry}).
\end{definition}

In this paper we are interested only in natural gauge vector fields.

All these definitions and properties can be stated 
in an analogous way for the Hamiltonian system
$({\cal P},\Omega_H)$ associated with $(J^1\pi,\Omega_\Lag)$.


\section*{Acknowledgments}

 We acknowledge the financial support of the
{\sl Ministerio de Ciencia e Innovaci\'on} (Spain), projects
MTM2014--54855--P and
MTM2015-69124--REDT.
the Ministerio de Ciencia, Innovaci\'on y Universidades project
PGC2018-098265-B-C33,
and of {\sl Generalitat de Catalunya}, project 2017--SGR--932.


\end{document}